\definecolor{light-gray}{gray}{0.96}
\definecolor{shadecolor}{named}{light-gray}
\colorlet{sectitlecolor}{blue}
\colorlet{sectboxcolor}{white}
\colorlet{secnumcolor}{blue}
\renewcommand\@seccntformat[1]{%
  \colorbox{sectboxcolor}{\textcolor{secnumcolor}{\csname the#1\endcsname}}%
  \quad
}
\patchcmd{\thebibliography}{\section*{\refname}}{}{}{}
\newcommand{\rme}{{\mathrm{e}}}
\newcommand{\rmB}{{\mathrm{B}}}
\newcommand{\rmC}{{\mathrm{C}}}
\newcommand{\rmH}{{\mathrm{H}}}
\DeclareSymbolFont{sfletters}{OML}{cmbrm}{m}{it}  
\DeclareMathSymbol{\sfGamma}{\mathord}{sfletters}{"00}
\DeclareMathSymbol{\sfDelta}{\mathord}{sfletters}{"01}
\DeclareMathSymbol{\sfTheta}{\mathord}{sfletters}{"02}
\DeclareMathSymbol{\sfLambda}{\mathord}{sfletters}{"03}
\DeclareMathSymbol{\sfXi}{\mathord}{sfletters}{"04}
\DeclareMathSymbol{\sfPi}{\mathord}{sfletters}{"05}
\DeclareMathSymbol{\sfSigma}{\mathord}{sfletters}{"06}
\DeclareMathSymbol{\sfUpsilon}{\mathord}{sfletters}{"07}
\DeclareMathSymbol{\sfPhi}{\mathord}{sfletters}{"08}
\DeclareMathSymbol{\sfPsi}{\mathord}{sfletters}{"09}
\DeclareMathSymbol{\sfOmega}{\mathord}{sfletters}{"0A}
\DeclareMathSymbol{\sfalpha}{\mathord}{sfletters}{"0B}
\DeclareMathSymbol{\sfbeta}{\mathord}{sfletters}{"0C}
\DeclareMathSymbol{\sfgamma}{\mathord}{sfletters}{"0D}
\DeclareMathSymbol{\sfdelta}{\mathord}{sfletters}{"0E}
\DeclareMathSymbol{\sfepsilon}{\mathord}{sfletters}{"0F}
\DeclareMathSymbol{\sfzeta}{\mathord}{sfletters}{"10}
\DeclareMathSymbol{\sfeta}{\mathord}{sfletters}{"11}
\DeclareMathSymbol{\sftheta}{\mathord}{sfletters}{"12}
\DeclareMathSymbol{\sfiota}{\mathord}{sfletters}{"13}
\DeclareMathSymbol{\sfkappa}{\mathord}{sfletters}{"14}
\DeclareMathSymbol{\sflambda}{\mathord}{sfletters}{"15}
\DeclareMathSymbol{\sfmu}{\mathord}{sfletters}{"16}
\DeclareMathSymbol{\sfnu}{\mathord}{sfletters}{"17}
\DeclareMathSymbol{\sfxi}{\mathord}{sfletters}{"18}
\DeclareMathSymbol{\sfpi}{\mathord}{sfletters}{"19}
\DeclareMathSymbol{\sfrho}{\mathord}{sfletters}{"1A}
\DeclareMathSymbol{\sfsigma}{\mathord}{sfletters}{"1B}
\DeclareMathSymbol{\sftau}{\mathord}{sfletters}{"1C}
\DeclareMathSymbol{\sfupsilon}{\mathord}{sfletters}{"1D}
\DeclareMathSymbol{\sfphi}{\mathord}{sfletters}{"1E}
\DeclareMathSymbol{\sfchi}{\mathord}{sfletters}{"1F}
\DeclareMathSymbol{\sfpsi}{\mathord}{sfletters}{"20}
\DeclareMathSymbol{\sfomega}{\mathord}{sfletters}{"21}
\DeclareMathSymbol{\sfvarepsilon}{\mathord}{sfletters}{"22}
\DeclareMathSymbol{\sfvartheta}{\mathord}{sfletters}{"23}
\DeclareMathSymbol{\sfvarpi}{\mathord}{sfletters}{"24}
\DeclareMathSymbol{\sfvarrho}{\mathord}{sfletters}{"25}
\DeclareMathSymbol{\sfvarsigma}{\mathord}{sfletters}{"26}
\DeclareMathSymbol{\sfvarphi}{\mathord}{sfletters}{"27}
\DeclareMathSymbol{\spartial}{\mathord}{sfletters}{"40}
\DeclareMathSymbol{\sfA}{\mathord}{sfletters}{"41}
\DeclareMathSymbol{\sfB}{\mathord}{sfletters}{"42}
\DeclareMathSymbol{\sfC}{\mathord}{sfletters}{"43}
\DeclareMathSymbol{\sfD}{\mathord}{sfletters}{"44}
\DeclareMathSymbol{\sfE}{\mathord}{sfletters}{"45}
\DeclareMathSymbol{\sfF}{\mathord}{sfletters}{"46}
\DeclareMathSymbol{\sfG}{\mathord}{sfletters}{"47}
\DeclareMathSymbol{\sfH}{\mathord}{sfletters}{"48}
\DeclareMathSymbol{\sfI}{\mathord}{sfletters}{"49}
\DeclareMathSymbol{\sfJ}{\mathord}{sfletters}{"4A}
\DeclareMathSymbol{\sfK}{\mathord}{sfletters}{"4B}
\DeclareMathSymbol{\sfL}{\mathord}{sfletters}{"4C}
\DeclareMathSymbol{\sfM}{\mathord}{sfletters}{"4D}
\DeclareMathSymbol{\sfN}{\mathord}{sfletters}{"4E}
\DeclareMathSymbol{\sfO}{\mathord}{sfletters}{"4F}
\DeclareMathSymbol{\sfP}{\mathord}{sfletters}{"50}
\DeclareMathSymbol{\sfQ}{\mathord}{sfletters}{"51}
\DeclareMathSymbol{\sfR}{\mathord}{sfletters}{"52}
\DeclareMathSymbol{\sfS}{\mathord}{sfletters}{"53}
\DeclareMathSymbol{\sfT}{\mathord}{sfletters}{"54}
\DeclareMathSymbol{\sfU}{\mathord}{sfletters}{"55}
\DeclareMathSymbol{\sfV}{\mathord}{sfletters}{"56}
\DeclareMathSymbol{\sfW}{\mathord}{sfletters}{"57}
\DeclareMathSymbol{\sfX}{\mathord}{sfletters}{"58}
\DeclareMathSymbol{\sfY}{\mathord}{sfletters}{"59}
\DeclareMathSymbol{\sfZ}{\mathord}{sfletters}{"5A}
\DeclareMathSymbol{\sfa}{\mathord}{sfletters}{"61}
\DeclareMathSymbol{\sfb}{\mathord}{sfletters}{"62}
\DeclareMathSymbol{\sfc}{\mathord}{sfletters}{"63}
\DeclareMathSymbol{\sfd}{\mathord}{sfletters}{"64}
\DeclareMathSymbol{\sfe}{\mathord}{sfletters}{"65}
\DeclareMathSymbol{\sff}{\mathord}{sfletters}{"66}
\DeclareMathSymbol{\sfg}{\mathord}{sfletters}{"67}
\DeclareMathSymbol{\sfh}{\mathord}{sfletters}{"68}
\DeclareMathSymbol{\sfi}{\mathord}{sfletters}{"69}
\DeclareMathSymbol{\sfj}{\mathord}{sfletters}{"6A}
\DeclareMathSymbol{\sfk}{\mathord}{sfletters}{"6B}
\DeclareMathSymbol{\sfl}{\mathord}{sfletters}{"6C}
\DeclareMathSymbol{\sfm}{\mathord}{sfletters}{"6D}
\DeclareMathSymbol{\sfn}{\mathord}{sfletters}{"6E}
\DeclareMathSymbol{\sfo}{\mathord}{sfletters}{"6F}
\DeclareMathSymbol{\sfp}{\mathord}{sfletters}{"70}
\DeclareMathSymbol{\sfq}{\mathord}{sfletters}{"71}
\DeclareMathSymbol{\sfr}{\mathord}{sfletters}{"72}
\DeclareMathSymbol{\sfs}{\mathord}{sfletters}{"73}
\DeclareMathSymbol{\sft}{\mathord}{sfletters}{"74}
\DeclareMathSymbol{\sfu}{\mathord}{sfletters}{"75}
\DeclareMathSymbol{\sfv}{\mathord}{sfletters}{"76}
\DeclareMathSymbol{\sfw}{\mathord}{sfletters}{"77}
\DeclareMathSymbol{\sfx}{\mathord}{sfletters}{"78}
\DeclareMathSymbol{\sfy}{\mathord}{sfletters}{"79}
\DeclareMathSymbol{\sfz}{\mathord}{sfletters}{"7A}
\newcommand{\declarebsfgreek}[2]{%
  \protected\csdef{bsf#1}{\mathord{\text{\bsfgreekfont#2}}}%
}
\newcommand{\bsfgreekfont}{\usefont{LGR}{cmss}{bx}{it}}
\newcommand{\declarebsfitalic}[2]{%
  \protected\csdef{bsf#1}{\mathord{\text{\bsfitalicfont#2}}}%
}
\newcommand{\bsfitalicfont}{\usefont{T1}{cmss}{bx}{it}}
\newcommand{\msA}{{\sf{A}}}
\newcommand{\msG}{{\sf{G}}}
\newcommand{\msU}{{\sf{U}}}
\newcommand{\fkg}{{\mathfrak{g}}}
\newcommand{\fki}{{\mathfrak{i}}}
\newcommand{\fkl}{{\mathfrak{l}}}
\newcommand{\fkm}{{\mathfrak{m}}}
\newcommand{\fks}{{\mathfrak{s}}}
\newcommand{\fkw}{{\mathfrak{w}}}
\newcommand{\bbC}{{\mathbb{C}}}
\newcommand{\bbF}{{\mathbb{F}}}
\newcommand{\bbN}{{\mathbb{N}}}
\newcommand{\bbR}{{\mathbb{R}}}
\newcommand{\bbZ}{{\mathbb{Z}}}
\newcommand{\scA}{{\matheul{A}}}
\newcommand{\scC}{{\matheul{C}}}
\newcommand{\scH}{{\matheul{H}}}
\newcommand{\scK}{{\matheul{K}}}
\newcommand{\scL}{{\matheul{L}}}
\newcommand{\scM}{{\matheul{M}}}
\newcommand{\scP}{{\matheul{P}}}
\newcommand{\scQ}{{\matheul{Q}}}
\newcommand{\clC}{{\mathcal{C}}}
\newcommand{\clG}{{\mathcal{G}}}
\newcommand{\clP}{{\mathcal{P}}}
\newcommand{\clS}{{\mathcal{S}}}
\newcommand{\clh}{{\mathcal{h}}}
\newcommand{\bra}[1]{{\langle{#1}|}}
\newcommand{\ket}[1]{{|{#1}\rangle}}
\newcommand{\braket}[2]{{\langle{#1}|{#2}\rangle}}
\newcommand{\laitrap}{\reflectbox{$\partial$}}      
\DeclareMathOperator{\ran}{ran} 
\DeclareMathOperator{\id}{id}
\DeclareMathOperator{\Hom}{Hom}
\DeclareMathOperator{\Obj}{Obj}
\DeclareMathOperator{\Vrtc}{Vert}
\DeclareMathOperator{\End}{End}
\DeclareMathOperator{\tr}{tr}
\DeclareMathOperator{\UU}{\sf{U}}
\DeclareMathOperator{\Vrtx}{Vert}
\DeclareMathOperator{\Simp}{Simp}
\DeclareMathOperator{\lk}{lk}
\DeclareMathOperator{\sk}{sk}
\numberwithin{equation}{subsection} 
\numberwithin{subsection}{section} 
\newcommand{\ceqref}[1]{{\textcolor{blue}{\eqref{#1}}}}
\newcommand{\cref}[1]{{\textcolor{blue}{\ref{#1}}}}
\newcommand{\ccite}[1]{{\textcolor{blue}{\!\cite{#1}}}}
\newcommand{\sss}{{\hbox{\large $\sum$}}}
\newcommand{\ppp}{{\hbox{\large $\prod$}}}
\newcommand{\uuu}{{\hbox{\large $\bigcup$}}}
\newcommand{\nnn}{{\hbox{\large $\bigcap$}}}
\newcommand{\ddd}{{\hbox{\large $\bigoplus$}}}
\newcommand{\squuu}{{\hbox{\large $\bigsqcup$}}}
\newcommand{\ul}[1]{{\underline{#1}}}
\newcommand{\ol}[1]{{\overline{#1}}}
\newcommand{\hfpt}{\hspace{.75pt}}
\newcommand{\mhfpt}{\hspace{-.75pt}}
\newcommand\mycom[2]{\genfrac{}{}{-3pt}{}{#1}{#2}}
\font\euler=eusm10 at 12.8 truept
\font\scripteuler=eusm7
\font\scriptscripteuler=eusm5 
\def\eul{\fam=12}
\newcommand{\matheul}[1]{{{\eul #1}}}
\newtheorem{defi}{{\sf Definition}}[subsection]
\newtheorem{prop}{{\sf Proposition}}[subsection]
\newtheorem{theor}{{\sf Theorem}}[subsection]
\newtheorem{exa}{{\sf Example}}[subsection]
\DeclareMathSymbol{*}{\mathbin}{symbols}{"03} 
\begin{document}

\thispagestyle{empty} 

\vskip1.5cm
\begin{large}
{\flushleft\textcolor{blue}{\sffamily\bfseries A new quantum computational 
set--up for algebraic topology via simplicial sets}}
\end{large}
\vskip1.3cm
\hrule height 1.5pt
\vskip1.3cm
{\flushleft{\sffamily \bfseries Roberto Zucchini}\\
\it Department of Physics and Astronomy,\\
Alma Mater Studiorum University of Bologna,\\
I.N.F.N., Bologna division,\\
viale Berti Pichat, 6/2\\
Bologna, Italy\\
Email: \textcolor{blue}{\tt \href{mailto:roberto.zucchini@unibo.it}{roberto.zucchini@unibo.it}}, 
\textcolor{blue}{\tt \href{mailto:zucchinir@bo.infn.it}{zucchinir@bo.infn.it}}}


\vskip.7cm
\vskip.6cm 
{\flushleft\sc
Abstract:} 
In this paper, a quantum computational framework for algebraic 
topology based on simplicial set theory is presented.
This extends previous work, which was limited to simplicial complexes and aimed
mostly to topological data analysis. The proposed set--up applies to any parafinite simplicial set and
proceeds by associating with it a finite dimensional simplicial Hilbert space, whose simplicial operator structure
is studied in some depth. It is shown in particular
how the problem of determining the simplicial set's homology can be solved within the simplicial Hilbert
framework. Further, the conditions under which simplicial set theoretic algorithms can be
implemented in a quantum computational setting with finite resources are examined. 
Finally a quantum algorithmic scheme capable to compute the simplicial homology spaces and Betti numbers
of a simplicial set combining a number of basic quantum algorithms is outlined.

\vspace{2mm}
\par\noindent

\vfil\eject

{\color{blue}\tableofcontents}

\vfil\eject

\vfil\eject

\renewcommand{\sectionmark}[1]{\markright{\thesection\ ~~#1}}

\section{\textcolor{blue}{\sffamily Introduction}}\label{sec:intro}

Computational topology is a branch of topology which aims to the determination
of topological invariants of topological spaces using the methods of algebraic topology
and the algorithmic techniques
furnished by computer science. It comprises areas as diverse as computational
3--manifold theory, computational knot theory, computational homotopy theory and computational
homology theory.

In the last three decades, there has been a growing interest in computational topology 
in relation to topological data analysis, a method of analysing large, incomplete 
and noisy high dimensional sets of data by determining their intrinsic topological properties in general
and their homology in particular \ccite{Edelsbrunner:2002tps,Zomorodian:2005cph,Carlsson:2009tad,Zomorodian:tda2012}. 
The methodology has the potential of extracting meaningful information about data sets of interest
on one hand but poses formidable computational challenges on the other.

Quantum computing may provide
new powerful means to speedup the implementation of the algorithms of computational topology in general 
and of topological data analysis in particular. This paper aims to explore this possibility relying upon  the
methods of simplicial set theory.


\subsection{\textcolor{blue}{\sffamily Simplicial approaches to computational topology}}\label{subsec:motivation}

In computational topology, a wide range of topological spaces embedded
in Euclidean spaces are analyzed by means of suitable abstract simplicial complexes
associated to samplings of them, such as the \v Cech \ccite{Alexandroff:1924chc},
Vietoris--Rips \ccite{Vietoris:1927vrc} and witness complexes \ccite{de Silva:2004wwc} 
to mention the most popular. 

A simplicial complex is a set of vertices, edges, triangles, tetrahedrons and higher dimensional 
polytopes called collectively simplices fitting with each other through
their boundaries in a topologically meaningful manner\ccite{Lefschet:1930top}.
The simplices of the complex build up a topological space and the complex realizes a
generalized triangulation of such a space. 
The way the simplices of a simplicial complex join together is described combinatorially 
by an associated abstract simplicial complex\ccite{Spanier:1966att}. An abstract simplicial complex
is a family of non empty finite subsets of a vertex set, called abstract simplices, that contains all
the singleton sets of the vertices and is closed under subset taking.
Upon identifying abstract simplices with simplices, an abstract simplicial complex turns into a
combinatorial blueprint for a topological space, its topological realization. 

Abstract simplicial complex theory turns out to be particularly effective in the study of
embedded topological spaces. The representation of such spaces it provides is intuitive and
reasonably simple. However, by its very nature, it falls short of satisfying key requirements
and certain of its features limit its applicability to topological spaces of other types,
as summarized in the following points.

\begin{enumerate}[label=\roman*),font=\itshape]

\vspace{-1mm}\item The definitions of product of two simplicial complexes and quotient
of a simplicial complex by a subcomplex are elaborate and involved reflecting the intricacy
of making such operations enjoy desired compatibility properties with topological realization.


\vspace{-2mm}\item Simplices with identified faces cannot occur in a simplicial complex, restricting the range
of types of simplices available.

\vspace{-2mm}\item Distinct simplices in a simplicial complex cannot share the same set of faces, limiting the 
combinatorial range of simplicial complex theory. 

\vspace{-2mm}\item The simplicial complexes which are employed to describe even relatively simple
topological spaces contain as a rule a very large number of simplices.

\vspace{-2mm}\item The reduction methods which have been devised 
to curtail the size of these complexes while preserving their topological properties,
such as Whitehead's simplicial contraction \ccite{Whitehead:1939spn}, are often of
limited usefulness.

\vspace{-1mm}

\end{enumerate}

Alternative simplicial approaches to computational topology free of the shortcomings
pointed out above would deserve to be considered. Simplicial set theory \ccite{Eilenberg:sst1950}
is a generalization of simplicial complex theory which meets such demand.
As simplicial complexes, simplicial sets have topological realizations and can therefore
be employed as combinatorial models of topological spaces. However, 
simplicial sets enjoy the desirable properties listed below simplicial complexes do not.

\begin{enumerate}[label=\roman*),font=\itshape]

\vspace{-1mm}\item The product 
of two simplicial sets and the quotient of a simplicial set by a subset
can be defined straightforwardly and the simplicial sets resulting from such operations are compatible
with topological realization.

\vspace{-2mm}\item Simplices with identified faces are allowed in a simplicial set. 

\vspace{-2mm}\item Distinct simplices sharing the same set of faces can occur in a simplicial set.

\vspace{-2mm}\item Precisely because of the greater wealth of simplex types and simplex combinatorics,
simplicial sets furnish leaner and more succinct simplicial models of topological spaces.

\vspace{-2mm}\item A wider range of reduction techniques, such as 
simplicial contraction previously mentioned and simplicial collapse,
allow to further streamline such models.

\vspace{-1mm}

\end{enumerate}

The essential feature distinguishing simplicial sets from simplicial complexes is the incorporation 
of degenerate simplices. These are simplices 
whose formal dimension is higher than their effective one. The nicer properties of simplicial sets
as compared to simplicial complexes listed above are possible precisely thanks to the inclusion of degenerate
simplices. The reasons for this are multiple.

\begin{enumerate}[label=\roman*),font=\itshape]

\vspace{-1mm}\item A particular non degenerate simplex can have a degenerate face and be a face a
degenerate simplex.

\vspace{-2mm}\item Degenerate simplices allow for the gluing of a non degenerate simplex 
through its boundary to another non degenerate simplex of arbitrarily lower dimension.

\vspace{-2mm}\item Degenerate simplices enter in an essential way in the basic simplicial set theoretic operations
already mentioned. In a product of two simplicial sets, a non degenerate
simplex may have degenerate components. In a quotient of a simplicial set by a subset, non degenerate
simplices are replaced by degenerate ones. Other examples could be mentioned.

\vspace{-1mm}

\end{enumerate}

\noindent 
The degenerate simplices are therefore indispensable constitutive elements of a simplicial set along with the non degenerate
ones. This notwithstanding, degenerate simplices are hidden in the simplicial set's topological realization and further
they do not contribute to its homology. This however does not imply that
such simplices can simply be dropped.  In fact, the indiscriminate removal of the degenerate simplices
leaves in general an incomplete and/or inconsistent simplicial construct.

We illustrate the points made above with a few examples. 
A two dimensional torus can be represented as a simplicial set with one vertex, three edges
with coinciding ends 
and two triangles with coinciding vertices and sharing the same edges,
while as a simplicial complex as simple as possible requires seven vertices, twenty--one edges and fourteen 
triangles. The minimal simplicial complex describing a three dimensional sphere requires five vertices,
ten edges, ten triangles, five tetrahedrons, while as a simplicial set, only one vertex and one tetrahedron
with collapsed faces are sufficient.
These examples, albeit elementary, indicate that in general
the number of non degenerate simplices required to model a topological space as a simplicial set is indeed
considerably smaller than as a simplicial complex. 
The infinitely many degenerate simplices accompanying the non degenerate ones in the simplicial set
do not offset this advantage. The degenerate simplices are topologically invisible 
in the simplicial set and there are methods to effectively dispose of them.

It is important to realize that, in an appropriate sense, simplicial complexes are 
special cases of simplicial sets. 
A simplicial set describes a simplicial complex precisely when each non degenerate
simplex has distinct vertices and no two non degenerate simplices share the same vertices.
Instances of simplicial sets not satisfying these restrictive conditions are routine. 
Simplicial sets are therefore more general than simplicial complexes and for this reason
have in principle a broader scope. 


Our main proposition is that the investigation of the potential implementation of
simplicial set theoretic algorithms in computational topology
as an useful addition and complement to simplicial complex theoretic ones is a worthwhile endeavour.

\begin{enumerate}[label=\roman*),font=\itshape]

\vspace{-1mm}\item Simplicial sets have the potential of providing
a more efficient combinatorial codification of topological spaces 
ideally suited for algorithmic implementation.

\vspace{-2mm}\item
Techniques of 
computational topology employing simplicial sets and not just simplicial complexes
may have a wider range of applications. Indeed, simplicial set theory can be used to
describe and generalize a variety of combinatorial--topological structures such as
directed multigraphs, partially ordered sets, categories and more generally $\infty$--categories.

\vspace{-1mm}
  
\end{enumerate}

\noindent
This point of view has been forcefully  advocated by Perry \ccite{Perry:act2003} and Zomorodian
\cite{Zomorodian:tds2010} based on the above and similar considerations.
Independently, simplicial sets have found applications also
in computational geometry, a discipline distinct from but overlapping with computational topology
with relevant applications in geometric modelling, computer graphics, computer aided design and manufacturing. etc.
Indeed computational geometry has topological and metric aspects, the first of which are amenable to
the methods provided by simplicial set theory \ccite{Lang:1995mss,Lienhardt:1997sgm}



We conclude by observing that although most of its applications 
are concerned with homology computation, simplicial set theory enters noticeably also
in homotopy computation, see e.g. \ccite{Krcal:2013cht,Awodey:2015htt,Filakovsky:2018srh}.

 Basic notions and facts of simplicial set theory employed in this paper
are reviewed in sect. \cref{sec:sset} to reader's benefit and to fix notation and terminology.


\subsection{\textcolor{blue}{\sffamily A quantum framework for simplicial sets
}}\label{subsec:qufrmkssets}

The present endeavour is inspired by the seminal work \ccite{Lloyd:2014lgz} by Lloyd {\it et al}, 
where quantum algorithms for calculating  eigenvectors and eigenvalues of combinatorial
Laplacians and finding Betti numbers in persistent homology were developed achieving
an exponential speedup over the analogous classical algorithms
used in topological data analysis. 
This opened a new field in quantum computing, quantum topological data analysis, whose
development intensified in recent years
\ccite{Gunn:2019rqa,Gyurik:2020qat,Ubaru:2021dae,Hayakawa:2022qpb,McArdle:2022sqa,Berry:2022qqa,Black:2023isp}.
A critical evaluation of this promising quantum computational framework from the perspective
of complexity theory was presented in ref. \ccite{Schmidhuber:2022ctl}.
An interesting relationship between it 
and supersymmetric quantum mechanics was studied in refs. \ccite{Crichigno:2020vue,Cade:2021jhc,Crichigno:2022wyj}. 

The authors of the references cited in the previous paragraph
used quantum algorithms suited for simplicial complexes having in mind their potential applications to
topological data analysis. For the reasons explained in subsect. \cref{subsec:motivation} above, investigating
whether it is possible to adapt and extend such a quantum computational approach to simplicial sets
may be a worthy proposition.  In the present paper, we attempt to do so. 

We provide now a heuristic illustration of our formulation.
A simplicial set can be described as a collection of simplices subdivided according to their dimension
and of face and degeneracy maps which indicate which simplices are the faces and degeneracies of any
given simplex. In the quantum set--up we are going to present, 
a given simplicial set is inscribed in a {\it simplicial Hilbert space} by viewing the
simplices as simplex vectors forming a distinguished orthonormal basis and the face and degeneracy maps
as face and degeneracy operators acting on simplex vectors in a way that precisely correlates to that
the face and degeneracy maps act on simplices. 

The dagger structure of the simplicial Hilbert space set--up 
brings in the adjoints of the face and degeneracy operators. These encode relevant 
features of the underlying simplicial set and can be used to obtain alternative reformulations of standard
problems of computational topology. In particular, the problem of the determination of the simplicial
homology spaces with complex coefficients of the simplicial set
is recast as that of the computation of the kernels of certain simplicial Hilbert Laplacians
as in ref. \ccite{Lloyd:2014lgz}. 
The homologically irrelevant degenerate simplex subspaces can furthermore be effectively disposed of
in principle by employing appropriate search algorithms. 

The framework outlined above, which is expounded in sect. \cref{sec:setup}, affords also the derivation
of a number of technical results and leads to novel theoretical constructs. We introduce in particular
the notion of simplicial quantum circuit, a special kind of quantum circuit which performs
homological computations. A related approach was presented by Schreiber and Sati 
in ref. \ccite{Sati:2023xqk}

In sect. \cref{sec:simpcircappls}, we examine some of the problems which may arise in the implementation
of simplicial set theoretic topological algorithms in a realistic quantum computer relying on
the quantum computational set--up of sect. \cref{sec:setup}.  
The issues analyzed here range from truncation and skeletonization of simplicial sets, 
as a means of modelling finite storage resources capable of assembling simplicial data up a
certain finite dimension, and the digital encoding of a truncated simplicial set, by counting and parametrizing
of simplices, to their translation into a quantum computational framework.
Mainly for illustrative purposes, we also outline a quantum algorithmic scheme capable
in principle to compute the simplicial homology spaces and Betti numbers
of a simplicial set along the lines of that worked out in \ccite{Lloyd:2014lgz} combining
a number of basic quantum algorithms.

\subsection{\textcolor{blue}{\sffamily Scope and limitations of the present work}}\label{subsec:scope}

We conclude this introductory section with some remarks clarifying the scope and limitations of the  
present work. 

This paper is interdisciplinary in that it combines elements from computational algebraic topology and
quantum computation scattered in the literature proposing a unified view and a synthesis.
It hopefully is of some interest to algebraic topologists, willing to know how their discipline
may find application in quantum computational topology, and quantum computational topology specialists,
wishing to have an understanding of their field from a more foundational perspective.

The paper has a theoretical and physical mathematical outlook. It illustrates a quantum computational framework
for algebraic topology based on simplicial set theory, which may be viewed as an abstract model
of a {\it simplicial quantum computer}. No new 
quantum algorithms solving specific problems of algebraic topology
with a better performance than classical ones are presented, but hopefully the ground
is prepared for the future development and study of such algorithms. The paper contains also
original research work, but the results presented have
mostly a speculative bearing for the time being. Time will say whether these ideas will
turn out to be useful in practice. 

The paper focuses on homology for its practical
relevance and as an illustration of the efficacy of the formal apparatus devised,
though application of it to homotopy is conceivable and presumably attainable. 
The simplicial complex based quantum algorithm of ref.
\ccite{Lloyd:2014lgz} employed for homological computations in topological data analysis
and its derivations and refinements elaborated in later literature, however, 
will likely remain the most competitive option in the foreseeable future. 

In sect. \cref{sec:outlook}, we list the problems which are still open and
discuss the outlook of our work.



\noindent
\markright{\textcolor{blue}{\sffamily Conventions}} 


\noindent
\textcolor{blue}{\sffamily Conventions.}
Following a widely  used convention of algebraic topology and computer science,
we denote by $\bbN$ the set of all non negative integers. So, $0\in\bbN$.
We indi\-cate by $|A|$ the cardinality of a set $A$. To avoid possible confusion,
we denote the topological realization of a simplicial set $\sfX$ by ${}^\sharp\sfX$
rather than $|\sfX|$, as usually done in the mathematics.
$\Obj_\clC$ and $\Hom_\clC$ denote the object class and homomorphism  set of a category $\clC$.
Composition $\circ$ of maps, when occurring,  is usually left understood.
Finally, we adopt Dirac's bra--ket notation throughout this paper.

\vfil\eject

\renewcommand{\sectionmark}[1]{\markright{\thesection\ ~~#1}}

\section{\textcolor{blue}{\sffamily Simplicial sets}}\label{sec:sset}

In this section, we shall review the main aspects of simplicial set theory.
This topic has an elegant formulation in the framework of category theory. In what follows,
however, we shall pursue a more direct approach which by its combinatorial 
nature is especially suitable for the algorithmic methods of computational topology.
The presentation is kept as concrete as possible. Only a basic knowledge of category theory is assumed.
Introductory accounts of simplicial set theory are provided by refs. \ccite{Riehl:lis2008,Sergeraert:cht2008,
Friedman:sst2012,Bergner:2022scb}. 
Standard references on the subject are \ccite{Curtis:sht1971,May:sot1993}.

In intuitive terms, a simplicial set $\sfX$ is a collection of sets
$\sfX_n$ with $n\in\bbN$, whose elements
are to be thought of as $n$--simplices, equipped with mappings which establish:

\begin{enumerate}

\vspace{-.9mm} \item which $n-1$--simplices are faces of which $n$--simplices

\vspace{-1.8mm} \item which $n+1$--simplices are degeneracies of which $n$--simplices. 

\vspace{-.9mm}

\end{enumerate}

A simplicial set is an abstract combinatorial blueprint for constructing a topological space. Indeed,
every simplicial set $\sfX$ has a topological realization (more commonly but less precisely called
geometric realization) ${}^\sharp\sfX$, a topological space of a special kind called $CW$ complex. ${}^\sharp\sfX$
is built by associating with any $n$--simplex $\sigma_n$ of $\sfX$
a copy of the standard topological $n$--simplex
\begin{equation}
{}^\sharp\Delta^n=\left\{(t_0,\ldots,t_n)\in\bbR^{n+1}|0\leq t_i\leq 1,\sss_it_i=1\right\}
\label{}
\end{equation}
and then gluing together all the topological simplices generated in this fashion along their boundaries
in a way that precisely correlates to that the simplices of $\sfX$ which they are associated with are
related in consequence of the simplicial set's face and degeneracy maps. 

It is precisely because a topological space can be encoded in the simplicial set of which
it is the topological realization that many notions of ordinary algebraic topology have an analog
in simplicial set theory. In particular, the homotopy and homology of a topological space are modelled
by the homotopy and homology of the underlying simplicial set. In this paper, we concentrate on homology
for its interest and greater simplicity.

\vfill\eject

\subsection{\textcolor{blue}{\sffamily Simplicial sets}}\label{subsec:smplsets}

In this subsection, we review the main notions of simplicial set theory. 

\begin{defi} \label{def:sset}
A simplicial set $\sfX$ is a collection of 
sets $\sfX_n$, $n\in\bbN$,
and mappings $d_{ni}:\sfX_n\rightarrow \sfX_{n-1}$, $n\geq 1$, $i=0,\dots,n$, and 
$s_{ni}:\sfX_n\rightarrow \sfX_{n+1}$, $n\geq 0$, $i=0,\dots,n$, obeying the simplicial relations
\begin{subequations}
\label{smplsets1/5}
\begin{align}
&d_{n-1i}d_{nj}=d_{n-1j-1}d_{ni}&& \text{if ~$0\leq i,j\leq n$, $i<j$}, 
\label{smplsets1}
\\
&d_{n+1i}s_{nj}=s_{n-1j-1}d_{ni}&& \text{if ~$0\leq i,j\leq n$, $i<j$}, 
\label{smplsets2}
\\
&d_{n+1i}s_{nj}=\id_n&&\text{if ~$0\leq j\leq n$, $i=j,\,j+1$}, 
\label{smplsets3}
\\
&d_{n+1i}s_{nj}=s_{n-1j}d_{ni-1}&& \text{if ~$0\leq i,j\leq n+1$, $i>j+1$}, 
\label{smplsets4}
\\
&s_{n+1i}s_{nj}=s_{n+1j+1}s_{ni}&& \text{if ~$0\leq i,j\leq n$, $i\leq j$}. 
\label{smplsets5}
\end{align}
\end{subequations}
\end{defi}
\noindent
For each of these relations, there are restrictions on the range of allowed values of $n$ which follow from
$d_{ni}$ and $s_{ni}$ being defined for $n\geq 1$ and $n\geq 0$, respectively.
These conditions are evident from inspection and will not be stated explicitly.
By the third relation, the maps $d_{ni}$ are surjective whilst the maps $s_{ni}$ are injective.

The integer $n$ is called simplicial degree.
The set $\sfX_n$ comprises the $n$-simplices of $\sfX$. The maps $d_{ni}$, $s_{ni}$ are the face
and degeneracy maps of $\sfX_n$.  In the mathematical 
literature, the dependence
of $d_{ni}$, $s_{ni}$ on $n$ is usually left implicit. In the applications to quantum computation treated
in this paper, this is not always possible without yielding ambiguous or incomplete expressions.
We have therefore decided to indicate it explicitly at the cost of somewhat complicating
the notation.

The simplicial set can be represented diagrammatically as
\begin{equation}
\xymatrix@C=2.5pc{
&\cdots \ar@/^.6pc/[r]\ar@/^-.6pc/[r] \ar@/^.2pc/[r]\ar@/^-.2pc/[r]
&\sfX_2\ar@/^.4pc/[l]\ar@/^-.4pc/[l]\ar[l]
\ar[r]\ar@/^.6pc/[r]\ar@/^-.6pc/[r]
&\sfX_1\ar@/^.3pc/[l]\ar@/^-.3pc/[l]
\ar@/^.6pc/[r]\ar@/^-.6pc/[r]&\sfX_0\ar[l]},
\end{equation}
where the rightward/leftward arrows stand for the face/degeneracy maps. 

\begin{defi} \label{def:smor}
A morphism $\phi:\sfX\rightarrow\sfX'$ of simplicial sets $\sfX$, $\sfX'$
consists in a collection of maps $\phi_n:\sfX_n\rightarrow\sfX'{}_n$ with $n\geq 0$ obeying 
\begin{subequations}
\label{smplsets6/7}
\begin{align}&\phi_{n-1}d_{ni}=d'{}_{ni}\phi_n && \text{if ~$0\leq i\leq n$}, 
\label{smplsets6}
\\
&\phi_{n+1}s_{ni}=s'{}_{ni}\phi_n && \text{if ~$0\leq i\leq n$}.
\label{smplsets7}
\end{align}
\end{subequations}
\end{defi}

\noindent
The morphism fits in a commutative diagram of the form 
\begin{equation}
\vspace{.125cm}
\xymatrix@C=2.5pc{
&\cdots \ar@/^.6pc/[r]\ar@/^-.6pc/[r] \ar@/^.2pc/[r]\ar@/^-.2pc/[r]
&\sfX_2\ar@/^.4pc/[l]\ar@/^-.4pc/[l]\ar[l]
\ar[r]\ar@/^.6pc/[r]\ar@/^-.6pc/[r]\ar[d]^{\phi_0}
&\sfX_1\ar@/^.3pc/[l]\ar@/^-.3pc/[l]
\ar@/^.6pc/[r]\ar@/^-.6pc/[r]\ar[d]^{\phi_1}&\sfX_0\ar[l]\ar[d]^{\phi_0}\\
&\cdots \ar@/^.6pc/[r]\ar@/^-.6pc/[r] \ar@/^.2pc/[r]\ar@/^-.2pc/[r]
&\sfX'{}_2\ar@/^.4pc/[l]\ar@/^-.4pc/[l]\ar[l]
\ar[r]\ar@/^.6pc/[r]\ar@/^-.6pc/[r]
&\sfX'{}_1\ar@/^.3pc/[l]\ar@/^-.3pc/[l]
\ar@/^.6pc/[r]\ar@/^-.6pc/[r]&\sfX'{}_0\ar[l]}.
\vspace{.125cm}
\end{equation}


Simplicial sets can be formed using other simplicial sets as building blocks
via certain elementary operations. Two such operations will be relevant in our analysis. 

Let $\sfX$, $\sfX'$ be simplicial sets.

\begin{defi} \label{def:simplprod}
The Cartesian product $\sfX\times\sfX'$ of $\sfX$, $\sfX'$
is the simplicial set defined as follows. The $n$--simplex set of $\sfX\times\sfX'$ is the Cartesian
product $\sfX\times\sfX'{}_n=\sfX_n\times\sfX'{}_n$. The face and degeneracy maps of $\sfX\times\sfX'$
at degree $n$ are the Cartesian product maps $d\times d'{}_{ni}=d_{ni}\times d'{}_{ni}$ and
$s\times s'{}_{ni}=s_{ni}\times s'{}_{ni}$.
The Cartesian product of two morphisms $\phi:\sfX\rightarrow \sfX''$, $\psi:\sfX'\rightarrow \sfX'''$,
of simplicial sets is the simplicial set morphism $\phi\times\psi:\sfX\times\sfX'\rightarrow\sfX''\times\sfX'''$
defined by $\phi\times\psi_n=\phi_n\times\psi_n$ at degree $n$. 
\end{defi}

\noindent
Above, $\times$ denotes the Cartesian multiplication monoidal product of the category $\ul{\rm Set}$
of sets and functions. 

\begin{defi} \label{def:simplcup}
The disjoint union $\sfX\sqcup\sfX'$  of $\sfX$, $\sfX'$ is the simplicial set 
defined as follows. The $n$--simplex set of $\sfX\sqcup\sfX'$ is the disjoint union 
$\sfX\sqcup\sfX'{}_n=\sfX_n\sqcup\sfX'{}_n$.
The face and degeneracy maps of $\sfX\sqcup\sfX'$
at degree $n$ are the disjoint union maps $d\sqcup d'{}_{ni}=d_{ni}\sqcup d'{}_{ni}$ and
$s\sqcup s'{}_{ni}=s_{ni}\sqcup s'{}_{ni}$.
The disjoint union of two morphisms $\phi:\sfX\rightarrow \sfX''$, $\psi:\sfX'\rightarrow \sfX'''$,
of simplicial sets is the simplicial set morphism $\phi\sqcup\psi:\sfX\sqcup\sfX'\rightarrow\sfX''\sqcup\sfX'''$
defined by $\phi\sqcup\psi_n=\phi_n\sqcup\psi_n$ at degree $n$. 
\end{defi}

\noindent
Here, $\sqcup$ denotes the disjoint union monoidal product of the category $\ul{\rm Set}$.

With the operations of Cartesian product and disjoint union
simplicial sets and morphisms form a bimonoidal category $\ul{\rm sSet}$.
We shall treat $\ul{\rm sSet}$ as a strict bimonoidal category, not completely rigorously,
neglecting the fact the Cartesian product and disjoint union of sets and functions 
are associative and unital only up to natural isomorphism only. 

\begin{defi} \label{def:parafinite}
A simplicial set $\sfX$ is called parafinite if the $n$--simplex set $\sfX_n$ is finite
for all $n$.
\end{defi}

\noindent
{\it In this paper, we shall deal mainly with such simplicial sets}. They form a full bimo\-noidal subcategory
$\ul{\rm pfsSet}$  of $\ul{\rm sSet}$.

The following seemingly trivial instance of simplicial set 
is together with some of its variants sometimes useful in some constructions.

\begin{exa} \label{ex:discretesset} The discrete simplicial set of an ordinary set.

\noindent
{\rm 
Any ordinary non empty set $A$ can be identified with the simplicial set $\sfD A$
with $n$-simplex set $\sfD_nA=A$ and face and degeneracy maps $d_{ni}=s_{ni}=\id_A$. 
Such a simplicial set is called {\it discrete} and sometimes also {\it simplicially constant}.
If the set $A$ is finite, then $\sfD A$ is parafinite as a simplicial set.
}
\end{exa}

While the constructions elaborated in this work apply to any parafinite simplicial set,  
there are specific simplicial sets for which they exhibit special properties. 
Nerves of categories and simplicial sets of ordered simplicial complexes are among these.

\begin{exa}\label{ex:nerve} The nerve of a category.

\noindent  
{\rm The {\it nerve} $\sfN\clC$ of a category $\clC$ is defined as follows. The $0$--simplex set of $\sfN\clC$
is just the set of objects of $\clC$: $\sfN_0\clC=\Obj_\clC$. Thus, a $0$--simplex is just an object $x$
of $\clC$. For $n\geq 1$, the $n$--simplex set of $\sfN\clC$ consists of the ordered $n$ element sequences
of composable morphisms of $\clC$:  
$\sfN_n\clC=\Hom_\clC\times_{\Obj_\clC}\cdots\times_{\Obj_\clC}\Hom_\clC$ ($n$ factors). Therefore, an $n$--simplex
$\sigma_n\in\sfN_n\clC$ is representable as
\begin{equation}
\sigma_n=(f_1,\ldots,f_n), 
\label{smplsets8}
\end{equation}
where $f_1,\ldots,f_n\in\Hom_\clC$ are morphisms such that $t(f_k)=s(f_{k+1})$ for $1\leq k\leq n-1$,
$s$, $t$ denoting the source and target maps of $\clC$.
The face maps $d_{ni}:\sfN_n\clC\rightarrow\sfN_{n-1}\clC$ read as follows
\begin{align}
&d_{10}(f_1)=t(f_1), \quad d_{11}(f_1)=s(f_1) 
\label{smplsets9}
\\
\intertext{for $n=1$ and}
&d_{n0}(f_1,\ldots,f_n)=(f_2,\ldots,f_n), 
\label{smplsets10}
\\
&d_{ni}(f_1,\ldots,f_n)=(f_1,\ldots,f_{i-1},f_{i+1}\circ f_i,f_{i+2},\ldots,f_n)\quad\text{for $0<i<n$},
\nonumber
\\
&d_{nn}(f_1,\ldots,f_n)=(f_1,\ldots,f_{n-1}) 
\nonumber
\intertext{for $n\geq 2$. The degeneracy maps $s_{ni}:\sfN_n\clC\rightarrow\sfN_{n+1}\clC$ take the form}
&s_{00}x=\id_x
\label{smplsets11}
\\
\intertext{for $n=0$ and}
&s_{ni}(f_1,\ldots,f_n)=(f_1,\ldots,f_i,\id_{t(f_i)},f_{i+1},\ldots,f_n)
\label{smplsets12}
\end{align}
for $n\geq 1$.
A category $\clC$ is called finite if its object and morphism sets are finite. In that case, its nerve
$\sfN\clC$ is a parafinite simplicial set.

A groupoid is a category $\clG$ all of whose morphisms are invertible. The nerve $\sfN\clG$
of a groupoid $\clG$ exhibits as a consequence special properties not found in generic categories.
}
\end{exa}

\begin{exa} \label{ex:simp} The simplicial set of an ordered abstract simplicial complex.

\noindent 
{\rm An {\it abstract simplicial complex} $\clS$ consists of a set of vertices, $\Vrtx_\clS$,
and a set of simplices, $\Simp_\clS$, constituted by finite non-empty subsets of $\Vrtx_\clS$
satisfying the following requirements. (1) If $\sigma\in \Simp_\clS$ and $\emptyset\neq\tau\subseteq\sigma$,
then $\tau\in \Simp_\clS$, that is any non empty subset of a simplex is a simplex. 
(2) If $v\in \Vrtx_\clS$, then $\{v\}\in \Simp_\clS$, so every singleton of a vertex is a simplex. 
An $n$--simplex is a simplex of $\Simp_\clS$ formed by $n+1$ distinct vertices of $\Vrtx_\clS$. 
$\clS$ is said to be {\it ordered} if $\Vrtx_\clS$ \linebreak is endowed with a total ordering $\leq$.
An $n$--simplex $\sigma_n$ is then representable as an increasing sequence of $n+1$
vertices: $\sigma_n=(v_0,\ldots,v_n)$ with $v_0,\ldots,v_n\in\Vrtx_\clS$ and $v_0<\ldots<v_n$. 

With any ordered abstract simplicial complex $\clS$, there is associated a simplicial set $\sfK\clS$
defined as follows. For $n\geq 1$, the $n$--simplex set consists of the non 
decreasing sequences of $n+1$ vertices of $\clS$ whose underlying vertex set is a simplex of $\clS$.
Therefore, an $n$--simplex $\sigma_n\in\sfK_n\clS$ has a representation of the form
\begin{equation}
\sigma_n=(v_0,\ldots,v_n),
\label{smplsets13}
\end{equation}
where $v_0,\ldots,v_n\in\Vrtx_\clS$ are vertices with $v_0\leq\ldots\leq v_n$ and $|\sigma_n|\in \Simp_\clS$,
$|\sigma_n|$ being the set constituted by the distinct vertices $v_k$. 
The face maps $d_{ni}:\sfK_n\clS\rightarrow\sfK_{n-1}\clS$ are given by the expression 
\begin{align}
&d_{ni}(v_0,\ldots,v_n)=(v_0,\ldots v_{i-1},v_{i+1},\ldots,v_n).
\label{smplsets14}
\intertext{The degeneracy maps $s_{ni}:\sfK_n\clS\rightarrow\sfK_{n+1}\clS$ read as follows}
&s_{ni}(v_0,\ldots,v_n)=(v_0,\ldots v_i,v_i,\ldots,v_n).
\label{smplsets15}
\end{align}
$\clS$ is called finite when $\Vrtx_\clS$ is a finite set. In that case, its simplicial set 
$\sfK\clS$ is parafinite. 
}
\end{exa}

A distinguishing feature of simplicial sets when compared to simplicial complexes
is the appearance of degenerate simplices.

Let $\sfX$ be a simplicial set.

\begin{defi} \label{defi:degsiplx}
 An $n$--simplex $\sigma_n\in\sfX_n$ is said to be degenerate if there is some $n-1$--simplex $\tau_{n-1}\in\sfX_{n-1}$
 and index $i$ with $0\leq i\leq n-1$ such that $\sigma_n=s_{n-1i}\tau_{n-1}$.
\end{defi}

\noindent $0$-simplices are regarded as non degenerate. We denote by ${}^s\sfX_n$
the subset of the degenerate simplices of $\sfX_n$.

\begin{exa} \label{ex:discretedeg} The degenerate simplices of the discrete simplicial set of a set.

\noindent
{\rm In the discrete simplicial set $\sfD A$ of a non empty set $A$ (cf. ex. \cref{ex:discretesset})
${}^s\sfD_nA=\sfD_nA$ for $n>0$: all positive degree simplices are degenerate.
}
\end{exa}

\begin{exa}\label{ex:nervedeg} The degenerate simplices of the nerve of a category.

\noindent  
{\rm For $n>0$, the degenerate $n$--simplex set ${}^s\sfN_n\clC$ of the nerve 
$\sfN\clC$ of a category $\clC$ (cf. ex. \cref{ex:nerve}) consists of all simplices
$(f_1,\ldots,f_n)$ at least one of whose components $f_i$ is an identity morphisms
}
\end{exa}

\begin{exa} \label{ex:simpdeg} The degenerate simplices of the simplicial set of a simplicial complex.

\noindent 
{\rm For $n>0$, the degenerate $n$--simplex set ${}^s\sfK_n\clS$ of the simplicial set $\sfK\clS$ of
an ordered abstract simplicial complex $\clS$ (cf. ex. \cref{ex:simp}) consists of all simplices
$(v_0,\ldots,v_n),$ at least two of whose components $v_i$ are equal.
}
\end{exa}

Note that a degenerate simplex can have a non degenerate face and viceversa.



\subsection{\textcolor{blue}{\sffamily Simplicial objects}}\label{subsec:smplobjs}

Simplicial objects in a general category $\clC$ are defined in analogy to and generalize
simplicial sets.


\begin{defi} \label{def:sobj}
A simplicial object $\sfX$ in $\clC$ is a collection of non empty objects $\sfX_n$, $n\in\bbN$,
of $\clC$ and morphisms $d_{ni}:\sfX_n\rightarrow \sfX_{n-1}$, $n\geq 1$, $i=1,\dots,n$, and 
$s_{ni}:\sfX_n\rightarrow \sfX_{n+1}$, $n\geq 0$, $i=1,\dots,n$, of $\clC$ obeying the simplicial relations
\ceqref{smplsets1/5}. 
\end{defi}

\noindent The $d_{ni}$, $s_{ni}$ are called face and degeneracy morphism of $\sfX$. 

Instances of simplicial objects are encountered in many areas of mathematics.

\begin{exa} A simplicial set.

\noindent
{\rm A simplicial set $\sfX$ is just a simplicial object in the category
$\underline{\rm Set}$ of sets and functions.}
\end{exa}  

\begin{exa} \label{exa:simpgr} A simplicial group.

\noindent
{\rm A simplicial group $\sfX$
is a simplicial object in the category $\underline{\rm Grp}$ of groups and group homomorphisms for which the objects
$\sfX_n$ are groups and the face and degeneracy morphisms are group morphisms between them.}
\end{exa}

\begin{exa} A simplicial manifold.

\noindent
{\rm A simplicial manifold $\sfX$
is a simplicial object in the category $\underline{\rm Mnfld}$ of smooth manifolds and manifold maps for which the objects
$\sfX_n$ are smooth manifolds and the face and degeneracy morphisms are smooth maps between them.}
\end{exa}

In this paper, we shall deal specifically with simplicial Hilbert spaces.
This kind of simplicial objects 
emerge quite natural in the construction of sect. \cref{sec:setup}.

\begin{exa} A simplicial Hilbert space.

\noindent
{\rm  A simplicial Hilbert space is a simplicial object in
the category $\underline{\rm Hilb}$
of finite dimensional Hilbert spaces and linear maps.}
\end{exa}

There exists an obvious notion of simplicial object morphism which generalizes that of simplicial set morphism
of def. \cref{def:smor}. 

\begin{defi} \label{def:sobjmor}
A morphism $\phi:\sfX\rightarrow\sfX'$ of simplicial objects $\sfX$, $\sfX'$ of $\clC$
is a collection of morphisms $\phi_n:\sfX_n\rightarrow\sfX'{}_n$ with $n\geq 0$ obeying
the relations \ceqref{smplsets6/7}. 
\end{defi}

\noindent
Note that every simplicial object in a concrete category $\clC$ is also a simplicial set.
In this paper, we consider mainly simplicial objects of this kind.


\subsection{\textcolor{blue}{\sffamily Simplicial homology}}\label{subsec:simplhomol}

Homology is a basic structure playing an important role in our analysis. We introduce the notion
first from a purely algebraic point of view. Later, we concentrate on simplicial homology.

\begin{defi} An abstract chain complex $(\sfA,\delta)$ 
is a sequence of Abelian groups and group morphisms of the form 
\begin{equation}
\xymatrix@C=2.5pc
{\cdots
\ar[r]^{\delta_3}
&\sfA_2
\ar[r]^{\delta_2}
&\sfA_1\ar[r]^{\delta_1}&\sfA_0}
\label{nwhom1}
\end{equation}
such that the homological relations 
\begin{equation}
\delta_n\delta_{n+1}=0
\label{nwhom2}
\end{equation}
with $n\geq 1$ are satisfied.
\end{defi}

\noindent
The index $n$ labelling the segments of the sequence \ceqref{nwhom1} is called chain degree. $\sfA_n$
and $\delta_n$ are named respectively chain group and boundary morphism at degree $n$. 

By virtue of \ceqref{nwhom2}, we have that $\ran\delta_{n+1}\subseteq\ker\delta_n$. The sequence \ceqref{nwhom1}
would be exact if $\ran\delta_{n+1}=\ker\delta_n$, but this is not the case in general.  
The homology of $(\sfA,\delta)$ measures the failure of \ceqref{nwhom1} to be exact.

\begin{defi} For $n\geq 0$, the homology group of degree $n$ of the chain complex $(\sfA,\delta)$ is
the quotient group
\begin{equation}
\rmH_n(\sfA,\delta)=\ker\delta_n/\ran\delta_{n+1}.
\label{nwhom3}
\end{equation}
\end{defi}

\noindent Above, it is conventionally assumed that $\ker\delta_0=\sfA_0$. The homology
groups $\rmH_n(\sfA,\delta)$ constitute the homology $\rmH(\sfA,\delta)$ of $(\sfA,\delta)$.

There is a natural notion of morphism of chain complexes.  

\begin{defi} \label{def:chcxmorph}
A morphism $g:(\sfA,\delta)\rightarrow(\sfA',\delta')$ of chain complexes 
is a diagram of Abelian groups and group morphisms of the form
\begin{equation}
\xymatrix@C=2.5pc@R=2.pc
{\cdots
\ar[r]^{\delta_3}
&\sfA_2\ar[r]^{\delta_2}\ar[d]^{g_2}
&\sfA_1\ar[r]^{\delta_1}\ar[d]^{g_1}
&\sfA_0\ar[d]^{g_0}\\
\cdots
\ar[r]^{\delta'{}_3}
&\sfA'{}_2
\ar[r]^{\delta'{}_2}
&\sfA'{}_1\ar[r]^{\delta'{}_1}&\sfA'{}_0
}
\label{nwhom4}
\end{equation}
obeying the commutativity conditions 
\begin{equation}
g_{n-1}\delta_n=\delta'{}_ng_n
\label{nwhom5}
\end{equation}
with $n\geq 1$. 
\end{defi}

\noindent
Chain complexes and chain complex morphisms form so a category. 

A chain complex morphism $g:(\sfA,\delta)\rightarrow(\sfA',\delta')$ induces a 
homology morphisms $g_{*n}:\rmH_n(\sfA,\delta)\rightarrow\rmH_n(\sfA',\delta')$ for each $n$.
Homology has thus functorial properties. 


Cohomology is the dual notion of homology. All the basic definitions of cohomology can be obtained
by those of homology roughly by inverting all the arrows. In particular, for any
abstract cochain complex $(\sfM,\sigma)$ one can define the cohomology groups $\rmH^n(\sfM,\sigma)$ for
any degree $n\geq 0$. We leave to the reader the straightforward task of spelling this out in full detail. 

The above homological algebraic framework can be employed in simplicial set theory leading to the
formulation of simplicial homology.  

Let $\sfG$ be an Abelian simplicial group (cf. subsect. \cref{subsec:smplobjs}). Using the face maps of
$\sfG$ as constitutive elements, one can construct a series of Abelian group morphisms 
$d_n:\sfG_n\rightarrow \sfG_{n-1}$ for $n\geq 1$ viz
\begin{align}
&d_n=\mycom{{}_\sss}{{}_{0\leq i\leq n}}(-1)^id_{ni}.
\label{nwhom11}
\end{align}
These Abelian groups and group morphisms fit in the diagram 
\begin{equation}
\xymatrix@C=2.5pc
{\cdots
\ar[r]^{d_3}&\sfG_2\ar[r]^{d_2}&\sfG_1\ar[r]^{d_1}&\sfG_0}.
\label{nwhom16}
\end{equation}
By the simplicial relations \ceqref{smplsets1}, the $d_n$ obey further the homological relations 
\begin{align}
&d_nd_{n+1}=0,  
\label{nwhom13}
\end{align}
By \ceqref{nwhom13}, the diagram \ceqref{nwhom16} constitutes a chain complex $(\sfG,d)$,
the simplicial chain complex of $\sfG$. For each $n$, $\sfG_n$ and $d_n$ are respectively
the group of simplicial chains and the simplicial boundary morphism at degree $n$. 

The homology $\rmH(\sfG)\equiv\rmH(\sfG,d)$ is called the simplicial homology of $\sfG$.
$\rmH(\sfG)$ characterizes $\sfG$ and provides valuable information on the data which underlie
and specify it. 

Denote by ${}^s\sfG_n$ the subgroup of $\sfG_n$ generated by the degenerate $n$--simplices
(cf. subsect. \cref{subsec:smplsets}).
The quotient group $\ol{\sfG}_n=\sfG_n/{}^s\sfG_n$ is the normalized $n$--chain group. 
Owing to relations \ceqref{smplsets2}--\ceqref{smplsets4}, $d_n{}^s\sfG_n\subseteq{}^s\sfG_{n-1}$ and 
therefore an induced map $\ol{d}_n:\ol{\sfG}_n\rightarrow\ol{\sfG}_{n-1}$ is defined.
$\ol{d}_n$ obeys the homological relation \ceqref{nwhom13}. The normalized chain complex 
\begin{equation}
\xymatrix@C=2.7pc
{\cdots
\ar[r]^{\ol{d}_3}
&\ol{\sfG}_2
\ar[r]^{\ol{d}_2}
&\ol{\sfG}_1\ar[r]^{\ol{d}_1}&\ol{\sfG}_0}
\label{simplhomol6}
\end{equation}
is in this way constructed. The homology $\rmH(\ol{\sfG})\equiv\rmH(\ol{\sfG},\ol{d})$ of $(\ol{\sfG},\ol{d})$
is the normalized simplicial homology of $\sfG$. The following theorem establishes that normalized simplicial homology
is just another incarnation of simplicial homology. 

\begin{theor} \label{pop:eilmclane}
(Normalization theorem \ccite{Eilenberg:1953norm}) The simplicial homology $\rmH(\sfG)$
and normalized simplicial homology $\rmH(\ol{\sfG})$ of $\sfG$ are isomorphic. 
\end{theor}

\noindent 
with $\rmC(\sfX,\msA)$ replaced 
This result reveals that the degenerate chains are homologically irrelevant in the computation
of the simplicial homology $\rmH(\sfG)$ and can be used in principle to simplify the computation of this latter.

Let $\sfX$ be a simplicial set and $\msA$ an Abelian group. For $n\in\bbN$, the 
group of $n$--chains of $\sfX$ with coefficients in $\msA$ is the Abelian group 
\begin{equation}
\sfC_n(\sfX,\msA)=\bbZ[\sfX_n]\otimes\msA,
\label{simplhomol1}
\end{equation}
where $\bbZ[\sfX_n]$ denotes the free Abelian group generated by the $n$--simplex set $\sfX_n$.
The face and degeneracy maps $d_{ni}$, $s_{ni}$ of $\sfX$ extend uniquely to Abelian group
morphisms $d_{ni}:\sfC_n(\sfX,\msA)\rightarrow\sfC_{n-1}(\sfX,\msA)$,
$s_{ni}:\sfC_n(\sfX,\msA)\rightarrow\sfC_{n+1}(\sfX,\msA)$. These extensions obey the simplicial
relations \ceqref{smplsets1/5}. So, the groups $\sfC_n(\sfX,\msA)$ and the morphisms $d_{ni}$, $s_{ni}$
build up a simplicial Abelian group $\sfC(\sfX,\msA)$. By the general construction described above,
it is then possible to construct via \ceqref{nwhom11} boundary morphisms 
\begin{equation}
\partial_n\sigma_n=\mycom{{}_\sss}{{}_{0\leq i\leq n}}(-1)^id_{ni}\sigma_n
\label{simplhomol2}
\end{equation}
obeying the homological relations
\begin{equation}
\partial_n\partial_{n+1}=0.
\label{simplhomol3}
\end{equation}
With $\sfX$ and $\msA$, so, there is associated a chain complex $(\sfC(\sfX,\msA),\partial)$.

\begin{defi} \label{def:simplhomolx}
The simplicial homology $\rmH(\sfX,\msA)$ of $\sfX$ with coefficients in $\msA$ is the simplicial homology
$\rmH(\sfC(\sfX,\msA))$ of the simplicial Abelian group $\sfC(\sfX,\msA)$. 
The normalized simplicial homology $\ol{\rmH}(\sfX,\msA)$ of $\sfX$ with coefficients in $\msA$ 
is the associated normalized simplicial homology $\rmH(\ol{\sfC}(\sfX,\msA))$. 
\end{defi}

By the normalization theorem \ceqref{pop:eilmclane}, the simplicial and normalized simplicial homologies
$\rmH(\sfX,\msA)$, $\ol{\rmH}(\sfX,\msA)$ are isomorphic. The computation of the latter is however generally 
simpler than that of the former. For this reason, simplicial homology is sometimes defined directly
as normalized simplicial homology in the mathematical literature. 

\begin{exa}\label{ex:hnerve} The homology of the delooping of a group. 

\noindent
{\rm A group $\msG$ can be regarded as a groupoid $\rmB\msG$ with a single object whose morphisms are
in one--to--one correspondence with the elements of $\msG$, the so called delooping of $\msG$.
The simplicial homology of the nerve $\sfN\rmB\msG$ of $\rmB\msG$ with coefficients in $\bbZ$,
$\rmH(\sfN\rmB\msG,\bbZ)$ can be shown to be isomorphic to the group homology $\rmH(\msG)$. 
}
\end{exa}

\begin{exa} \label{ex:hsimp} The homology of the simplicial set of an ordered abstract simplicial complex.

\noindent
{\rm It is possible to define the simplicial homology of a simplicial complex with a given coefficient Abelian group
on the same lines as that of a simplicial set. The simplices of an ordered simplicial complex $\clS$ are precisely the non  
degenerate simplices of the associated simplicial set $\sfK\clS$ studied in ex. \cref{ex:simp}.
By the normalization 
theorem \cref{pop:eilmclane}, the simplicial homology $\rmH(\clS,\msA)$ is then isomorphic
to the simplicial homology $\rmH(\sfK\clS,\msA)$.}  
\end{exa}

The above set--up can be refined by working with vector spaces or modules instead than Abelian groups
in obvious fashion, resulting in (co)homology spaces or modules, respectively.

The Betti numbers with coefficients in a field $\bbF$ of a simplicial set $\sfX$ are
\begin{equation}
\beta_n(\sfX,\bbF)=\dim\rmH_n(\sfX,\bbF).
\label{betti}
\end{equation} 
The Betti numbers are topological invariants of the topological space ${}^\sharp\sfX$
realizing $\sfX$ and characterize it based on its connectivity. 
For $\bbF=\bbR,\bbC$ and low degrees, the Betti numbers have simple intuitive topological interpretations:
$\beta_0(\sfX,\bbF)$ is the number of connected components of ${}^\sharp\sfX$,
$\beta_1(\sfX,\bbF)$ is the number of holes of ${}^\sharp\sfX$, 
$\beta_2(\sfX,\bbF)$ is the number of voids of ${}^\sharp\sfX$ etc. 
If ${}^\sharp\sfX$ is a $d$--dimensional topological manifold, $\beta_n(\sfX,\bbF)=0$
for $n>d$.

Much of computational topology aims to the computation of
the Betti numbers for the important topological information they furnish about ${}^\sharp\sfX$.  
For instance, finding out that the Betti numbers $\beta_n(\sfX,\bbR)$ vanish for $n>d$ for some $d$
is indication that ${}^\sharp\sfX$ is a $d$--dimensional manifold. 


\vfill\eject

\renewcommand{\sectionmark}[1]{\markright{\thesection\ ~~#1}}

\section{\textcolor{blue}{\sffamily Quantum simplicial set framework}}\label{sec:setup} 

In this section we shall work out and study in detail the quantum simplicial set theoretic set--up 
outlined in the introduction. 

The quantum simplicial set framework furnishes a natural backdrop for the theoretical analysis
and eventual implementation of simplicial quantum algorithms for computational topology. It is essentially an instance
of quantum basis encoding of classical data, where the latter are just basic simplicial data. By virtue
of it, a given parafinite simplicial set is encoded into a finite dimensional simplicial Hilbert
space much as a qubit register is into a finite dimensional Hilbert space. Correspondingly, the simplices and face and
degeneracy maps of the simplicial set convert into the basis vectors and face and degeneracy operators of the simplicial
Hilbert space. The Hilbert dagger structure provides however us also with the adjoints of these operators,
making possible novel constructions. 

Extending the quantum simplicial framework beyond the range of parafinite simplicial sets is not feasible
because the intrinsic limitations of implementable computation: any conceivable simplicial computer
can operate only on a finite number of simplices of each given degree. Therefore, parafinite simplicial sets
are the most general kind of simplicial sets that can be handled by such a device.

A criticism that can be levelled at the framework is that it still incorporates the simplex Hilbert spaces storing
the simplicial data of all degrees, while a realistic computer can manage the simplicial data up to a finite maximum 
degree. In fact, doing so is only a convenient abstraction to simplify the analysis. 
In sect. \cref{sec:simpcircappls}, we shall show how to deal this problem by simplicial truncation or skeletonization
of the underlying simplicial set. 

The quantum simplicial framework will enable us to analyze the simplicial homology of the simplicial Hilbert space,
show how it is controlled by appropriate simplicial Hilbert Laplacians and prove its isomorphism 
to the simplicial homology of the underlying simplicial set. It will also gives us the means to construct
the simplicial Hilbert space's appropriate form of normalized simplicial homology and show its isomorphism to the
simplicial set's normalized simplicial homology. Last but not least, it will suggest us a suitable notion of simplicial
quantum circuits as the kind of quantum circuits capable of performing simplicial computations. Though we concentrate
on homology, applications also to homotopy are conceivable. 








\subsection{\textcolor{blue}{\sffamily Hilbert space encoding of a simplicial set}}\label{subsec:qusmplx}

In this subsection, we shall show how a given parafinite simplicial set $\sfX$
(cf. defs. \cref{def:sset}, \cref{def:parafinite}) can be encoded in a 
simplicial finite dimensional Hilbert structure. 

\begin{defi}
 For $n\in\bbN$, the $n$--simplex Hilbert space $\scH_n$ is the Hilbert space generated 
 by the $n$--simplex set $\sfX_n$.
\end{defi}

\noindent
$\scH_n$ has thus a canonical orthonormal basis $\ket{\sigma_n}$ labelled by the $n$--simplices $\sigma_n\in \sfX_n$.
In the following, we shall refer to the basis $\ket{\sigma_n}$
as the $n$--simplex basis. It plays a role analogous to the computational basis
in familiar quantum computing.
Since $\sfX_n$ is a finite set, $\scH_n$ is finite dimensional.

The face and degeneracy maps $d_{ni}$ and $s_{ni}$ induce face and degeneracy operators
characterized by their action on the vectors of the $n$--simplex basis.
\begin{defi}
The face operators 
$D_{ni}:\scH_n\rightarrow \scH_{n-1}$, $i=0,\dots,n$ and $n\geq 1$,
and degeneracy operators $S_{ni}:\scH_n\rightarrow \scH_{n+1}$ and $i=0,\dots,n$ and $n\geq 0$.
$D_{ni}$, $S_{ni}$ are 
\begin{align}
&D_{ni}=\mycom{{}_\sss}{{}_{\sigma_n\in \sfX_n}}\ket{d_{ni}\sigma_n}\hfpt\bra{\sigma_n},
\label{qusmplx1}
\\
&S_{ni}=\mycom{{}_\sss}{{}_{\sigma_n\in \sfX_n}}\ket{s_{ni}\sigma_n}\hfpt\bra{\sigma_n}.
\label{qusmplx2}
\end{align}
\end{defi}

The Hilbert dagger structure of the $\scH_n$ allows us to define the adjoint operators  
$D_{ni}{}^+:\scH_{n-1}\rightarrow \scH_n$, $S_{ni}{}^+:\scH_{n+1}\rightarrow \scH_n$
of $D_{ni}$, $S_{ni}$. They are given by 
\begin{align}
&D_{ni}{}^+
=\mycom{{}_\sss}{{}_{\sigma_{n-1}\in \sfX_{n-1}}}\mycom{{}_\sss}{{}_{\omega_n\in\sfD_{ni}(\sigma_{n-1})}}\ket{\omega_n}\hfpt\bra{\sigma_{n-1}},
\label{qusmplx3}
\\
&S_{ni}{}^+
=\mycom{{}_\sss}{{}_{\sigma_{n+1}\in \sfX_{n+1}}}\mycom{{}_\sss}{{}_{\omega_n\in\sfS_{ni}(\sigma_{n+1})}}\ket{\omega_n}\hfpt\bra{\sigma_{n+1}},
\label{qusmplx4}
\end{align}
where $\sfD_{ni}(\sigma_{n-1})$, $\sfS_{ni}(\sigma_{n+1})\subset \sfX_n$ are the $n$-simplex subsets
\begin{align}
&\sfD_{ni}(\sigma_{n-1})=\{\omega_n\in \sfX_n|d_{ni}\omega_n=\sigma_{n-1}\},
\label{qusmplx5}
\\
&\sfS_{ni}(\sigma_{n+1})=\{\omega_n\in \sfX_n|s_{ni}\omega_n=\sigma_{n+1}\}.  
\label{qusmplx6}
\end{align}
An important problem of the theory is the determination of the content of 
$\sfD_{ni}(\sigma_{n-1})$, $\sfS_{ni}(\sigma_{n+1})$ for any assignment of $\sigma_{n-1},\,\sigma_{n+1}$.
Through the $\sfD_{ni}(\sigma_{n-1})$, $\sfS_{ni}(\sigma_{n+1})$,
the adjoint operators $D_{ni}{}^+$, $S_{ni}{}^+$ encode basic features of and provide important
information about the underlying simplicial set $\sfX$. Not much can be said about
$\sfD_{ni}(\sigma_{n-1}),\,\sfS_{ni}(\sigma_{n+1})$ in general. The following general properties
hold anyway. 
Since $d_{ni}$ is surjective while $s_{ni}$ generally is not, 
$\sfD_{ni}(\sigma_{n-1})$ is always non empty while $\sfS_{ni}(\sigma_{n+1})$ may be empty.
Further, as $s_{ni}$ is injective while $d_{ni}$ generally is not, $|\sfS_{ni}(\sigma_{n+1})|\leq 1$
while $|\sfD_{ni}(\sigma_{n-1})|\geq 1$. 

The basic simplicial relations \ceqref{smplsets1/5}
obeyed by the face and degeneracy maps $d_{ni}$, $s_{ni}$ imply that 
the face and degeneracy operators $D_{ni}$, $S_{ni}$ and their adjoints $D_{ni}{}^+$, $S_{ni}{}^+$
satisfy a host of {\it exchange identities} relating products of pairs of these latter.
These characterize to a considerable extent
the simplicial Hilbert framework constructed in this paper 
and are therefore analyzed in detail in the rest of this subsection. 

The exchange identities come in pairs related by adjunction.
We shall show explicitly only one the two relations of each pair leaving to the reader the rather straightforward
task of writing down the other. 

The exchange identities involving the $D_{ni}$, $S_{ni}$ ensue directly from \ceqref{qusmplx1}, \ceqref{qusmplx1}
and the simplicial relations \ceqref{smplsets1/5}
\begin{subequations}
\label{qusmplx7/11}
\begin{align}
&D_{n-1i}D_{nj}-D_{n-1j-1}D_{ni}=0&& \text{for ~$0\leq i,j\leq n$, $i<j$}, 
\label{qusmplx7}
\\
&D_{n+1i}S_{nj}-S_{n-1j-1}D_{ni}=0&& \text{for ~$0\leq i,j\leq n$, $i<j$}, 
\label{qusmplx8}
\\
&D_{n+1i}S_{nj}=1_n&&\text{for ~$0\leq j\leq n$, $i=j,\,j+1$}, 
\label{qusmplx9}
\\
&D_{n+1i}S_{nj}-S_{n-1j}D_{ni-1}=0&& \text{for ~$0\leq i,j\leq n+1$, $i>j+1$}, 
\label{qusmplx10}
\\
&S_{n+1i}S_{nj}-S_{n+1j+1}S_{ni}=0&& \text{for ~$0\leq i,j\leq n$, $i\leq j$},
\label{qusmplx11}
\end{align}
\end{subequations}
where $1_n=1_{\scH_n}$. 
The \ceqref{qusmplx7/11} are identical in form to the simplicial relations
and for this reason are referred to as the simplicial Hilbert identities \ceqref{smplsets1/5}. 
The exchange relations involving the $D_{ni}{}^+$, $S_{ni}{}^+$
are obtained from the \ceqref{qusmplx7/11} by adjunction. They are given by 
the \ceqref{qusmplx7/11} except for the order of the factors of each term, which is reversed. 
They are formally equal to the simplicial theoretic cosimplicial relations
and so are called the cosimplicial Hilbert identities.

The exchange identities involving one operator of each of the operator sets $D_{ni}$, $S_{ni}$
and $D_{ni}{}^+$, $S_{ni}{}^+$ are not so easily obtained and do not take a form analogous to that of
the simplicial and cosimplicial Hilbert identities. These mixed identities exhibit however
an analogous structure. Explicitly, they read as 
\begin{subequations}
\label{qusmplx12/15}
\begin{align}
&D_{ni}{}^+D_{nj}-D_{n+1j+1}D_{n+1i}{}^+=\varDelta^{DD}{}_{nij}&& \text{for ~$0\leq i,j\leq n$, $i\leq j$}, 
\label{qusmplx12}
\\
&D_{n+2i}{}^+S_{nj}-S_{n+1j+1}D_{n+1i}{}^+=\varDelta^{DS}{}_{nij}&& \text{for ~$0\leq i,j\leq n$, $i\leq j$}, 
\label{qusmplx13}
\\
&S_{n-2i}{}^+D_{nj}-D_{n-1j-1}S_{n-1i}{}^+=\varDelta^{SD}{}_{nij}&& \text{for ~$0\leq i,j\leq n$, $i+1<j$}, 
\label{qusmplx14}
\\
&S_{ni}{}^+S_{nj}-S_{n-1j-1}S_{n-1i}{}^+=\varDelta^{SS}{}_{nij}&& \text{for ~$0\leq i,j\leq n$, $i<j$}, 
\label{qusmplx15}
\end{align}
\end{subequations}
where
\begin{subequations}
\label{qusmplx16/19}
\begin{align}
&\varDelta^{DD}{}_{nij}
=-\mycom{{}_\sss}{{}_{\sigma_n\in \sfX_n}}\mycom{{}_\sss}{{}_{\omega_n\in\sfD_{ni}(d_{nj}\sigma_n)}}
\ket{\omega_n}(|\sfD_{n+1i}(\sigma_n)\cap\sfD_{n+1j+1}(\omega_n)|-1)\bra{\sigma_n},
\label{qusmplx16}
\\
&\varDelta^{DS}{}_{nij}
=-\mycom{{}_\sss}{{}_{\sigma_n\in \sfX_n}} \mycom{{}_\sss}{{}_{\omega_{n+2}\in\sfD_{n+2i}(s_{nj}\sigma_n)}}
\ket{\omega_{n+2}}(|\sfD_{n+1i}(\sigma_n)\cap\sfS_{n+1j+1}(\omega_{n+2})|-1)\bra{\sigma_n},
\label{qusmplx17}
\\
&\varDelta^{SD}{}_{nij}
=-\mycom{{}_\sss}{{}_{\sigma_n\in \sfX_n}} \mycom{{}_\sss}{{}_{\omega_{n-2}\in\sfS_{n-2i}(d_{nj}\sigma_n)}}
\ket{\omega_{n-2}}(|\sfS_{n-1i}(\sigma_n)\cap\sfD_{n-1j-1}(\omega_{n-2})|-1)\bra{\sigma_n},
\label{qusmplx18}
\\
&\varDelta^{SS}{}_{nij}
=-\mycom{{}_\sss}{{}_{\sigma_n\in \sfX_n}}\mycom{{}_\sss}{{}_{\omega_n\in\sfS_{ni}(s_{nj}\sigma_n)}}
\ket{\omega_n}(|\sfS_{n-1i}(\sigma_n)\cap\sfS_{n-1j-1}(\omega_n)|-1)\bra{\sigma_n}.
\label{qusmplx19}
\end{align}
\end{subequations}
Recall that $|I|$ denotes the cardinality of a set $I$.
The expressions shown result from straightforward computations relying solely on the expressions
\ceqref{qusmplx1}, \ceqref{qusmplx2} and \ceqref{qusmplx3}, \ceqref{qusmplx4} of the operators
$D_{ni}$, $S_{ni}$ and $D_{ni}{}^+$, $S_{ni}{}^+$ and the basic simplicial relations \ceqref{smplsets1/5}.


\begin{defi}
The defects of the simplicial Hilbert structure are the 
four operators
$\varDelta^{DD}{}_{nij}:\scH_n\rightarrow \scH_n$, $0\leq i,j\leq n$, $i\leq j$,
$\varDelta^{DS}{}_{nij}:\scH_n\rightarrow \scH_{n+2}$, $0\leq i,j\leq n$, $i\leq j$,
$\varDelta^{SD}{}_{nij}:\scH_n\rightarrow \scH_{n-2}$, $0\leq i,j\leq n$, $i+1<j$, 
and $\varDelta^{SS}{}_{nij}:\scH_n\rightarrow \scH_n$, $0\leq i,j\leq n$, $i<j$,
given by the \ceqref{qusmplx16/19}. 
\end{defi}

A basic problem of the theory is determining under which conditions some or all 
defects vanish and identifying the simplicial sets for which such conditions are
fulfilled. For the $SS$ defects $\varDelta^{SS}{}_{nij}$, the problem however does not arise 
by the following theorem. 

\begin{theor}\label{prop:vanidect1} (No degeneracy defect theorem) We have
\begin{align}
&\varDelta^{SS}{}_{nij}=0&& \text{for ~$0\leq i,j\leq n$, $i<j$}.
\label{qusmplx20}
\end{align}
\end{theor}

\begin{proof}
See app. \cref{app:vanidect1} for the proof.
\end{proof}

The defects $\varDelta^{DD}{}_{nij}$, $\varDelta^{DS}{}_{nij}$, $\varDelta^{SD}{}_{nij}$ conversely are non zero in general. 
The exchange identities of the operators $S_{ni}$, $S_{ni}{}^+$ have in this way a simple
universal form akin to that of simplicial and cosimplicial Hilbert identities. 
The exchange identities of the operators $D_{ni}$, $D_{ni}{}^+$, $S_{ni}$, $D_{ni}{}^+$ and $D_{ni}$, $S_{ni}{}^+$ 
instead do not.

The concrete form the $\varDelta^{DD}{}_{nij}$, $\varDelta^{DS}{}_{nij}$, $\varDelta^{SD}{}_{nij}$
take depends on the underlying simplicial set $\sfX$. 
In fact, they encode features of $\sfX$ not deducible from
the simplicial relations and hence specific of $\sfX$.
For distinguished instances of simplicial sets, there exist perfectness results
establishing the vanishing of some of these defects. In this regard, the following definition is apposite. 


\begin{defi}\label{def:perfect}
The simplicial set $\sfX$ is said to be semi perfect if
\begin{align}
&\varDelta^{DS}{}_{nij}=0 && \text{for ~$0\leq i,j\leq n$, $i\leq j$},
\label{qusmplx22}
\\
&\varDelta^{SD}{}_{nij}=0 && \text{for ~$0\leq i,j\leq n$, $i+1<j$}.
\label{qusmplx23}
\intertext{$\sfX$ is said to be quasi perfect if \ceqref{qusmplx22}, \ceqref{qusmplx23} hold and further}
&\varDelta^{DD}{}_{nij}=0 && \text{for ~$0\leq i,j\leq n$, $i<j$}.
\label{qusmplx21}
\intertext{$\sfX$ is said to be perfect if \ceqref{qusmplx22}, \ceqref{qusmplx23} hold and further}
&\varDelta^{DD}{}_{nij}=0 && \text{for ~$0\leq i,j\leq n$, $i\leq j$}.
\label{qusmplx24}
\end{align}
\end{defi}

\begin{exa}\label{ex:nerveperf} The nerve of a category.

\noindent  
{\rm Nerves of categories are a special type of simplicial sets introduced in ex. \cref{ex:nerve}.

\begin{prop} (Perfectness proposition for nerves of categories) \label{prop:vanidect2a}
  The nerve $\sfN\clC$ of a finite category $\clC$ is quasi perfect. If $\clC$ is also 
  a groupoid then $\sfN\clC$ is perfect.
\end{prop}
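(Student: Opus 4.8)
The plan is to read each defect formula \ceqref{qusmplx16}--\ceqref{qusmplx18} as a statement about \emph{fillers}. Indeed, $\varDelta^{DD}{}_{nij}$ vanishes precisely when, for every $\sigma_n$ and every $\omega_n\in\sfD_{ni}(d_{nj}\sigma_n)$, there is a \emph{unique} $(n{+}1)$--simplex $\tau$ with $d_{n+1i}\tau=\sigma_n$ and $d_{n+1j+1}\tau=\omega_n$, i.e. $|\sfD_{n+1i}(\sigma_n)\cap\sfD_{n+1j+1}(\omega_n)|=1$; likewise $\varDelta^{DS}{}_{nij}$ and $\varDelta^{SD}{}_{nij}$ vanish iff the relevant intersection with a degeneracy preimage set has exactly one element. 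Since $\sfN_m\clC$ consists of chains $(g_1,\dots,g_m)$ of composable morphisms, with $d_{mi}$, $s_{mi}$ acting by composing at / inserting an identity at a vertex (eqs. \ceqref{smplsets10}, \ceqref{smplsets12}), I would analyze each filler by bookkeeping which individual factors $g_k$ are recoverable from the prescribed faces. The compatibility of $\sigma,\omega$ (built into the summation ranges) will always secure \emph{existence}; the real issue is \emph{uniqueness}.

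First, semi perfectness, i.e. the vanishing of $\varDelta^{DS}{}_{nij}$ and $\varDelta^{SD}{}_{nij}$ for any finite $\clC$. For $\varDelta^{DS}{}_{nij}$ I would start from the constraint $d_{n+2i}\omega_{n+2}=s_{nj}\sigma_n$. The identity inserted by $s_{nj}$ sits at position $j+1$, and since $i\le j$ gives $j+1>i$, the matching position of $d_{n+2i}\omega_{n+2}$ is an uncomposed morphism, forcing the $(j{+}2)$--th morphism of $\omega_{n+2}$ to be an identity; hence $\omega_{n+2}$ lies in the image of $s_{n+1j+1}$, which by injectivity of the degeneracy maps (cf. the remark after def. \cref{def:sset}) pins down a unique $\tau\in\sfS_{n+1j+1}(\omega_{n+2})$. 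The simplicial identity \ceqref{smplsets2} then gives $s_{nj}d_{n+1i}\tau=d_{n+2i}s_{n+1j+1}\tau=s_{nj}\sigma_n$, and injectivity of $s_{nj}$ yields $d_{n+1i}\tau=\sigma_n$, so the intersection is a singleton. The argument for $\varDelta^{SD}{}_{nij}$ is entirely parallel: the range condition $i+1<j$ again places the forced identity in an uncomposed slot, and \ceqref{smplsets4} replaces \ceqref{smplsets2}. Neither step uses invertibility, so semi perfectness holds for every finite category.

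Next, the $\varDelta^{DD}{}_{nij}$ with $i<j$, which upgrades to quasi perfectness. Viewing $\tau=(g_1,\dots,g_{n+1})$, the $i$--th face loses only the individual factors $g_i,g_{i+1}$ (retaining their composite), while the $(j{+}1)$--th face loses only $g_{j+1},g_{j+2}$. Since $i<j$, both $g_i$ and $g_{i+1}$ occupy positions $\le j$ and so are read off individually from $\omega=d_{n+1j+1}\tau$, and both $g_{j+1},g_{j+2}$ occupy positions $\ge i+2$ and so are read off from $\sigma=d_{n+1i}\tau$; every remaining factor is read off directly from either face. Thus all $g_k$ are determined, the filler is unique, and $\varDelta^{DD}{}_{nij}=0$. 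Again no invertibility is needed.

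Finally, the crux: $\varDelta^{DD}{}_{nii}$, the case of two \emph{adjacent} faces, which is where the groupoid hypothesis enters and which I expect to be the main obstacle. Writing $\sigma=(a_1,\dots,a_n)=d_{n+1i}\tau$ and $\omega=(b_1,\dots,b_n)=d_{n+1i+1}\tau$, these determine every factor except the middle one $g_{i+1}$, with $g_i=b_i$ and $g_{i+2}=a_{i+1}$ already known, and $g_{i+1}$ only constrained by the two lost composites $g_{i+1}g_i=a_i$ and $g_{i+2}g_{i+1}=b_{i+1}$. In a general category these need not have a unique solution, and the defect survives. But if $\clC$ is a groupoid, $g_i$ is invertible, so $g_{i+1}=a_i g_i^{-1}$ is the unique candidate, and the relation $a_{i+1}a_i=b_{i+1}b_i$ (the position-$i$ component of the compatibility $d_{ni}\sigma=d_{ni}\omega$) makes it satisfy the second equation as well, yielding existence and uniqueness of $\tau$. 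Hence $\varDelta^{DD}{}_{nii}=0$ and $\sfN\clC$ is perfect. The only delicate points I anticipate are the index bookkeeping when locating the forced identities for $\varDelta^{DS}{}_{nij}$, $\varDelta^{SD}{}_{nij}$ and the boundary cases $i=0$ or $j=n$, where a face truncates the chain instead of composing; these I would handle as direct specializations of the generic computation.
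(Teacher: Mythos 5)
Your proof is correct and follows essentially the same route as the paper's: each defect's vanishing is reduced to the existence and uniqueness of a filler with prescribed faces/degeneracy, analyzed on explicit chains of composable morphisms, with invertibility entering only for the adjacent-face defect $\varDelta^{DD}{}_{nii}$ via $g_{i+1}=a_ig_i{}^{-1}$ exactly as in the paper's $f_{i+1}=g_i\circ h_i{}^{-1}=g_{i+1}{}^{-1}\circ h_{i+1}$. Your handling of the $DS$/$SD$ cases --- deducing the forced identity component, hence membership in $\ran s$, and then invoking the simplicial identities \ceqref{smplsets2}, \ceqref{smplsets4} together with injectivity of the degeneracies --- is a slightly tidier packaging of the paper's fully written-out component computation, but it is the same argument.
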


\begin{proof}
See app. \cref{app:vanidect2} for the proof. 
\end{proof}
}
\end{exa}    

\begin{exa} \label{ex:simpperf} The simplicial set of an ordered abstract simplicial complex.

\noindent 
{\rm
Simplicial sets of an ordered abstract simplicial complex are another distinctive
type of simplicial sets introduced in ex. \cref{ex:simp}.

\begin{prop} (Perfectness proposition for simplicial sets of simplicial complexes) \label{prop:vanidect2b}
 The simplicial set $\sfK\clS$ of an ordered finite abstract simplicial complex $\clS$ is semi perfect.
\end{prop}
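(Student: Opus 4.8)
The plan is to establish the two operator identities that define semi perfectness, namely \ceqref{qusmplx13} together with $\varDelta^{DS}{}_{nij}=0$ and \ceqref{qusmplx14} together with $\varDelta^{SD}{}_{nij}=0$, by evaluating both composites on an arbitrary simplex vector $\ket{\sigma_n}$ with $\sigma_n=(v_0,\ldots,v_n)$. Throughout I would use the explicit action of the operators on the simplex basis read off from \ceqref{smplsets14}, \ceqref{smplsets15} and \ceqref{qusmplx1}--\ceqref{qusmplx4}: $D_{mk}$ deletes the entry in position $k$; $S_{mk}{}^+$ removes the repeated block at positions $k,k+1$ when $v_k=v_{k+1}$ and annihilates $\ket{\sigma_m}$ otherwise, since $s_{mk}$ is injective; and $D_{mk}{}^+$ sends $\ket{\sigma_{m-1}}$ to the sum of all $\ket{\omega_m}$ obtained by inserting at position $k$ an \emph{admissible} vertex $w$, that is a vertex with $v_{k-1}\leq w\leq v_k$ whose adjunction keeps the underlying vertex set in $\Simp_\clS$. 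Verifying these identities is equivalent to checking that the cardinalities entering \ceqref{qusmplx17} and \ceqref{qusmplx18} are identically $1$.

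For the $SD$ identity \ceqref{qusmplx14}, with $i+1<j$, both composites $S_{n-2i}{}^+D_{nj}$ and $D_{n-1j-1}S_{n-1i}{}^+$ are single valued, since each of $S^+$ and $D$ returns a single basis vector or zero. Applying $D_{nj}$ first deletes $v_j$, after which $S_{n-2i}{}^+$ is nonzero exactly when $v_i=v_{i+1}$ (the deletion at $j$ does not disturb positions $i,i+1$ because $i+1<j$) and then collapses that repeat; applying $S_{n-1i}{}^+$ first requires the same condition $v_i=v_{i+1}$ and collapses the repeat, after which $D_{n-1j-1}$ deletes the entry still equal to $v_j$, whose index has dropped from $j$ to $j-1$ precisely because the repeat lay to its left. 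Hence both orders yield the same sequence and their difference $\varDelta^{SD}{}_{nij}$ vanishes; the hypothesis $i+1<j$ is exactly what keeps the repeated block $\{i,i+1\}$ and the deleted slot $j$ in disjoint, non-adjacent position ranges.

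The $DS$ identity \ceqref{qusmplx13} is the delicate one, and I expect it to be the main obstacle, because here $D^+$ is genuinely many valued. On the left, $S_{nj}\ket{\sigma_n}$ duplicates $v_j$ and then $D_{n+2i}{}^+$ inserts an admissible $w$ at position $i$; on the right, $D_{n+1i}{}^+$ inserts an admissible $w$ at position $i$ of $\sigma_n$ and then $S_{n+1j+1}$ duplicates the entry sitting in position $j+1$, which for $i\leq j$ is always $v_j$. Since $i\leq j$, the inserted slot lies at or to the left of the duplicated block, so in both computations the surviving string is $v_0,\ldots,v_{i-1},w,v_i,\ldots,v_j,v_j,\ldots,v_n$. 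The crucial point is that the two admissibility conditions on $w$ coincide: on each side monotonicity forces $v_{i-1}\leq w\leq v_i$, and the simplex condition reads $\{v_0,\ldots,v_n,w\}\in\Simp_\clS$ in both cases, the extra repeated $v_j$ contributing no new vertex. This sets up a bijection between the two families of admissible insertions, so the intersection counted in \ceqref{qusmplx17} has cardinality exactly $1$ for every contributing $\omega_{n+2}$, giving $\varDelta^{DS}{}_{nij}=0$. Establishing \ceqref{qusmplx22} and \ceqref{qusmplx23} in this manner shows $\sfK\clS$ is semi perfect; I would not expect $\varDelta^{DD}{}_{nij}$ to vanish in general, since reinserting a deleted vertex can be realized by several admissible choices, which is why only semi perfectness is asserted.
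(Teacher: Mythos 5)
Your proof is correct and follows essentially the same route as the paper's (app. \cref{app:vanidect2}): both arguments reduce \ceqref{qusmplx22}, \ceqref{qusmplx23} via \ceqref{qusmplx17}, \ceqref{qusmplx18} to checking that the sets $\sfD_{n+1i}(\sigma_n)\cap\sfS_{n+1j+1}(\omega_{n+2})$ and $\sfS_{n-1i}(\sigma_n)\cap\sfD_{n-1j-1}(\omega_{n-2})$ are singletons, by writing simplices as nondecreasing vertex sequences and matching deletions, duplications and admissible insertions entry by entry (your observation that the repeated $v_j$ contributes no new vertex to the simplex condition is exactly the paper's point that $\zeta_{n+1}$ can be written using the $b$'s alone). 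Only your closing aside is off: the reason $\varDelta^{DD}{}_{nij}$ can fail to vanish is not that a deleted vertex can be reinserted in several ways --- the common refinement $\zeta_{n+1}$ with $d_{n+1i}\zeta_{n+1}=\sigma_n$, $d_{n+1j+1}\zeta_{n+1}=\omega_n$ is always unique as a string --- but that it may fail to lie in $\sfK_{n+1}\clS$ at all, since it interleaves vertices of $\sigma_n$ and $\omega_n$ and so need not be nondecreasing nor have vertex set in $\Simp_\clS$.
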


\begin{proof}
See again app. \cref{app:vanidect2} for the proof. 
\end{proof}
}
\end{exa}    

The mixed exchange identities \ceqref{qusmplx12/15} do not cover all possible products of one of the operators 
$D_{ni}$, $S_{ni}$ and one of the adjoint operators $D_{ni}{}^+$, $S_{ni}{}^+$.
The missing products are $D_{n+1i}D_{n+1i}{}^+$, $0\leq i\leq n+1$, $D_{n+2i+1}{}^+S_{ni}$, $0\leq i\leq n$,
$S_{n+1i}D_{n+1i}{}^+$, $0\leq i\leq n+1$ and $S_{ni}{}^+S_{ni}$, $0\leq i\leq n$. Some of 
these products will reemerge as elemental terms in the expression of the simplicial Hilbert Laplacians
studied in subsect. \cref{subsec:smplxlapl}. 

Every morphism $\phi:\sfX\rightarrow\sfX'$ of parafinite simplicial sets $\sfX$, $\sfX'$
(cf. def. \cref{def:smor}) also has a simplicial Hilbert encoding. 


\begin{defi} \label{def:morphop}
The morphism operators $\varPhi_n:\scH_n\rightarrow\scH'{}_n$, $n\geq 0$, of $\phi$ 
are 
\begin{equation}
\varPhi_n=\mycom{{}_\sss}{{}_{\sigma_n\in \sfX_n}}\ket{\phi_n\sigma_n}\bra{\sigma_n}.
\label{hsmor0}
\end{equation}
\end{defi}

By the relations \ceqref{smplsets6/7} and \ceqref{qusmplx1}, \ceqref{qusmplx2}, the $\varPhi_n$ satisfy 
\begin{subequations}
\label{hsmor1/2}
\begin{align}
&\varPhi_{n-1}D_{ni}-D'{}_{ni}\varPhi_n=0 && \text{if ~$0\leq i\leq n$}, 
\label{hsmor1}
\\
&\varPhi_{n+1}S_{ni}-S'{}_{ni}\varPhi_n=0 && \text{if ~$0\leq i\leq n$}.
\label{hsmor2}
\end{align}
\end{subequations}
The \ceqref{hsmor1/2} are identical in form to the simplicial morphism relations
and for this reason are referred to as the simplicial Hilbert morphism identities.

Again, the Hilbert dagger structure of the $\scH_n$, $\scH'{}_n$ allows us to define the adjoint operators  
$\varPhi_n{}^+:\scH'{}_n\rightarrow\scH_n$, $n\geq 0$, 
which in terms of the simplex basis read 
\begin{equation}
\varPhi_n{}^+=\mycom{{}_\sss}{{}_{\sigma'{}_n\in \sfX'{}_n}}
\mycom{{}_\sss}{{}_{\omega_n\in\sfX_n,\phi_n\omega_n=\sigma'{}_n}}\ket{\omega_n}\bra{\sigma'{}_n}\pagebreak 
\label{hsmor3}
\end{equation}
The $\varPhi_n{}^+$ obey the identities following from the \ceqref{hsmor1/2} by adjunction.
They have the same form as the \ceqref{hsmor1/2} except for the reversed order of the factors 
and are therefore called cosimplicial Hilbert morphism relations.


\subsection{\textcolor{blue}{\sffamily The simplicial Hilbert space of a simplicial set }}\label{subsec:hilbfunct}

In this subsection, we shall show how the Hilbert space encoding of a simplicial set described in subsect.
\cref{subsec:qusmplx} can be naturally represented as a simplicial Hilbert space. 

The category $\ul{\rm fdsHilb}$ of finite dimensional simplicial Hilbert spaces and operators
is described as follows. An object $\scH$ of $\ul{\rm fdsHilb}$ is a collection $\{\scH_n, D_{ni}, S_{ni}\}$
consisting of finite dimensional Hilbert spaces $\scH_n$ together with  face and degeneracy operators 
$D_{ni}:\scH_n\rightarrow\scH_{n-1}$, $S_{ni}:\scH_n\rightarrow\scH_{n+1}$
obeying the simplicial Hilbert identities \ceqref{qusmplx7/11}. A morphisms
$\varPhi:\scH\rightarrow\scH'$ of $\ul{\rm fdsHilb}$ is a collection of linear operators
$\varPhi_n:\scH_n\rightarrow\scH'{}_n$ satisfying the simplicial Hilbert morphism identities 
\ceqref{hsmor1/2}.

The category $\ul{\rm fdsHilb}$ is bimonoidal, its two monoidal products being given by
degreewise direct product and sum $\otimes$ and $\oplus$. Explicitly, $\otimes$ and $\oplus$
act as follows. 
Let $\scH$, $\scH'$ be 
finite dimensional simplicial Hilbert spaces. Then, $\scH\otimes\scH'$ is the simplicial Hilbert space
with $\scH\otimes\scH'{}_n=\scH_n\otimes\scH'{}_n$, $D\otimes D'{}_{ni}=D_{ni}\otimes D'{}_{ni}$,
$S\otimes S'{}_{ni}=S_{ni}\otimes S'{}_{ni}$. 
Similarly, $\scH\oplus\scH'$ is the simplicial Hilbert space
with $\scH\oplus\scH{}_n=\scH_n\oplus\scH'{}_n$, $D\oplus D'{}_{ni}=D_{ni}\oplus D'{}_{ni}$,
$S\oplus S'{}_{ni}=S_{ni}\oplus S'{}_{ni}$. Let $\varPhi:\scH\rightarrow\scH''$,
$\varPhi':\scH'\rightarrow\scH'''$ be morphisms of finite dimensional simplicial Hil\-bert spaces.
Then,  $\varPhi\otimes\varPhi':\scH\otimes\scH'\rightarrow\scH''\otimes\scH'''$ is the
simplicial Hilbert space morphism given by 
$\varPhi\otimes\varPhi'{}_n=\varPhi_n\otimes\varPhi'{}_n$. Equally, $\varPhi\oplus\varPhi':
\scH\oplus\scH'\rightarrow\scH''\oplus\scH'''$ is the simplicial Hilbert space morphism such that
$\varPhi\oplus\varPhi'{}_n=\varPhi_n\oplus\varPhi'{}_n$.  

We shall treat $\ul{\rm fdsHilb}$ as a strict bimonoidal category, not completely rigorously,
neglecting the fact the direct multiplication and summation of Hilbert spaces and operators
are associative and unital only up to natural isomorphisms only. 

In subsect. \cref{subsec:qusmplx}, 
we have detailed a construction  that associates with every parafinite simplicial set $\sfX$ the 
simplex Hilbert spaces $\scH_n$ and the face and degeneracy operators $D_{ni}$, $S_{ni}$
given by eqs. \ceqref{qusmplx1}, \ceqref{qusmplx2}.  
The simplicial Hilbert identities \ceqref{qusmplx7/11} obeyed by the $D_{ni}$, $S_{ni}$ 
entail that the Hilbert data collection $\{\scH_n, D_{ni}, S_{ni}\}$
constitutes a finite dimensional simplicial Hilbert space $\scH$.
We have also shown that with any morphism $\phi:\sfX\rightarrow\sfX'$ of parafinite simplicial sets
there are associated linear operators $\varPhi_n$ mapping $\scH_n$ to  $\scH'{}_n$ given by eqs.
\ceqref{hsmor0}. 
By the simplicial Hilbert morphism relations \ceqref{hsmor1/2} the $\varPhi_n$ obey,
the operator collection $\{\varPhi_n\}$ forms 
a morphism $\varPhi:\scH\rightarrow\scH'$ of the simplicial Hilbert spaces
$\scH$, $\scH'$ of $\sfX$, $\sfX'$.
The correspondence $\sfX\mapsto\scH$ and $\phi:\sfX\rightarrow\sfX'\mapsto \varPhi:\scH\rightarrow\scH'$
is further compatible with morphism composition
and identity assignment. We have thus constructed a functor from the category $\ul{\rm pfsSet}$ of parafinite
simplicial sets into the category of finite dimensional simplicial Hilbert spaces $\ul{\rm fdsHilb}$. 

\begin{defi} \label{def:hsqfunct}
The simplicial Hilbert functor is the functor $\clh:\ul{\rm pfsSet}\rightarrow\ul{\rm fdsHilb}$
described in the previous paragraph.
\end{defi}

The functor $\clh$ enjoys a nice property.

\begin{theor} \label{prop:hqsfbimod}
The simplicial Hilbert functor $\clh$ is bimonoidal.
\end{theor}

\noindent
Essentially, this states that $\clh$  maps the bimonoidal product structure of $\ul{\rm pfsSet}$,
consisting of Cartesian multiplication and disjoint union (cf. subsect.
\cref{subsec:smplsets}), into that of $\ul{\rm fdsHilb}$, comprising direct multiplication and summation.
The most significant categorical features of parafinite
simplicial sets are so reproduced in the appropriate form by finite dimensional simplicial Hilbert spaces.

\begin{proof}
We provide only a sketch of the proof. 
Let $\sfX$, $\sfX'$ be 
simplicial sets and
let $\scH=\clh(\sfX)$, $\scH'=\clh(\sfX')$ be their simplicial Hilbert spaces. 
Consider now the Cartesian product $\sfX\times\sfX'$ of $\sfX$, $\sfX'$
(cf. def. \cref{def:simplprod}) and its associated simplicial Hilbert space
$\scH\times\scH':=\clh(\sfX\times\sfX')$. There exists a simplicial Hilbert isomorphism
$\varLambda:\scH\times\scH'\xrightarrow{\simeq}\scH\otimes\scH'$ of $\scH\times\scH'$ 
and the direct product $\scH\otimes\scH'$ of $\scH$, $\scH'$.
At degree $n$, $\varLambda$ is defined by $\varLambda_n\ket{\sigma_n,\sigma'{}_n}=\ket{\sigma_n}\otimes\ket{\sigma'{}_n}$
with $\sigma_n\in\sfX_n$, $\sigma'{}_n\in\sfX'{}_n$. 
This furnishes the identification $\scH\times\scH'\simeq\scH\otimes\scH'$
by which we conclude that $\clh(\sfX\times\sfX')\simeq\clh(\sfX)\otimes\clh(\sfX')$.
Consider likewise the disjoint union $\sfX''=\sfX\sqcup\sfX'$ of $\sfX$, $\sfX'$ (cf. def. \cref{def:simplcup})
and its associated simplicial Hilbert space $\scH\sqcup\scH':=\clh(\sfX\sqcup\sfX')$. 
There exists a simplicial Hilbert isomorphism
$\varSigma:\scH\sqcup\scH'\xrightarrow{\simeq}\scH\oplus\scH'$ of 
$\scH\sqcup\scH'$  and the direct sum $\scH\oplus\scH'$ of $\scH$, $\scH'$.
At degree $n$, $\varSigma$ is defined by the expressions
$\varSigma_n\ket{\sigma_n}=\ket{\sigma_n}\oplus 0$, 
$\varSigma_n\ket{\sigma'{}_n}=0\oplus\ket{\sigma'{}_n}$
for $\sigma_n\in\sfX_n$, $\sigma'{}_n\in\sfX'{}_n$ respectively. This leads to the 
identification $\scH\sqcup\scH'\simeq\scH\oplus\scH'$ from which it is concluded that 
$\clh(\sfX\sqcup\sfX')\simeq\clh(\sfX)\oplus\clh(\sfX')$.

Let $\phi:\sfX\rightarrow\sfX''$, $\phi':\sfX'\rightarrow\sfX'''$ be morphisms of simplicial sets and
let $\varPhi=\clh(\phi)$, $\varPhi'=\clh(\phi')$ be the associated simplicial Hilbert space morphisms, so that, 
setting $\scH=\clh(\sfX)$, $\scH'=\clh(\sfX')$,
$\scH''=\clh(\sfX'')$, $\scH'''=\clh(\sfX''')$, we have $\varPhi:\scH\rightarrow\scH''$, \linebreak  
$\varPhi':\scH'\rightarrow\scH'''$. Consider now the Cartesian product
$\varphi\times\varphi':\sfX\times\sfX'\rightarrow\varphi':\sfX''\times\sfX'''$ and disjoint union of
$\varphi\sqcup\varphi':\sfX\sqcup\sfX'\rightarrow\varphi':\sfX''\sqcup\sfX'''$
of $\varphi$, $\varphi'$ (cf. defs. \cref{def:simplprod} and \cref{def:simplcup})
and their associated simplicial Hilbert space morphisms
$\varPhi\times\varPhi':=\clh(\varphi\times\varphi')$ and $\varPhi\sqcup\varPhi':=\clh(\varphi\sqcup\varphi')$.
Then, we have $\varPhi\times\varPhi':\scH\times\scH'\rightarrow\scH''\times\scH'''$ and 
$\varPhi\sqcup\varPhi':\scH\sqcup\scH'\rightarrow\scH''\sqcup\scH'''$. Using the isomorphisms
we introduced in the previous paragraph, 
$\varLambda:\scH\times\scH'\xrightarrow{\simeq}\scH\otimes\scH'$,
$\varLambda':\scH''\times\scH'''\xrightarrow{\simeq}\scH''\otimes\scH'''$
and $\varSigma:\scH\sqcup\scH'\xrightarrow{\simeq}\scH\oplus\scH'$,
$\varSigma:\scH''\sqcup\scH'''\xrightarrow{\simeq}\scH''\oplus\scH'''$, 
we find then that $\varPhi\times\varPhi'\simeq\varPhi\otimes\varPhi'$ and
$\varPhi\sqcup\varPhi'\simeq\varPhi\oplus\varPhi'$. It follows that $\clh(\varphi\times\varphi')\simeq
\clh(\varphi)\otimes\clh\varphi')$ and $\clh(\varphi\sqcup\varphi')\simeq\clh(\varphi)\oplus\clh(\varphi')$
as required. 
\end{proof} 

The category $\ul{\rm fdcsHilb}$ of finite dimensional cosimplicial Hilbert spaces and operators 
is defined analogously to its simplicial counterpart.
An object $\scH$ of $\ul{\rm fdsHilb}$ is a collection $\{\scH_n, D_{cni}, S_{cni}\}$
consisting of finite dimensional Hilbert spaces $\scH_n$ and coface and codegeneracy operators 
$D_{cni}:\scH_{n-1}\rightarrow\scH_n$, $S_{cni}:\scH_{n+1}\rightarrow\scH_n$
obeying obeying the cosimplicial Hilbert identities, relations of the same form as the
\ceqref{qusmplx7/11} except for the order of the factors which is inverted. A morphisms
$\varPhi:\scH'\rightarrow\scH$ of $\ul{\rm fdsHilb}$ is a collection of linear operators
$\varPhi_{cn}:\scH'{}_n\rightarrow\scH_n$ satisfying the cosimplicial Hilbert morphism identities,
relations of the same form as the \ceqref{hsmor1/2} except again for the reversed order of the factors.

Just as $\ul{\rm fdsHilb}$, the category $\ul{\rm fdcsHilb}$ is bimonoidal, its two monoidal products
being given again by degreewise direct product and sum $\otimes$ and $\oplus$. The explicit expressions  
$\otimes$ and $\oplus$ take in $\ul{\rm fdcsHilb}$ is essentially the same as that they do in
$\ul{\rm fdsHilb}$, which was detailed above.

It is easy to see that the dagger autofunctor ${}^+$ of the finite dimensional Hilbert space category
$\ul{\rm fdHilb}$, which implements operator adjunction, 
induces a dagger isofunctor ${}^+:\ul{\rm fdsHilb}\rightarrow\ul{\rm fdcsHilb}^{\rm op}$, where
the superscript ${}^{\rm op}$ denotes opposite of a category. ${}^+$ associates with every simplicial Hilbert space
$\scH=\{\scH_n, D_{ni}, S_{ni}\}$ its adjoint cosimplicial Hilbert space $\scH^+=\{\scH_n, D_{ni}{}^+, S_{ni}{}^+\}$
and with every simplicial Hilbert operator $\varPhi=\{\varPhi_n\}$ 
its adjoint cosimplicial Hilbert operator $\varPhi^+=\{\varPhi_n{}^+\}$ 
with $\varPhi^+:\scH'^+\rightarrow\scH^+$ if $\varPhi:\scH\rightarrow\scH'$. 
The functor ${}^+$ preserves direct multiplication and summation and is therefore bimonoidal.
The categories $\ul{\rm fdsHilb}$, $\ul{\rm fdcsHilb}^{\rm op}$ can therefore be identified.

We can compose the simplicial Hilbert functor $\clh$ and the dagger isofunctor ${}^+$ just introduced
to obtain a functor $\clh_c:\ul{\rm pfsSet}\rightarrow\ul{\rm fdcsHilb}^{\rm op}$
from the parafinite simplicial set category $\ul{\rm pfsSet}$ into the opposite finite
dimensional cosimplicial Hilbert space category $\ul{\rm fdcsHilb}^{\rm op}$.
By means of $\clh_c$, we can associate with any parafinite simplicial set $\sfX$ a finite dimensional cosimplicial 
Hilbert space $\scH^+$, the adjoint of the simplicial Hilbert space $\scH$ assigned to $\sfX$
by $\clh$. Similarly, we can associate with any morphism 
$\phi:\sfX\rightarrow\sfX'$ of parafinite simplicial sets a cosimplicial Hilbert operator
$\varPhi^+:\scH'^+\rightarrow\scH^+$, the adjoint of the simplicial Hilbert operator
$\varPhi:\scH\rightarrow\scH'$ assigned to $\phi$ by $\clh$. We shall call $\clh_c$ the cosimplicial Hilbert functor. 

In this way, parafinite simplicial sets have both a simplicial and a cosimplicial encoding
related by the dagger isofunctor.
This dagger structure provides the formal framework for the
analysis of the unitarity of the simplicial Hilbert operators associated with the simplicial set morphisms
arising in reversible simplicial computation (cf. subsect. \cref{subsec:simpcirc} below). 

We conclude this subsection with the following remark. 
Let $\sfX$ be a parafinite simplicial set with associated simplicial Hilbert space $\scH$. 
Suppose that the Hilbert structure of the spaces $\scH_n$ is forgotten so that the $\scH_n$ are regarded 
just as sets and the operators $D_{ni}$, $S_{ni}$ as maps.
Then, the simplicial Hilbert encoding maps $\varkappa_n:\sfX_n\rightarrow\scH_n$ given by
$\varkappa_n(\sigma_n)=\ket{\sigma_n}$ are the components of a simplicial set monomorphism
$\varkappa:\sfX\rightarrow\scH$ as follows from \ceqref{qusmplx1}, \ceqref{qusmplx2}. The morphism 
$\varkappa$ will enter the discussion of truncation and skeletonization
of simplicial sets in sect. \cref{sec:simpcircappls}.


\subsection{\textcolor{blue}{\sffamily The simplicial Hilbert Hodge Laplacians
and their properties}}\label{subsec:smplxlapl}

In this subsection, we shall study in some depth the simplicial Hilbert Hodge Laplacians and
their properties having in mind the problem of the computation of the simplicial homology spaces
of a simplicial set analyzed later in subsect. \cref{subsec:smplxhomol}. We shall
do so from a perspective more general than that strictly required by such problem, considering
three kinds of Laplacians. 
The reason for proceeding like so is twofold. 
First, as a way of achieving a broader and more complete understanding of
the quantum simplicial operator framework developed in the preceding subsections. Second, for its potential relevance in a
reinterpretation of the simplicial Hilbert structure as an instance of $N=4$ supersymmetric quantum mechanics
on the lines of the analogous formulation of the quantum simplicial complex framework of ref. \ccite{Lloyd:2014lgz}
as an instance of $N=2$ supersymmetric quantum mechanics
proposed and studied in refs. \ccite{Crichigno:2020vue,Cade:2021jhc,Crichigno:2022wyj}, though we shall delve
into this matter in the present work.

We begin by introducing simplicial Hilbert homological operators 
which will be key in the study of simplicial Hilbert homology later in subsect. \cref{subsec:smplxhomol}

\begin{defi}\label{def:qdnqsn}
The simplicial Hilbert face boundary operators $Q_{Dn}:\scH_n\rightarrow \scH_{n-1}$, $n\geq 1$, 
and degeneracy coboundary operators 
$Q_{Sn}:\scH_n\rightarrow \scH_{n+1}$, $n\geq 0$ are given by 
\begin{align}
&Q_{Dn}=\mycom{{}_\sss}{{}_{0\leq i\leq n}}(-1)^iD_{ni},
\label{smplxhomol1}
\\  
&Q_{Sn}=\mycom{{}_\sss}{{}_{0\leq i\leq n}}(-1)^iS_{ni}.
\label{smplxhomol2}
\end{align}
\end{defi}

\noindent
The $Q_{Dn}$, $Q_{Sn}$ are the building blocks of the simplicial Hilbert Hodge Laplacians.

\begin{defi}\label{def:hodlap}
The face, mixed and degeneracy simplicial Hilbert Hodge Laplacians are the operators
$H_{DDn}:\scH_n\rightarrow\scH_n$, $n\geq 0$, $H_{SDn}:\scH_n\rightarrow\scH_{n-2}$, $n\geq 2$,
and $H_{SSn}:\scH_n\rightarrow\scH_n$, $n\geq 0$ given by
\begin{align}
&H_{DDn}=Q_{Dn}{}^+Q_{Dn}+Q_{Dn+1}Q_{Dn+1}{}^+,
\label{smplxlapl1}
\\
&H_{SDn}=Q_{Sn-2}{}^+Q_{Dn}+Q_{Dn-1}Q_{Sn-1}{}^+, 
\label{smplxlapl2}
\\
&H_{SSn}=Q_{Sn}{}^+Q_{Sn}+Q_{Sn-1}Q_{Sn-1}{}^+.
\label{smplxlapl3}
\end{align}
\end{defi}

\noindent Above, it is tacitly understood that the first term in the right hand side of
\ceqref{smplxlapl1} and the second term in the right hand side of
\ceqref{smplxlapl3} are absent when $n=0$. 
We observe that the $H_{DDn}$, $H_{SSn}$ are Hermitian while the $H_{SDn}$ are not. 


The $H_{DDn}$, $H_{DSn}$, $H_{SSn}$ can be expressed
through the basic face and degeneracy operators $D_{ni}$, $S_{ni}$ and their adjoints
$D_{ni}{}^+$, $S_{ni}{}^+$ on account of \ceqref{smplxhomol1}, \ceqref{smplxhomol2}.
The resulting expressions of $H_{DDn}$, $H_{DSn}$, $H_{SSn}$ exhibit a similar structure:
\begin{align}
&H_{DDn}=\varUpsilon_{DDn}+\varUpsilon_{DDn}{}^++H^0{}_{DDn},
\label{susysmplx7}
\\
&H_{SDn}=\varUpsilon_{SDn}+\varUpsilon_{DSn-2}{}^++H^0{}_{SDn},
\label{susysmplx8}
\\
&H_{SSn}=
H^0{}_{SSn}.
\label{susysmplx12}
\end{align}
Here, $\varUpsilon_{DDn}$, $\varUpsilon_{DSn}$, $\varUpsilon_{SDn}$, 
are operators directly related to the defects
$\varDelta^{DD}{}_{nij}$, $\varDelta^{DS}{}_{nij}$, $\varDelta^{SD}{}_{nij}$ 
introduced in subsect. \cref{subsec:qusmplx} and given by eqs. \ceqref{qusmplx16}--\ceqref{qusmplx18},
\begin{subequations}
\label{susysmplx13/15}
\begin{align}
&\varUpsilon_{DDn}=\mycom{{}_\sss}{{}_{0\leq i,j\leq n,i<j}}(-1)^{i+j}\varDelta^{DD}{}_{nij},
\label{susysmplx13}
\\
&\varUpsilon_{DSn}=\mycom{{}_\sss}{{}_{0\leq i,j\leq n,i\leq j}}(-1)^{i+j}\varDelta^{DS}{}_{nij},
\label{susysmplx14}
\\
&\varUpsilon_{SDn}=\mycom{{}_\sss}{{}_{0\leq i,j\leq n,i+1<j}}(-1)^{i+j}\varDelta^{SD}{}_{nij}.
\label{susysmplx15}
\end{align}
\end{subequations}
Above, $\varUpsilon_{DD0}=0$ by convention.
We note here that $\varUpsilon_{DSn}$, $\varUpsilon_{SDn}$ vanish when $\sfX$ is a semi perfect simplicial set
and that $\varUpsilon_{DDn}$ also vanishes when $\sfX$ is a quasi perfect simplicial set (cf. def. \cref{def:perfect}).
A term of the form $\varUpsilon_{SSn}+\varUpsilon_{SSn}{}^+$ depending in an analogous manner on the
defects $\varDelta^{SS}{}_{nij}$, $\varDelta^{SS}{}_{nij}{}^+$ does not appear in the expression
of $H_{SSn}$ in \ceqref{susysmplx12}, because the $\varDelta^{SS}{}_{nij}$ always vanish
by the no degeneracy defect theorem 
\cref{prop:vanidect1}. 
The operators $H^0{}_{DDn}$, $H^0{}_{SDn}$,
$H^0{}_{SSn}$ instead are not reducible to the defects and are genuinely new. They provide additional structure
to our quantum simplicial framework. 
In particular, they involve
all the missing products are $D_{n+1i}D_{n+1i}{}^+$, $0\leq i\leq n+1$, $D_{n+2i+1}{}^+S_{ni}$, $0\leq i\leq n$,
$S_{n+1i}D_{n+1i}{}^+$, $0\leq i\leq n+1$ and $S_{ni}{}^+S_{ni}$, $0\leq i\leq n$ not covered by the mixed
exchange identities \ceqref{qusmplx12/15}.

The operator $H^0{}_{DDn}$ is Hermitian. $H^0{}_{DDn}$ can be expressed through two sets of elementary operators. 
The first set consists of the operators 
\begin{equation}
\varOmega_{ni}=D_{n+1i}D_{n+1i}{}^+
\label{qusmplx38}
\end{equation}
with $0\leq i\leq n+1$. The $\varOmega_{ni}$ are clearly Hermitian. A simple application of the
basic expressions \ceqref{qusmplx1}, \ceqref{qusmplx3} shows that the $\varOmega_{ni}$ are diagonal
in the simplex basis $\ket{\sigma_n}$, 
\begin{equation}
\varOmega_{ni}=\mycom{{}_\sss}{{}_{\sigma_n\in \sfX_n}}\ket{\sigma_n}|\sfD_{n+1i}(\sigma_n)|\bra{\sigma_n}.
\label{qusmplx39}
\end{equation}
So, for every $i$ and $n$--simplex $\sigma_n$, \pagebreak $\varOmega_{ni}$ counts the number of $n+1$-simplices $\omega_{n+1}$
whose $i$--face is $\sigma_n$. The second set is constituted by the operators
\begin{align}
&\varTheta_{ni}=D_{ni}{}^+D_{ni},
\label{qusmplx40}
\\
&\varGamma_{ni}=D_{n+1i+1}D_{n+1i}{}^+
\label{qusmplx41}
\end{align}
with $0\leq i\leq n$. Above, we conventionally set $\varTheta_{00}=0$. The $\varTheta_{ni}$ are Hermitian, whilst
the $\varGamma_{ni}$ are not. Another straightforward application of \ceqref{qusmplx1}, \ceqref{qusmplx3} furnishes
the following formulae:
\begin{align}
&\varTheta_{ni}=\mycom{{}_\sss}{{}_{\sigma_n\in\sfX_n}}
\mycom{{}_\sss}{{}_{\omega_n\in\sfD_{ni}(d_{ni}\sigma_n)}}\ket{\omega_n}\bra{\sigma_n},
\label{qusmplx42}
\\  
&\varGamma_{ni}=\mycom{{}_\sss}{{}_{\sigma_n,\omega_n\in\sfX_n}}\ket{\omega_n}
|\sfD_{n+1i}(\sigma_n)\cap\sfD_{n+1i+1}(\omega_n)|\bra{\sigma_n}.
\nonumber
\label{qusmplx43}
\end{align}
Hence, for each $i$, $\varTheta_{ni}$ detects all the pairs $\sigma_n$, $\omega_n$ 
of $n$--simplices sharing the $i$--face whilst $\varGamma_{ni}$ provides information about the number 
of $n+1$-simplices $\omega_{n+1}$ having $\sigma_n$, $\omega_n$ as their $i$, $i+1$--th faces.
We note that for $n\geq 1$, owing to the simplicial relation \ceqref{smplsets1},
the effective summation range of $\varGamma_{ni}$ consists of pairs
$\sigma_n,\omega_n\in\sfX_n$ such that $d_{ni}\omega_n=d_{ni}\sigma_n$ and thus it is contained
in the summation range of $\varTheta_{ni}$.

The operator $H^0{}_{DDn}$ is given by a sum of operator products of the kind appearing in the right hand side
of eqs. \ceqref{qusmplx38}, \ceqref{qusmplx40}, \ceqref{qusmplx41}. $H^0{}_{DDn}$ is in this way 
expressible in terms of the operators $\varOmega_{ni}$, $\varTheta_{ni}$, 
\begin{equation}
H^0{}_{DDn}=\mycom{{}_\sss}{{}_{0\leq i\leq n+1}}\varOmega_{ni}
+\mycom{{}_\sss}{{}_{0\leq i\leq n}}(\varTheta_{ni}-\varGamma_{ni}-\varGamma_{ni}{}^+).
\label{qusmplx44}
\end{equation}
$H^0{}_{DDn}$ so encodes all the information about face relations in the underlying simplicial set $\sfX$ the
$\varOmega_{ni}$, $\varTheta_{ni}$ and $\varGamma_{ni}$ do. 

The operators $H^0{}_{SDn}$ and the Hermitian operators $H^0{}_{SSn}$ have a more elementary structure.
They are reducible to a common set of elementary orthogonal projectors as we now show.

For $0\leq i\leq n$, the operator $S_{ni}$ is an isometry of $\scH_n$ into $\scH_{n+1}$,
\begin{equation}
S_{ni}{}^+S_{ni}=1_n.
\label{qusmplx31}
\end{equation}
This follows immediately from the expressions \ceqref{qusmplx2}, \ceqref{qusmplx4} of
$S_{ni}$, $S_{ni}{}^+$ and the fact
that the sets $\sfS_{ni}(\sigma_{n+1})$ defined in \ceqref{qusmplx6} contain at most one element. 
Consequently, $S_{ni}S_{ni}{}^+$ is the orthogonal projector on the range $\ran S_{ni}$ of $S_{ni}$. 

In general, the Hermitian operators 
\begin{equation}
\varPi_{ni}=S_{n-1i}S_{n-1i}{}^+=S_{ni+1}{}^+S_{ni}=S_{ni}{}^+S_{ni+1},
\label{qusmplx32}
\end{equation}
where $0\leq i\leq n-1$, are orthogonal projectors in $\scH_{n}$.
The identity of the three expressions of $\varPi_{ni}$ follows from the  $S$--$S^+$ exchange identities
\ceqref{qusmplx15}. It is simple to verify using \ceqref{qusmplx2}, \ceqref{qusmplx4} that 
the $\varPi_{ni}$ are diagonal in the simplex basis $\ket{\sigma_n}$, 
\begin{equation}
\varPi_{ni}=\mycom{{}_\sss}{{}_{\sigma_n\in \sfX_n}}\ket{\sigma_n}|\sfS_{n-1i}(\sigma_n)|\bra{\sigma_n}.
\label{qusmplx33}
\end{equation}
Since $|\sfS_{n-1i}(\sigma_n)|\leq 1$, for fixed $i$ $\varPi_{ni}$ detects whether a given $n$--simplex $\sigma_n$
lies in the range of $s_{n-1i}$ or not. 
The $\varPi_{ni}$ evidently commute pairwise.
The $\varPi_{ni}$ do not furnish however a resolution of the identity
of $\scH_{n}$ because $\varPi_{ni}\varPi_{nj}\neq 0$ in general for $i\neq j$. 
Via the projectors $\varPi_{ni}$, the adjoint degeneracy operators $S_{ni}{}^+$ are reducible
to the face operators $D_{ni}$, since
\begin{equation}
S_{ni}{}^+=D_{n+1i}\varPi_{n+1i}=D_{n+1i+1}\varPi_{n+1i},
\label{qusmplx35}
\end{equation}
as follows immediately from \ceqref{qusmplx9}.

The operators $H^0{}_{SDn}$, $H^0{}_{SSn}$ are expressible in a simple manner in terms of the projectors
$\varPi_{ni}$, 
\begin{align}
&H^0{}_{DSn}=\mycom{{}_\sss}{{}_{0\leq i\leq n-1}}D_{n-1i}D_{ni+1}\varPi_{ni}
-\mycom{{}_\sss}{{}_{0\leq i\leq n-2}}D_{n-1i}\varPi_{n-1i}D_{ni+1},
\label{qusmplx36}
\\
&H^0{}_{SSn}=(n+1)1_n-\mycom{{}_\sss}{{}_{0\leq i\leq n-1}}\varPi_{ni}.
\label{qusmplx37}
\end{align}
The second term in the right hand side of \ceqref{qusmplx37} is conventionally set to $0$ for $n=0$.
The verification of these identities is straightforward enough
from \ceqref{qusmplx31}, \ceqref{qusmplx35}.


\subsection{\textcolor{blue}{\sffamily Simplicial Hilbert homology}}\label{subsec:smplxhomol}


In subsect. \cref{subsec:simplhomol}, we showed that a chain complex
can be associated with any simplicial group. 
This scheme can be applied in particular to the simplicial Hilbert space of
a parafinite simplicial set introduced in subsects. \cref{subsec:qusmplx} and \cref{subsec:hilbfunct}, 
adding new elements to our analysis. In fact, 
the richness of the operator structure of the quantum simplicial framework enables one to \pagebreak 
introduce several types of simplicial and cosimplicial Hilbert homology and cohomology. 
Since simplicial Hilbert theory is just a
special codification of standard simplicial set theory, we expect that eventually we shall recover
the ordinary simplicial homology of the underlying simplicial set, if we succeed, and no more.
This is indeed the case: all the homologies and cohomologies that can be constructed turn indeed out
to be either trivial or isomorphic to simplicial homology.
This subsection is devoted to the illustration of such construction.


Let $\sfX$ be a parafinite simplicial set and let $\scH$ be its associated simplicial Hilbert space.
The application of the homological set--up of subsect. \cref{subsec:simplhomol} to $\scH$ seen as
a simplicial Abelian group yields a simplicial chain complex, the simplicial Hilbert face 
chain complex $(\scH,Q_D)$. Its chain spaces are the Hilbert spaces $\scH_n$; its 
boundary operators are the simplicial Hilbert face boundary operators $Q_{Dn}$, $n\geq 1$, defined in eq. 
\ceqref{smplxhomol1}. As is readily verified also using the simplicial identities \ceqref{qusmplx7},
the $Q_{Dn}$ obey indeed the basic homological relations \ceqref{nwhom2} reading currently as  
\begin{equation}
Q_{Dn}Q_{Dn+1}=0,  
\label{smplxhomol3}
\end{equation}
Associated with 
$(\scH,Q_D)$ there are then the
simplicial Hilbert face homology spaces $\rmH_{Dn}(\scH)=\ker Q_{Dn}/\ran Q_{Dn+1}$ with $n\geq 0$,
where $\ker Q_{D0}=\scH_0$ by convention.

Similarly, the application of the homological set--up to
the cosimplicial Hilbert space $\scH^+$ seen as
a cosimplicial Abelian group yields a cosimplicial cochain complex, the cosimplicial Hilbert face 
cochain complex $(\scH^+,Q_D{}^+)$. Its cochain spaces are the Hilbert spaces $\scH_n$; its 
coboundary operators are the adjoints $Q_{Dn}{}^+$ of the simplicial Hilbert face boundary operators $Q_{Dn}$.
The $Q_{Dn}{}^+$ obey indeed cohomological relations following from the \ceqref{smplxhomol3}
by adjunction and identical to these but for the order of the factors of the operator products.
Associated with $(\scH^+,Q_D{}^+)$ there are then the cosimplicial Hilbert face cohomology 
spaces $\rmH_D{}^n(\scH^+)=\ker Q_{Dn+1}{}^+/\ran Q_{Dn}{}^+$ with $n\geq 0$, where $\ran Q_{D0}{}^+=0$
by convention.

Consider a morphism $\phi:\sfX\rightarrow\sfX'$ of the parafinite simplicial sets $\sfX$, $\sfX'$.
As we have seen in subsect. \cref{subsec:hilbfunct}, 
with $\phi$ there is associated a morphism $\varPhi:\scH\rightarrow\scH'$
of the simplicial Hilbert spaces $\scH$, $\scH'$ of $\sfX$, $\sfX'$. In the spirit of 
the homological theory of subsect. \cref{subsec:simplhomol}, this can be regarded as a
morphism of simplicial Abelian groups. A morphism 
of chain complexes is then yielded, the associated morphism of simplicial Hilbert chain complexes 
$\varPhi:(\scH,Q_D)\rightarrow(\scH',Q'{}_D)$.  Its components are the operators $\varPhi_n$ given by eq. \ceqref{hsmor0}, 
By virtue of \ceqref{smplxhomol1}, the $\varPhi_n$ obey indeed the identities \ceqref{nwhom5}, reading here as 
\begin{equation}
\varPhi_{n-1}Q_{Dn}=Q'{}_{Dn}\varPhi_n.
\label{smplxhomol12}
\end{equation}
Similarly regarding the morphism $\varPhi^+:\scH'^+\rightarrow\scH^+$ of the cosimplicial Hilbert spaces
$\scH'^+$, $\scH^+$ of $\sfX'$, $\sfX$ as a morphism of cosimplicial Abelian groups, we find
a morphism of cochain complexes, the associated morphism of cosimplicial Hilbert cochain complexes 
$\varPhi^+:(\scH'^+,Q'{}_D{}^+)\rightarrow(\scH^+,Q_D{}^+)$. Its components $\varPhi_n{}^+$, given by \ceqref{hsmor3},
satisfy indeed the adjoints of relations \ceqref{smplxhomol12}.
$\varPhi$, $\varPhi^+$ give hereby rise to
morphisms $\varPhi_{*n}:\rmH_{Dn}(\scH)\rightarrow\rmH_{Dn}(\scH')$ and
$\varPhi^{+*n}:\rmH_D{}^n(\scH'^+)\rightarrow\rmH_D{}^n(\scH^+)$ of the associated 
Hilbert face homology and cohomology spaces. 

The computation of the homology/cohomology spaces $\rmH_{Dn}(\scH)$, $\rmH_D{}^n(\scH^+)$
can be carried out by mimicking that of the de Rham cohomology spaces of closed Riemannian manifolds
in Hodge theory: it reduces to the determination of
the ker\-nels of appropriate simplicial Hilbert Hodge Laplacians \ccite{Lloyd:2014lgz}.

\begin{theor} \label{prop:hodge} (Simplicial Hilbert face Hodge theorem) For $n\geq 0$, 
the isomorphism
\begin{equation}
\rmH_{Dn}(\scH)\simeq\rmH_D{}^n(\scH^+)\simeq\ker H_{DDn} \quad \text{with $n\geq 0$}
\label{smplxhomol16}
\end{equation}
holds, where $H_{DDn}$ is the face simplicial Hilbert Hodge Laplacian (cf.  eq. \ceqref{smplxlapl1}). 
\end{theor}

\begin{proof}
The proof of theor. \cref{prop:hodge} is  based on the finite dimensional 
Hodge theorem. Although this theorem is well--known, we provide a simple proof of it in
app. \cref{app:hodge} for the reader's benefit.
\end{proof}

On account of \ceqref{simplhomol2}, \ceqref{qusmplx1} and \ceqref{smplxhomol1}, the face chain complex $(\scH,Q_D)$
is manifestly isomorphic to the simplicial chain complex $(\sfC(\sfX,\bbC),\partial)$ with complex coefficients
of the simplicial set $\sfX$. At degree $n$, the isomorphism is given by the chain encoding map
$\varkappa_n:\sfC_n(\sfX,\bbC)\rightarrow\scH_n$  engendered by the simplicial encoding map
introduced at the end of subsect. \cref{subsec:hilbfunct}.
Such isomorphism 
translates into one of the corresponding homology spaces leading to the following.


\begin{theor} \label{prop:hlbtosmplxhom}
The simplicial Hilbert face homology and the cosimplicial face cohomology are 
isomorphic to the simplicial homology 
of the underlying simplicial set $\sfX$ with complex coefficients: for $n\geq 0$
\begin{equation}
\rmH_{Dn}(\scH)\simeq\rmH_D{}^n(\scH^+)\simeq\rmH_n(\sfX,\bbC). 
\label{smplxhomo-3}
\end{equation}
Consequently, one has \hphantom{xxxxxxx}
\begin{equation}
\rmH_n(\sfX,\bbC)\simeq\ker H_{DDn}. 
\label{smplxhomo-3/1}
\end{equation}
\end{theor}


Consider again a morphism $\phi:\sfX\rightarrow\sfX'$ of the parafinite simplicial sets $\sfX$, $\sfX'$.
$\phi$ gives rise to a morphism $\phi:(\sfC(\sfX,\bbC),\partial)\rightarrow(\sfC(\sfX',\bbC),\partial')$
of the complex simplicial chain complexes 
of $\sfX$, $\sfX'$ (cf. subsect. \cref{subsec:simplhomol}, def. \cref{def:chcxmorph})
and by virtue of this a morphism $\phi_{*n}:\rmH_n(\sfX,\bbC)\rightarrow \rmH_n(\sfX',\bbC)$ of the associated complex
simplicial homology spaces for each $n$. Evidently, the simplicial
Hilbert homology and cohomology space morphism $\varPhi_{*n}:\rmH_{Dn}(\scH)\rightarrow\rmH_{Dn}(\scH')$ and
$\varPhi^{+*n}:\rmH_D{}^n(\scH'^+)\rightarrow\rmH_D{}^n(\scH^+)$ we have constructed earlier
are the simplicial Hilbert encoding
of the simplicial homology morphisms $\phi_{*n}$. 

As anticipated at the beginning of this subsection, the quantum simplicial framework
is characterized by further homology and cohomology spaces, which we briefly illustrate next.
Such spaces can be shown to be trivial, as expected also on general grounds. The uninterested reader
can skip this discussion and move directly to the last paragraph of this subsection, if he/she wishes so. 

In subset. \cref{subsec:smplxlapl}, we have introduced also the simplicial Hilbert degeneracy coboundary operators
$Q_{Sn}$. Using the simplicial identities \ceqref{qusmplx11}, it is not difficult to show that
the operators $Q_{Sn}$ obey the basic cohomological relations
\begin{equation}
Q_{Sn+1}Q_{Sn}=0
\label{smplxhomol5}
\end{equation}
Exploiting the simplicial identities \ceqref{qusmplx8}--\ceqref{qusmplx10}, one finds in addition that
the $Q_{Sn}$ satisfy a further relation,
\begin{equation}
Q_{Sn-1}Q_{Dn}+Q_{Dn+1}Q_{Sn}=0,  
\label{smplxhomol4}
\end{equation}
involving the simplicial Hilbert face boundary operators $Q_{Dn}$ considered earlier. \pagebreak 
In this wise, the $Q_{Dn}$ are part of a broader homological structure including
the $Q_{Sn}$. 

By virtue of relations \ceqref{smplxhomol5}, the simplicial Hilbert space $\scH$ of $\sfX$ underlies
the simplicial Hilbert degeneracy cochain complex $(\scH,Q_S)$ with cochain spaces $\scH_n$ and
coboundary operators $Q_{Sn}$. Associated with this there are the simplicial Hilbert 
degeneracy cohomology spaces $\rmH_S{}^n(\scH)=\ker Q_{Sn}/\ran Q_{Sn-1}$ for any $n\geq 0$,
where $\ran Q_{S-1}=0$ conventionally.

Similarly, by the adjoint of relations \ceqref{smplxhomol5},
the cosimplicial Hilbert space $\scH^+$ of $\sfX$ supports the cosimplicial Hilbert degeneracy
chain complex $(\scH^+,Q_S{}^+)$ with chain spaces $\scH_n$ and
boundary operators $Q_{Sn}{}^+$. Associated with this there are the cosimplicial Hilbert 
degeneracy homology spaces  $\rmH_{Sn}(\scH^+)=\ker Q_{Sn-1}{}^+/\ran Q_{Sn}{}^+$
for any $n\geq 0$ with $\ker Q_{S-1}{}^+=\scH_0$.

Let $\phi:\sfX\rightarrow\sfX'$ be a morphism of the parafinite simplicial sets $\sfX$, $\sfX'$.
The components $\varPhi_n$ of the attached simplicial Hilbert space
morphism $\varPhi:\scH\rightarrow\scH'$ obey relations \ceqref{hsmor2}. Consequently, 
by \ceqref{smplxhomol2}, the $\varPhi_n$ obey also the identities 
\begin{equation}
\varPhi_{n+1}Q_{Sn}=Q'{}_{Sn}\varPhi_n. 
\label{smplxhomol13}
\end{equation}
Owing to \ceqref{nwhom5}, the $\varPhi_n$ define then a morphism
$\varPhi:(\scH,Q_S)\rightarrow(\scH',Q'{}_S)$ of cochain complexes.
In the same way, by virtue of the Hilbert dagger structure,  the components $\varPhi_n{}^+$ of the adjoint morphism
$\varPhi^+:\scH'^+\rightarrow\scH^+$ give rise to a morphism $\varPhi^+:(\scH'^+,Q'{}_S{}^+)\rightarrow(\scH^+,Q_S{}^+)$ 
of chain complexes. One has in this wise morphisms $\varPhi_*{}^n:\rmH_S{}^n(\scH)\rightarrow\rmH_S{}^n(\scH')$ and
$\varPhi^{+*}{}_n:\rmH_{Sn}(\scH'^+)\rightarrow\rmH_{Sn}(\scH^+)$ of the associated simplicial and cosimplicial 
Hilbert degeneracy cohomology and homology spaces. 

In spite of the formal similarities of the homology/cohomology spaces $\rmH_{Dn}(\scH)$, $\rmH_D{}^n(\scH^+)$
and the cohomology/homology spaces $\rmH_S{}^n(\scH)$, $\rmH_{Sn}(\scH^+)$, while the former are generally
non trivial, the latter always are, in accordance to our expectations, as we show next.

The computation of the spaces $\rmH_S{}^n(\scH)$, $\rmH_{Sn}(\scH^+)$
is again reduced to the determination of the kernels of appropriate simplicial Hilbert Hodge Laplacians.
The triviality of the $\rmH_S{}^n(\scH)$, $\rmH_{Sn}(\scH^+)$ is a immediate consequence of that of such kernels.
\vfill

\begin{theor} \label{prop:hstriv}  (Hilbert degeneracy cohomology and homology triviality theorem)
The simplicial Hilbert degeneracy cohomology and homology spaces are trivial, 
\begin{equation}
\rmH_S{}^n(\scH)\simeq\rmH_{Sn}(\scH^+)\simeq 0
\label{smplxhomo-4}
\end{equation}
for $n\geq 0$. 
\end{theor}

\begin{proof}
By the finite dimensional Hodge theorem, proven in app. \cref{app:hodge}, we have 
\begin{equation}
\rmH_S{}^n(\scH)\simeq\rmH_{Sn}(\scH^+)\simeq\ker H_{SSn}
\label{smplxhomo-4/p}
\end{equation}
for $n\geq 0$, where $H_{SSn}$ is the degeneracy simplicial Hilbert Hodge Laplacian (cf.  eq. \ceqref{smplxlapl3}). 
The result follows by showing that $\ker H_{SSn}=0$. See app. \cref{app:hstriv}.
\end{proof}

A simplicial degeneracy cochain complex $(\sfC(\sfX,\bbC),\laitrap)$
can be built also in standard simplicial theory with no reference to its eventual Hilbert space encoding
alongside with the face chain complex $(\sfC(\sfX,\bbC),\partial)$.
While the complex $(\sfC(\sfX,\bbC),\partial)$ is contemplated and analyzed in simplicial theory, to our knowledge 
the complex $(\sfC(\sfX,\bbC),\laitrap)$ has not appeared and found any application so far.  
The reason for this is presumably that the 
homology of $(\sfC(\sfX,\bbC),\partial)$, which is just the complex simplicial homology $\rmH(\sfX,\bbC)$ of $\sfX$,
is significant and generally non trivial, while the cohomology of $(\sfC(\sfX,\bbC),\laitrap)$, as we have shown,
vanishes. Interestingly, we have been able to provide a completely quantum Hilbert space proof
of this fact. 

From now on, for the reasons explained above, we concentrate on the simplicial Hilbert face homology,
which we shall call simply simplicial Hilbert homology.


\subsection{\textcolor{blue}{\sffamily Normalized simplicial Hilbert homology}}\label{subsec:normal}

The determination of the simplicial homology $\rmH(\sfX,\bbC)$ of a parafinite simplicial set $\sfX$
via that of the isomorphic simplicial Hilbert homology $\rmH_D(\scH)$ is computationally
more costly than necessary, 
as it involves also the subspaces of the simplex Hilbert spaces spanned by the degenerate simplices
(cf. subsect. \cref{subsec:smplsets}, def. \cref{defi:degsiplx}), which are homologically
irrelevant by the normalization theorem \cref{pop:eilmclane}. In this subsection, we shall explain how this redundant 
degenerate structure can be disposed of in our formulation opening a cheaper route to the homology computation. 

\begin{defi} \label{def:degsub} For $n\geq 0$, the degenerate $n$--simplex space is 
\begin{equation}
{}^s\!\scH_n=\sss_{i=0}^{n-1}\ran S_{n-1i},
\label{normal0}
\end{equation}
where ${}^s\!\scH_0=0$ by convention. 
\end{defi}

\noindent
The expression in the right hand side of \ceqref{normal0} denotes the linear span of the
ranges of the operators $S_{n-1i}$. ${}^s\!\scH_n$ is therefore the subspace of $\scH_n$ spanned
by the degenerate $n$--simplex vectors as alluded by its name. 

By the isomorphism of the chain complexes $(\sfC(\sfX,\bbC),\partial)$ and $(\scH,Q_D)$
disclosed in subsect. \cref{subsec:smplxhomol}, the fact that the simplicial boundary operators
of $\sfX$ preserve the subspaces of complex 
degenerate chains of $\sfX$ recalled in subsect. \cref{subsec:simplhomol} translates into the property
that the face boundary operators of $\scH$ preserve the degenerate simplex subspaces of $\scH$.
So, for each $n\geq 1$ $Q_{Dn}{}^s\!\scH_n\subseteq{}^s\!\scH_{n-1}$. Let $\ol{\scH}_n=\scH_n/{}^s\!\scH_n$. 
An operator $\ol{Q}_{Dn}:\ol{\scH}_n\rightarrow\ol{\scH}_{n-1}$ is then induced by $Q_{Dn}$, 
which obeys the homological relation $\ol{Q}_{Dn-1}\ol{Q}_{Dn}=0$ (cf. eq.\ceqref{smplxhomol3}).
We have in this way a chain complex
$(\ol{\scH},\ol{Q}_D)$ 
called the abstract normalized simplicial Hilbert face complex of $\sfX$ below.
The associated abstract normalized simplicial Hilbert
homology spaces are 
$\rmH_{Dn}(\ol{\scH})$ $=\ker \ol{Q}_{Dn}/\ran \ol{Q}_{Dn+1}$ with $n\geq 0$
(with $\ker\ol{Q}_{D0}=\ol{\scH}_0$).

\begin{theor} For every $n\geq 0$, one has 
\begin{equation}
\rmH_{Dn}(\ol{\scH})\simeq\rmH_n(\sfX,\bbC). 
\label{normal-3}
\end{equation}
\end{theor}

\begin{proof}
The chain complexes $(\sfC(\sfX,\bbC),\partial)$ and $(\scH,Q_D)$ are isomorphic.
Further, by\ceqref{normal0}, for each $n$ we have ${}^s\sfC_n(\sfX,\bbC)\simeq{}^s\!\scH_n$.
The normalized chain complexes $(\ol{\sfC}(\sfX,\bbC),\ol{\partial})$
and $(\ol{\scH},\ol{Q}_D)$ are so also isomorphic and so $\ol{\rmH}_n(\sfX,\bbC)\simeq\rmH_{Dn}(\ol{\scH})$.
By the normalization theorem \cref{pop:eilmclane}, 
$\rmH_n(\sfX,\bbC)\simeq\ol{\rmH}_n(\sfX,\bbC)$. \ceqref{normal-3} then follows. 
\end{proof} 




The isomorphism \ceqref{normal-3} offers an alternative way of computing the simplicial homology
of $\sfX$, which is less expensive in that it does away with degenerate simplices. The modding out of
these latter is however hardly implementable algorithmically by its abstract form. 
A full operator formulation is necessary for that purpose.

With the above in mind, we introduce the orthogonal projector $\varPi_n$ on the degenerate $n$--simplex space 
${}^s\!\scH_n$. An expression of $\varPi_n$ can be obtained in terms of the orthogonal projectors $\varPi_{ni}$
introduced in \ceqref{qusmplx32}: one has 
\begin{equation}
\varPi_n=1_n-\mycom{{}_\ppp}{{}_{0\leq i\leq n-1}}(1_n-\varPi_{ni})
\label{normal2}
\end{equation}
for $n\geq 0$ with $\varPi_0=0$ by convention. The reader is referred to app.
\cref{app:deginv} for some details about the derivation of this formula. 

We now introduce another Hilbert complex closely related 
to the normalized Hilbert complex $(\ol{\scH},\ol{Q}_D)$. 
For $n\geq 1$, let ${}^c\!\scH_n={}^s\!\scH_n{}^\perp$, where ${}^\perp$ denotes orthogonal complement,
and let ${}^cQ_{Dn}:{}^c\!\scH_n\rightarrow{}^c\!\scH_{n-1}$ be
defined by 
\begin{equation}
{}^cQ_{Dn}=(1_{n-1}-\varPi_{n-1})Q_{Dn}\big|_{{}^c\!\scH_n}.
\label{normal4}
\end{equation}
The ${}^cQ_{Dn}$ satisfy the homological relation ${}^cQ_{Dn-1}{}^cQ_{Dn}=0$.
The reader is referred again to app.
\cref{app:deginv} for some details about the derivation of this identity.
We have consequently a chain complex $({}^c\!\scH,{}^cQ_D)$
which we shall denominate the concrete normalized simplicial Hilbert face complex
of $\sfX$ in the following. The associated concrete normalized simplicial Hilbert
homology spaces with $n\geq 0$ are defined as $\rmH_{Dn}({}^c\!\scH)=\ker {}^cQ_{Dn}/\ran {}^cQ_{Dn+1}$
(with $\ker{}^cQ_{D0}={}^c\!\scH_0$). 


The following result shows the isomorphism of the two homologies we have introduced above. 

\begin{prop} \label{prop:hilbnorm/1} 
The abstract and concrete simplicial Hilbert face homologies are isomorphic: for every $n\geq 0$, it holds that 
\begin{equation}
\rmH_{Dn}(\ol{\scH})\simeq\rmH_{Dn}({}^c\!\scH).
\label{normal7}
\end{equation}  
\end{prop}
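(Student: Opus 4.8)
The plan is to exhibit a degreewise linear isomorphism $\theta_n\colon{}^c\!\scH_n\to\ol{\scH}_n$ that intertwines the two boundary operators ${}^cQ_{Dn}$ and $\ol{Q}_{Dn}$, so that $\theta=\{\theta_n\}$ is an isomorphism of chain complexes $({}^c\!\scH,{}^cQ_D)\xrightarrow{\simeq}(\ol{\scH},\ol{Q}_D)$; the asserted homology isomorphism \ceqref{normal7} then follows at once from the functoriality of homology recorded after def. \cref{def:chcxmorph} (equivalently, $\theta^{-1}$ is again a chain map, so $\theta$ induces isomorphisms in every degree). The whole argument is just a comparison of the two ways of disposing of the degenerate subspace: passing to the quotient by ${}^s\!\scH_n$ versus restricting to its orthogonal complement ${}^c\!\scH_n={}^s\!\scH_n{}^\perp$.

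For the construction, let $p_n\colon\scH_n\to\ol{\scH}_n=\scH_n/{}^s\!\scH_n$ be the canonical quotient map, whose kernel is ${}^s\!\scH_n$, and set $\theta_n:=p_n\big|_{{}^c\!\scH_n}$. Because each $\scH_n$ is finite dimensional one has the orthogonal decomposition $\scH_n={}^s\!\scH_n\oplus{}^c\!\scH_n$, so ${}^c\!\scH_n$ meets $\ker p_n={}^s\!\scH_n$ only in $0$ and surjects onto every coset; hence $\theta_n$ is a linear isomorphism. At $n=0$ one has ${}^s\!\scH_0=0$ by def. \cref{def:degsub}, so $\theta_0$ is the identity of $\scH_0$, consistent with the degree--$0$ conventions fixed for both homologies.

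The crux is the intertwining relation $\theta_{n-1}\circ{}^cQ_{Dn}=\ol{Q}_{Dn}\circ\theta_n$. Fix $w\in{}^c\!\scH_n$. On one side, $\ol{Q}_{Dn}\theta_n(w)=\ol{Q}_{Dn}p_n(w)=p_{n-1}(Q_{Dn}w)$ by the definition of the induced operator $\ol{Q}_{Dn}$. On the other side, using \ceqref{normal4} and writing $\varPi_{n-1}$ for the orthogonal projector onto ${}^s\!\scH_{n-1}$,
\begin{equation}
\theta_{n-1}\big({}^cQ_{Dn}w\big)=p_{n-1}\big((1_{n-1}-\varPi_{n-1})Q_{Dn}w\big)=p_{n-1}(Q_{Dn}w)-p_{n-1}\big(\varPi_{n-1}Q_{Dn}w\big).
\end{equation}
Now $\varPi_{n-1}Q_{Dn}w$ lies in ${}^s\!\scH_{n-1}=\ker p_{n-1}$ by the very definition of $\varPi_{n-1}$, so the second term vanishes and the two expressions coincide. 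Thus $\theta$ is a chain isomorphism, and passing to homology gives \ceqref{normal7}.

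The only point requiring care — the ``main obstacle'', such as it is — is precisely this last cancellation: one must notice that the projector correction $\varPi_{n-1}$ built into ${}^cQ_{Dn}$ in \ceqref{normal4} contributes a vector lying in the degenerate subspace, which is exactly what $p_{n-1}$ annihilates. Everything else reduces to the standard finite--dimensional identification of the quotient $\scH_n/{}^s\!\scH_n$ with the orthogonal complement ${}^s\!\scH_n{}^\perp$; in particular, the well--definedness of $\ol{Q}_{Dn}$ (i.e.\ $Q_{Dn}{}^s\!\scH_n\subseteq{}^s\!\scH_{n-1}$) and of ${}^cQ_{Dn}$ have already been established earlier in the present subsection, so no further compatibility needs to be verified.
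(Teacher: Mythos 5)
Your proof is correct and is essentially the paper's own argument in streamlined form: your $\theta_n$ is exactly the paper's map $J_n$ (the canonical projection of ${}^c\!\scH_n$ onto $\ol{\scH}_n$), and your key cancellation $p_{n-1}(\varPi_{n-1}Q_{Dn}w)=0$ is precisely the step used in app. \cref{app:deginv} to verify \ceqref{normal9}. The paper packages the argument as a chain equivalence with maps $I_n$, $J_n$ and homotopies that turn out to vanish because $I_n$ and $J_n$ are mutually inverse, whereas you observe the invertibility directly and conclude via a chain isomorphism; the content is the same.
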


\begin{proof}
The proof of the isomorphism \ceqref{normal7} can be achieved by constructing a chain equivalence
of the abstract and concrete Hilbert complexes $(\ol{\scH},Q_D)$, $({}^c\!\scH{}, {}^cQ_D)$.
The chain equivalence consists of a sequence of chain operators $I_n:\ol{\scH}_n\rightarrow{}^c\!\scH_n$,
$J_n:{}^c\!\scH_n\rightarrow\ol{\scH}_n$, $n\geq 0$, such that the composite operators
$J_nI_n$, $I_nJ_n$ are chain homotopic to $\ol{1}_n$, ${}^c1_n$, respectively.
The property of $I_n$, $J_n$ being chain operators is just $I_n$, $J_n$ satisfying the relations 
\begin{subequations}
\label{normal8/9}
\begin{align}
&I_{n-1}\ol{Q}_{Dn}={}^cQ_{Dn}I_n,
\label{normal8}
\\
&J_{n-1}{}^cQ_{Dn}=\ol{Q}_{Dn}J_n
\label{normal9}
\end{align}
\end{subequations}
for $n\geq 1$. The chain homotopy of $J_nI_n$, $I_nJ_n$ and $1_n$, ${}^c1_n$ descends from the existence of operators
$\ol{W}_n:\ol{\scH}_n\rightarrow\ol{\scH}_{n+1}$,
${}^cW_n:{}^c\!\scH_n\rightarrow{}^c\!\scH_{n+1}$
such that
\begin{subequations}
\label{normal10/11}
\begin{align}
&J_nI_n-\ol{1}_n=\ol{Q}_{Dn+1}\ol{W}_n+\ol{W}_{n-1}\ol{Q}_{Dn},
\label{normal10}
\\
&I_nJ_n-{}^c1_n={}^cQ_{Dn+1}{}^cW_n+{}^cW_{n-1}{}^cQ_{Dn}
\label{normal11}
\end{align}
\end{subequations}
for all $n\geq 0$, where the second term in the right hand side of both relations is absent when $n=0$.
The diagram
\begin{equation}
\xymatrix@C=2.9pc @R=2.3pc
{\cdots\ar[r]^{\ol{Q}_{D3}~~~~~}
&\ol{\scH}_2
\ar[r]^{\ol{Q}_{D2}}\ar@/_/[d]_{I_2}\ar@/^.75pc/[l]^{\ol{W}_2}
&\ol{\scH}_1\ar[r]^{\ol{Q}_{D1}}\ar@/_/[d]_{I_1}\ar@/^.75pc/[l]^{\ol{W}_1}
&\ol{\scH}_0\ar@/_/[d]_{I_0}\ar@/^.75pc/[l]^{\ol{W}_0}\\
\cdots
\ar[r]_{{}^cQ_{D3}~~}
&{}^c\!\scH_2\ar[r]_{{}^cQ_{D2}~}\ar@/_/[u]_{J_2}\ar@/_.75pc/[l]_{{}^cW_3}
&{}^c\!\scH_1\ar[r]_{{}^cQ_{D1}~}\ar@/_/[u]_{J_1}\ar@/_.75pc/[l]_{{}^cW_1}
&{}^c\!\scH_0\ar@/_/[u]_{J_0}\ar@/_.75pc/[l]_{{}^cW_0}}.
\label{normal12}
\end{equation}
represents graphically the operator structure described above.

The chain equivalence $I_n$, $J_n$ has the following explicit form. 
$I_n$ is the operator from $\ol{\scH}_n$ to ${}^c\!\scH_n$
induced by the orthogonal projector $1_n-\varPi_n$ by virtue of the fact that  ${}^s\!\scH_n=\ker(1_n-\varPi_n)$. 
$J_n$ is the canonical projection of ${}^c\!\scH_n$ onto $\ol{\scH}_n$. In app. \cref{app:deginv}
it is shown that $I_n$, $J_n$ are both chain operators and are chain homotopic to $\ol{1}_n$, ${}^c1_n$, as required.
\end{proof}

\noindent
The isomorphism \ceqref{normal7} presumably reflects an equivalence of categories
of finite dimensional simplicial Hilbert spaces, $\ul{\rm fdsHilb}$, and the category
chain complexes of finite dimensional Hilbert spaces, $\ul{\rm ChfdHilb}$, 
as a version of the Dold--Kan correspondence \ccite{Dold:1957hsp,Dold:1957hnf,Kan:1958fic}
\footnote{$\vphantom{dot{\dot{f}}}$
  The author thanks U. Schreiber for suggesting this to him.}.

Because of the isomorphism \ceqref{normal7}, we shall no longer
distinguish the abstract and concrete simplicial Hilbert homologies.

The following theorem is the main result of this subsection.

\begin{theor} \label{prop:hilbnorm} (Normalized simplicial Hilbert homology theorem) For all $n\geq 0$,
\begin{equation}
\rmH_{Dn}({}^c\!\scH)\simeq\rmH_n(\sfX,\bbC). 
\label{normal13}
\end{equation}  
\end{theor}

\begin{proof}
The homology isomorphism \ceqref{normal13} follows readily from the isomorphisms
\ceqref{normal-3} and \ceqref{normal7}.
\end{proof}

The isomorphism \ceqref{normal13} provides an alternative pathway to the determination of the complex
simplicial homology of the simplicial set $\sfX$ grounded on normalized simplicial Hilbert homology.
As for the non normalized homology studied in subsect. \cref{subsec:smplxhomol}, the normalized homology
can be computed via finite dimensional Hodge theory. 

\begin{defi} \label{def:normsmplxlapl}
The normalized simplicial Hilbert Laplacians are
the operators ${}^cH_{DDn}:{}^c\!\scH_n\rightarrow{}^c\!\scH_n$, $n\geq 0$, given by 
\begin{equation}
{}^cH_{DDn}={}^cQ_{Dn}{}^+{}^cQ_{Dn}+{}^cQ_{Dn+1}{}^cQ_{Dn+1}{}^+.
\label{normal14}
\end{equation}
\end{defi}

\noindent
Above, it is tacitly understood that the first term in the right hand side of
\ceqref{normal14} is absent when $n=0$.

The following theorem, like theor. \cref{prop:hodge}, relates the normalized simplicial Hilbert
homology spaces to the kernels of the normalized simplicial Hilbert Laplacians.

\begin{theor} \label{prop:normhodge} (Normalized simplicial Hilbert Hodge theorem)
The isomorphism
\begin{equation}
\rmH_{Dn}({}^c\!\scH)\simeq\ker{}^cH_{DDn} 
\label{normal15}
\end{equation}
holds for every $n\geq 0$. Consequently
\begin{equation}
\rmH_n(\sfX,\bbC)\simeq\ker{}^cH_{DDn} 
\label{normal15/1}
\end{equation}
\end{theor}

\begin{proof}
The proof of theor. \cref{prop:normhodge} is based again on the finite dimensional 
Hodge theorem reviewed in app. \cref{app:hodge}. 
\end{proof}

The isomorphism \ceqref{normal15} provides a potentially more efficient way 
of computing the simplicial homology $\rmH(\sfX,\bbC)$ of $\sfX$ with complex coefficients
than the isomorphism \ceqref{smplxhomo-3/1}, as by virtue of it non degenerate simplices have been
effectively excised.

We conclude this subsection presenting explicit expressions of some of the 
operators introduced above for their relevance and later usefulness. In what follows, 
${}^c\sfX_n=\sfX_n\setminus{}^s\sfX_n$ denotes the set of non degenerate
$n$--simplices of $\sfX$. 
The normalized simplicial Hilbert face boundary operator ${}^cQ_{Dn}$ reads as 
\begin{equation}
{}^cQ_{Dn}
=\mycom{{}_\sss}{{}_{0\leq i\leq n}}(-1)^i
\mycom{{}_\sss}{{}_{\sigma_n\in{}^c\sfX_n,\hfpt d_{ni}\sigma_n\in{}^c\sfX_{n-1}}}
\ket{d_{ni}\sigma_n}\bra{\sigma_n}\big|_{{}^c\!\scH_n}.
\label{normal17}
\end{equation}
The adjoint ${}^cQ_{Dn}{}^+$ of ${}^cQ_{Dn}$ is similarly given by 
\begin{equation}
{}^cQ_{Dn}{}^+
=\mycom{{}_\sss}{{}_{0\leq i\leq n}}(-1)^i
\mycom{{}_\sss}{{}_{\sigma_{n-1}\in{}^c\sfX_{n-1}}}\mycom{{}_\sss}{{}_{\omega_n\in\sfD_{ni}(\sigma_{n-1})\cap{}^c\sfX_n}}
\ket{\omega_n}\bra{\sigma_{n-1}}\big|_{{}^c\!\scH_{n-1}}.
\label{normal18}
\end{equation}
Finally, the normalized simplicial Hilbert Laplacian ${}^cH_{DDn}$ takes the form
\begin{align}
{}^cH_{DDn}
&=\mycom{{}_\sss}{{}_{0\leq i,j\leq n}}(-1)^{i+j}
\mycom{{}_\sss}{{}_{\sigma_n\in{}^c\sfX_n,\,d_{nj}\sigma_n\in{}^c\sfX_{n-1}}}
\mycom{{}_\sss}{{}_{\omega_n\in\sfD_{ni}(d_{nj}\sigma_n)\cap{}^c\sfX_n}}  
\ket{\omega_n}\bra{\sigma_n}\big|_{{}^c\!\scH_n}
\label{normal19}
\\
&+\mycom{{}_\sss}{{}_{0\leq i,j\leq n+1}}(-1)^{i+j}
\mycom{{}_\sss}{{}_{\sigma_n\in{}^c\sfX_n}}
\mycom{{}_\sss}{{}_{\omega_{n+1}\in\sfD_{n+1j}(\sigma_n)\cap{}^c\sfX_n,\,d_{n+1i}\omega_{n+1}\in{}^c\sfX_n}}
\ket{d_{n+1i}\omega_{n+1}}\bra{\sigma_n}\big|_{{}^c\!\scH_n}.
\nonumber
\end{align}
These expressions follow by straightforward calculations
from relations \ceqref{qusmplx1}, \ceqref{qusmplx3} and \ceqref{smplxhomol1}
and the identity $\varPi_n=\sum_{\sigma_n\in{}^c\sfX_n}\ket{\sigma_n}\bra{\sigma_n}$.



\subsection{\textcolor{blue}{\sffamily Simplicial quantum circuits}}\label{subsec:simpcirc}

In this section, we introduce the notion of simplicial quantum circuit, a special kind of quantum
circuit naturally emerging in the quantum simplicial set--up and capable in principle of performing
meaningful simplicial computations, and study its properties.

Our treatment will be admittedly idealized.  We shall assume, in fact,
not very realistically, that no ancilla registers and no intermediate
measurements are involved.

We consider again a parafinite simplicial set $\sfX$ and the associated simplicial Hilbert
space $\scH$.

The simplicial quantum register of $\sfX$ is a pre-Hilbert space $\scH^{(\infty)}$ that stores all the simplicial
data of $\sfX$ in the same way as a quantum register is a Hilbert space $\bbC^{2\otimes n}$ that stores all the
configurations of a classical $n$ bit string. Mathematically, $\scH^{(\infty)}$ is the infinite dimensional
pre--Hilbert space
\begin{equation}
\scH^{(\infty)}=\mycom{{}_\ddd}{{}_{0\leq n < \infty}}\scH_n,
\label{newcirc1}
\end{equation}
where the direct summation is purely algebraic. 
  
A simplicial quantum circuit is a quantum circuit based on the register $\scH^{(\infty)}$,
whose functioning is compatible with the structure of the underlying simplicial set $\sfX$.
Mathematically, so, a simplicial quantum circuit is a unitary operator
$U\in\msU(\scH^{(\infty)})$ that satisfies certain simplicial conditions. 

\begin{defi} \label{defi:ssqc}
A simple simplicial quantum circuit is a collection of unitary operators $U_n\in\msU(\scH_n)$ with $n\in\bbN$ such that 
\begin{align}
&U_{n-1}D_{ni}=D_{ni}U_n&& \text{for ~$0\leq i\leq n$}, 
\label{simpcirc1}
\\  
&U_{n+1}S_{ni}=S_{ni}U_n&& \text{for ~$0\leq i\leq n$}.
\label{simpcirc2}  
\end{align}
\end{defi}

\noindent
In the language of simplicial Hilbert theory, a simple simplicial quantum circuit $\{U_n\}$ is therefore
a simplicial unitary operator of the simplicial Hilbert space $\scH$ (cf. eqs. \ceqref{hsmor1/2}).
Intuitively, for each $n$ the operator $U_n$ embodies a quantum circuit implementing a reversible
computation involving the simplices of $\sfX_n$. For the above notion to be really meaningful,
the computations performed for the various values of $n$ should have the same simplicial nature 
and be compatible with the simplicial face and degeneracy relations occurring between the underlying
simplicial data. These properties are precisely codified by relations \ceqref{simpcirc1}, \ceqref{simpcirc2}.

A simple quantum circuit $\{U_n\}$ can be thought of as a whole collection of simplicial quantum gates
of the form 
\begin{equation}
U^{(n)}=U_n\oplus\mycom{{}_\ddd}{{}_{0\leq n' < \infty,n'\neq n}}1_{n'}. 
\label{newcirc2}
\end{equation}
The unitary operator $U\in\msU(\scH^{(\infty)})$ corresponding to the circuit is
\begin{equation}
U=\mycom{{}_\ppp}{{}_{0\leq n<\infty}}U^{(n)}=\mycom{{}_\ddd}{{}_{0\leq n<\infty}}U_n.
\label{newcirc3}
\end{equation}
  
We notice that simple simplicial quantum circuits form a group $\UU(\scH)$ under simplicial
degreewise multiplication and inversion.

Simple simplicial quantum circuits can perform only computations at fixed simplicial degree, an
important limitation. We need more general circuits for more general computations.
The simplicial conditions which a general simplicial quantum circuit obeys should be an appropriate
generalization of those obeyed by simple circuits. To formulate it, we need to introduce an appropriate notation.

For a finite subset $A\subset\bbN$ with $A\neq\emptyset$, the simplicial $A$--subregister is the 
finite dimensional Hilbert space \hphantom{xxxxxxxx}
\begin{equation}
\scH_A=\mycom{{}_\ddd}{{}_{n\in A}}\scH_n\subset\scH^{(\infty)}. \vphantom{\bigg]}
\label{newcirc4}
\end{equation}

For a finite subset $A\subset\bbN$ with $A\neq\emptyset$, we let
$F_A$ be the set of all mappings $\alpha:A\rightarrow\varSigma$ with the property that $\alpha_0=+1$ when $0\in A$,
where $\varSigma=\{-1,+1\}$ is the sign alphabet. We also set $\bbN_n=\{n'|n'\in\bbN, 0\leq n'\leq n\}$, where $n\in\bbN$.
For $\alpha\in F_A$ and $i\in\prod_{n\in A}\bbN_n$, we define the operator
$X^{(\alpha)}{}_{Ai}:\scH_A\rightarrow\scH_{A+\alpha}$ by 
\begin{equation}
X^{(\alpha)}{}_{Ai}=\mycom{{}_\ddd}{{}_{n\in A}}X^{(\alpha_n)}{}_{ni_n}, \vphantom{\bigg]}
\label{newcirc5}
\end{equation}
where $X^{(-1)}{}_{ni}=D_{ni}$, $X^{(+1)}{}_{ni}=S_{ni}$ and $A+\alpha=\{n+\alpha_n|n\in A\}\subset\bbN$. 

\begin{defi} \label{defi:pasqc}
Let $p\in\bbN$, $p>0$. A $p$--ary simplicial quantum circuit consists of a collection of unitary operators
$U_A\in\msU(\scH_A)$ with $A\subset\bbN$ and $|A|=p$ such that for all $\alpha\in F_A$ and $i\in\prod_{n\in A}\bbN_n$ 
\begin{equation}
X^{(\alpha)}{}_{Ai}U_A=U_{A+\alpha}X^{(\alpha)}{}_{Ai}. 
\label{newcirc6}
\end{equation}
\end{defi}

\noindent
The simple simplicial quantum circuits introduced in def. \cref{defi:ssqc}
are just $1$--ary simplicial quantum circuits. 

A $p$-ary quantum circuit $\{U_A\}$ encodes a family of simplicial quantum gates, 
\begin{equation}
U^{(A)}=U_A\oplus\mycom{{}_\ddd}{{}_{n\not\in A}}1_n. 
\label{newcirc7}
\end{equation}
Unlike in the simple case, these gates generally do not commute since the subspaces $\scH_A$ may have non trivial
intersections. The unitary operator $U\in\msU(\scH^{(\infty)})$ of the circuit 
is gotten by multiplying some subset of simplicial gates in a prescribed order.

If $\{U_n\}$ is a simple simplicial quantum circuit, the operators $U_A=\bigoplus_{n\in A}U_n$,
$A\subset\bbN$ and $|A|=p$, constitute a $p$--ary simplicial quantum circuit.
More generally, given a collection $\{U_{\alpha A_\alpha}\}$ of $p_\alpha$--ary simplicial quantum circuits,
$\alpha=1,\ldots,a$, one can construct a $p$--ary simplicial quantum circuit $\{U_A\}$ with $p=\sum_\alpha p_\alpha$
as follows. Every subset $A\subset\bbN$ with $|A|=p$ has a unique partition $A=\bigcup_\alpha A_\alpha$
such that  $A_\alpha\subset\bbN$ with $|A_\alpha|=p_\alpha$ and that for every $\alpha<\beta$, $m\in A_\alpha$, $n\in A_\beta$
one has $m<n$. Then, $U_A=\bigoplus_\alpha U_{\alpha A_\alpha}$. 

\begin{defi}
A $p$--ary simplicial quantum circuit $\{U_A\}$ of kind constructed above is called reducible.
\end{defi}

\noindent
They are so because \pagebreak they have a fixed non trivial
block diagonal structure on each simplicial $A$--subregister.

We present now a template for generating interesting examples of simple simplicial quantum circuits. 
The data of the construction are the following:

\begin{enumerate} 

\vspace{-1mm}\item a pair of parafinite simplicial sets $\sfX$, $\sfX'$;

\vspace{-2mm}\item a simplicial morphism $\phi:\sfX\rightarrow\sfX'$;

\vspace{-2mm}\item a structure of simplicial group on $\sfX'$.

\vspace{-1mm}
  
\end{enumerate}

\noindent
The role played by the simplicial group structure of $\sfX'$ (cf. ex. \cref{exa:simpgr}) is essential.

We now turn to the Cartesian product $\sfX\times\sfX'$ of $\sfX$, $\sfX'$ (cf. def. \cref{def:simplprod}). 
By means of the components $\phi_n$ of $\phi$, we define maps
$\hat\phi_n:\sfX\times\sfX'{}_n\rightarrow\sfX\times\sfX'{}_n$ by setting
\begin{equation}
\hat\phi_n(\sigma_n,\sigma'{}_n)=(\sigma_n,\sigma'{}_n\phi_n(\sigma_n)).
\label{simpcirc8}  
\end{equation}
The second component of the pair in the right hand side exhibits the product of the simplices
$\sigma'{}_n$, $\phi_n(\sigma_n)\in\sfX'{}_n$ in the group $\sfX'{}_n$.
Unlike the $\phi_n$, the $\hat\phi_n$ are always invertible, as they are injective and the sets
$\sfX\times\sfX'{}_n$ are finite. The $\hat\phi_n$ constitute indeed a reversible form of the $\phi_n$
with a structure analogous to that of similar maps employed in reversible computation.

Since $\sfX'$ is a simplicial group, its face and degeneracy maps $d'{}_{ni}$, $s'{}_{ni}$
are group morphisms. Furthermore, as $\phi$ is a simplicial morphism, its components $\phi_n$
satisfy the simplicial relations \ceqref{smplsets6/7}. Exploiting these properties, it is
straightforward to check
that the maps $\hat\phi_n$ obey the \ceqref{smplsets6/7} as well and so
are the components of a simplicial morphism $\hat\phi:\sfX\times\sfX'\rightarrow\sfX\times\sfX'$.
Being the $\hat\phi_n$ invertible, $\hat\phi$ is in fact an isomorphism. 

We define next operators $\hat U_{\phi n}:\scH\otimes\scH'{}_n\rightarrow\scH\otimes\scH'{}_n$ by 
\begin{equation}
\hat U_{\phi n}=\mycom{{}_\sss}{{}_{(\sigma_n,\sigma'{}_n)\in\sfX_n\times\sfX'{}_n}}
\ket{\hat\phi_n(\sigma_n,\sigma'{}_n)}\bra{(\sigma_n,\sigma'{}_n)},
\label{simpcirc9}  
\end{equation}
where the kets $\ket{(\sigma_n,\sigma'{}_n)}=\ket{\sigma_n}\otimes\ket{\sigma'{}_n}$ are
the $n$--simplex basis of $\scH\otimes\scH'{}_n$.

\begin{prop}
The operators $\hat U_{\phi n}$, $n\in\bbN$, constitute a simple simplicial quantum circuit $\hat U_\phi$ 
of $\sfX\times\sfX'$.
\end{prop}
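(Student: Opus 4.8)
The plan is to recognize that $\hat U_{\phi n}$ is nothing but the simplicial Hilbert morphism operator attached to $\hat\phi$ by def.~\cref{def:morphop}, now viewed as an endomorphism of the simplicial Hilbert space of the Cartesian product $\sfX\times\sfX'$. Comparing \ceqref{simpcirc9} with \ceqref{hsmor0}, the operator $\hat U_{\phi n}$ is exactly $\varPhi_n$ with $\phi$ replaced by $\hat\phi$ and the ambient simplex Hilbert space taken to be $\clh(\sfX\times\sfX')_n\simeq\scH\otimes\scH'{}_n$ via the isomorphism $\varLambda$ of theor.~\cref{prop:hqsfbimod}. Under this identification the face and degeneracy operators of $\sfX\times\sfX'$ become $D_{ni}\otimes D'{}_{ni}$ and $S_{ni}\otimes S'{}_{ni}$, which are precisely the face and degeneracy operators of the simplicial Hilbert space $\scH\otimes\scH'$ (cf.\ def.~\cref{def:simplprod}). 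It therefore suffices to verify the two defining requirements of def.~\cref{defi:ssqc}: unitarity of each $\hat U_{\phi n}$, and commutation with these face and degeneracy operators.

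For unitarity I would invoke the fact, established in the paragraph preceding the statement, that $\hat\phi$ is a simplicial \emph{isomorphism}, so that each component $\hat\phi_n$ is a bijection of the finite $n$--simplex set $\sfX_n\times\sfX'{}_n$. Consequently $\hat U_{\phi n}$ permutes the orthonormal $n$--simplex basis $\ket{(\sigma_n,\sigma'{}_n)}$ of $\scH\otimes\scH'{}_n$, and a basis permutation operator is automatically unitary. Equivalently, the morphism operator of the inverse isomorphism $\hat\phi^{-1}$ furnishes the inverse, which by \ceqref{hsmor3} coincides with the adjoint, so that $\hat U_{\phi n}{}^+=\hat U_{\phi n}{}^{-1}$. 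Hence $\hat U_{\phi n}\in\msU(\scH\otimes\scH'{}_n)$.

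For the simplicial conditions \ceqref{simpcirc1}, \ceqref{simpcirc2} I would simply appeal to the simplicial Hilbert morphism identities \ceqref{hsmor1/2}. Since $\hat\phi$ is a simplicial morphism of $\sfX\times\sfX'$ into itself, its morphism operators $\hat U_{\phi n}$ satisfy these identities with source and target simplicial Hilbert space both equal to $\scH\otimes\scH'$; written out, this reads $\hat U_{\phi n-1}(D_{ni}\otimes D'{}_{ni})=(D_{ni}\otimes D'{}_{ni})\hat U_{\phi n}$ and $\hat U_{\phi n+1}(S_{ni}\otimes S'{}_{ni})=(S_{ni}\otimes S'{}_{ni})\hat U_{\phi n}$ for $0\leq i\leq n$, which is exactly \ceqref{simpcirc1}, \ceqref{simpcirc2}. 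Combined with unitarity, this shows $\{\hat U_{\phi n}\}$ is a simple simplicial quantum circuit of $\sfX\times\sfX'$.

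The argument involves no genuine obstacle: its only real content, namely that the $\hat\phi_n$ assemble into a simplicial self--map of $\sfX\times\sfX'$ (which rests on $\sfX'$ being a simplicial group and $\phi$ a simplicial morphism), has already been carried out just above the statement. If one prefers to bypass the functorial language, the lone point requiring a line of care is the basis--level check that the two commutation relations reduce to the simplicial naturality square $\hat\phi_{n-1}(d\times d'){}_{ni}=(d\times d'){}_{ni}\hat\phi_n$ already in hand, applied to the basis vector $\ket{(\sigma_n,\sigma'{}_n)}$; the degeneracy case is identical.
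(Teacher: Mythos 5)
Your argument is correct and is essentially the paper's own: unitarity of each $\hat U_{\phi n}$ from the invertibility of $\hat\phi_n$ (a permutation of the $n$--simplex basis), and the conditions \ceqref{simpcirc1}, \ceqref{simpcirc2} from the simplicial Hilbert morphism identities \ceqref{hsmor1/2} applied to the simplicial isomorphism $\hat\phi$. The paper even records your framing explicitly in the remark following the proof, identifying the $\hat U_{\phi n}$ as the components of the simplicial Hilbert automorphism associated with $\hat\phi$.
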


\begin{proof}
The $\hat U_{\phi n}$ are \pagebreak unitary operators since the maps $\hat\phi_n$ are invertible.
Using that $\hat\phi$ is a simplicial morphism, it is now straightforward to verify that 
the $\hat U_{\phi n}$ obey relations \ceqref{simpcirc1}, \ceqref{simpcirc2} and are therefore
the component of a simplicial quantum circuit $\hat U_\phi$ of $\sfX\times\sfX'$.
\end{proof}

\noindent Indeed, the $\hat U_{\phi n}$ are just the components of the simplicial Hilbert automorphism
$\hat U_\phi:\scH\otimes\scH'\rightarrow\scH\otimes\scH'$ associated with the simplicial
isomorphism $\hat\phi$. 

The eventual relevance of the construction just outlined for useful applications
is still to be clarified because of its very special nature.
The example derived by it which we illustrated next is anyway worthy of mentioning.


\begin{exa} \label{exa:sqcsmtsg}
Simplicial quantum circuits of a simplicial group.

\noindent
{\rm
Let $\sfX$ be a simplicial group. Since the simplex sets $\sfX_n$ are groups, 
multiplication and inversion maps $\mu_n:\sfX_n\times\sfX_n\rightarrow\sfX_n$
and $\iota_n:\sfX_n\rightarrow\sfX_n$ are defined at each degree $n$. The property of the
face and degeneracy maps $d_{ni}$, $s_{ni}$ as group morphisms entails that the $\mu_n$
and $\iota_n$ are the components of simplicial morphisms $\mu:\sfX\times\sfX\rightarrow\sfX$,
and $\iota:\sfX\rightarrow\sfX$. Application of the construction scheme illustrated above
shows that with these there are associated simplicial quantum circuits $U_\mu$ and
$U_\iota$ of $\sfX\times\sfX\times\sfX$ and $\sfX\times\sfX$, respectively.
}
\end{exa}

The challenge facing one presently is constructing irreducible simplicial quantum circuits beyond
the simple ones and more broadly to devise a classification scheme of such circuits
reflecting basic properties of simplicial set theory



\subsection{\textcolor{blue}{\sffamily Simplicial quantum circuits and homology}}\label{subsec:simpcirchom}

In this final subsection, we shall examine the interplay of simplicial quantum circuits
studied in sect. \cref{subsec:simpcirc} and simplicial Hilbert homology as analyzed in subsects.
\cref{subsec:smplxhomol}, \cref{subsec:normal}. Our discussion will be limited
to simple simplicial quantum circuits (cf. def. \cref{defi:ssqc}), its extension to
$p$--ary simplicial quantum circuits (cf. def. \cref{defi:pasqc}) being straightforward.

Owing to  \ceqref{smplxhomol1} and \ceqref{simpcirc1}, the components $U_n$ of a
simple simplicial quantum circuit intertwine
also the simplicial Hilbert face coboundary operators $Q_{Dn}$,
\begin{equation}
U_{n-1}Q_{Dn}=Q_{Dn}U_n, 
\label{simpcirc3}  
\end{equation}
and hence define by eq. \ceqref{smplxhomol12}
a unitary chain endomorphism of the simplicial Hilbert face chain complex $(\scH,Q_D)$
studied in subsect. \cref{subsec:smplxhomol}. 
For each degree $n$, so, 
there exists an automorphism of the homology space $\rmH_{Dn}(\scH)$ associated with $U_n$
and hence, by the isomorphism \ceqref{smplxhomo-3}, also one of the simplicial homology $\rmH_n(\sfX,\bbC)$.
Such automorphism can be concretely described as follows. 
Relations \ceqref{smplxlapl1} and \ceqref{simpcirc3} entail that   
simplicial Hilbert Laplacian $H_{DDn}$ commutes with the $U_n$,
\begin{equation}
H_{DDn}U_n=U_nH_{DDn}. 
\label{simpcirc4}  
\end{equation}
By virtue of this relation it holds that that $U_n\ker H_{DDn}\subseteq\ker H_{DDn}$. Under the isomorphisms
$\rmH_{Dn}(\scH)\simeq\ker H_{DDn}$ of eq. \ceqref{smplxhomol16}, the resulting action of $U_n$ on
$\ker H_{DDn}$ re\-presents the aforementioned homology automorphism.

Simple simplicial quantum circuits can be analyzed in similar fashion also in the normalized simplicial
Hilbert homological framework elaborated in subsect. \cref{subsec:normal}, which as we have seen
is the one best suited for homology computation.
Let us see this in some detail. A straightforward verification using \ceqref{qusmplx32} and \ceqref{simpcirc2} shows
that for each $n$ the operator $U_n$ commutes with the projectors
$\varPi_{ni}$. By \ceqref{normal2}, $U_n$ commutes then also with the projector $\varPi_n$,
\begin{equation}
U_n\varPi_n=\varPi_nU_n.
\label{simpcirc5}  
\end{equation}
The degenerate $n$--simplex space ${}^s\!\scH_n$ in eq. \ceqref{normal0} is so invariant under $U_n$,
as $\varPi_n$ projects on ${}^s\!\scH_n$.
Its orthogonal complement ${}^c\!\scH_n$ is then
invariant too, by the unitarity of $U_n$, and the restriction ${}^cU_n$ of $U_n$ to ${}^c\!\scH_n$
is a unitary operator of ${}^c\!\scH_n$. By \ceqref{simpcirc3}, the normalized boundary
operator ${}^cQ_{Dn}$ defined in \ceqref{normal4} satisfies then
\begin{equation}
{}^cU_{n-1}{}^cQ_{Dn}={}^cQ_{Dn}{}^cU_n.
\label{simpcirc6}  
\end{equation}
Relation \ceqref{simpcirc6}, analogously to \ceqref{simpcirc3}, shows that the circuit defines a unitary chain
endomorphism of the normalized simplicial Hilbert face chain complex $({}^c\!\scH, {}^cQ_D)$
and so each circuit component $U_n$ yields an automorphism of the normalized homology space $\rmH_{Dn}({}^c\!\scH)$
and consequently, by virtue of the isomorphism \ceqref{normal-3} again, one
of the simplicial homology $\rmH_n(\sfX,\bbC)$.
The automorphism can be described similarly to the unnormalized case. From \ceqref{normal14} and \ceqref{simpcirc6},
the normalized \pagebreak simplicial Hilbert Laplacian ${}^cH_{DDn}$ obeys the relations 
\begin{equation}
{}^cH_{DDn}{}^cU_n={}^cU_n{}^cH_{DDn}
\label{simpcirc7}  
\end{equation}
analogous to \ceqref{simpcirc4}. Proceeding as done earlier, one finds that the homology 
automorphism is realized as an action of ${}^cU_n$ on $\ker{}^cH_{DDn}$. 

\subsection{\textcolor{blue}{\sffamily Expected extensions of the quantum simplicial framework
}}\label{subsec:qredux}

In this concluding subsection, we discus a few possible ramifications of the quantum simplicial framework
we have developed above. 

In computational topology, one aims to the computation of interesting topological invariants
of a relevant topological space, in particular this latter's homology. This requires a simplicial model
of the topological space. By its nature the model is not unique. While the results the model
provides cannot depend on its choice, the computational effort required to obtain those results does.
This brings to the forefront the problem of reduction: given a simplicial model of a topological space,
derive from it an equivalent simpler model that yields the same results at a lower computational cost.
The natural question arises about whether reduction can be implemented in the quantum simplicial framework
elaborated in this paper.

Our simplicial set--up involves a fixed parafinite simplicial set as an input datum, that is a collection
of simplices sorted according to their degrees together with a set of face and degeneracy maps. By
contrast, a reduction algorithm changes this datum. A possible way of circumventing the problem arising here
would be working within a large ``ambient' simplicial set, in which the given simplicial set and its reductions
are contained as subsets. It remains to be seen whether this is indeed feasible and
the resource cost of a set--up of this kind does not offset the computational advantage of reduction. 

We have characterized a simplicial quantum circuit as a quantum circuit based on the simplicial quantum register 
whose operation is compatible with the structure of the underlying simplicial set. 
Mathematically, a simplicial quantum circuit is a collection of unitary operators satisfying the simplicial conditions
stated in def. \cref{defi:ssqc} in the simple case and in def. \cref{defi:pasqc} in the general case.
Any simplicial quantum circuit performing an ordinary simplicial computation should satisfy the conditions
listed in those definitions. However, we ask, is any simplicial quantum circuit associated with one such calculation?

The problem posed in the previous paragraph is in a sense one of `reversed engineering'.  
We cannot provide a solution presently. We note however that a negative answer would show the existence of
new 'quantum' operations in algebraic topology, whose import for this discipline would have to be
explored.

\vfill\eject

\renewcommand{\sectionmark}[1]{\markright{\thesection\ ~~#1}}

\section{\textcolor{blue}{\sffamily Implementation of the quantum simplicial set framework}}\label{sec:simpcircappls}

In this section, relying on the quantum simplicial framework worked out in
sect. \cref{sec:setup}, we examine the problems which may arise in the implementation
of simplicial set theoretic topological algorithms in a quantum
computer taking into account its finite storage capabilities.
The results found are only preliminary and the issue will require further
closer examination in future work. We present no new algorithms and limit ourselves to
formulate some necessary conditions for their working.

The issues analyzed below range from truncation and skeletonization of a parafinite simplicial set to
its digital encoding by simplex counting and parametrizing. The way such operations 
are implemented in the quantum simplicial framework is then investigated. 
We also outline mainly for illustrative purposes
an algorithmic scheme combining a number of basic quantum algorithms
capable of computing the complex simplicial homology spaces and Betti numbers
of the simplicial set along the lines of that of ref. \ccite{Lloyd:2014lgz}. 


\subsection{\textcolor{blue}{\sffamily Truncating and skeletonizing simplicial sets and Hilbert spaces
}}\label{subsect:trunc}

Finite simplicial complexes have a finite total number of simplices.
By contrast, parafinite simplicial sets always have an infinite total
number of simplices. Even the non degenerate simplices
may be infinitely many.

Algorithms implemented on computers can process only a finite amount of input data.
Therefore, any algorithms of computational topology cannot be grounded on a simplicial set 
containing infinitely many simplices but only on an approximation of it including only finitely many
of them. This is achieved by setting a cut--off on simplicial degree so that only simplices
of degree not exceeding the cut-off are kept. This approach goes under the name of simplicial
truncation. Next, we describe this approach in more formal terms,  

Fix a cut--off integer $N\in\bbN$. 

\begin{defi} \label{def:trsset}
An $N$--truncated simplicial set $\sfX$ consists of a collection of sets $\sfX_n$, $0\leq n\leq N$,
and mappings $d_{ni}:\sfX_n\rightarrow \sfX_{n-1}$, $1\leq n\leq N$, $i=1,\dots,n$, and 
$s_{ni}:\sfX_n\rightarrow \sfX_{n+1}$, $0\leq n\leq N-1$, $i=1,\dots,n$, obeying the simplicial relations
\eqref{smplsets1/5} when defined. 
\end{defi}

\noindent Comparing defs. \cref{def:sset} and \cref{def:trsset}, we realize that an $N$--truncated simplicial set
is much like a simplicial set except for the existence of an upper bound $N$ to simplicial degree. 

\begin{defi} \label{def:trsmor}
A morphism $\phi:\sfX\rightarrow\sfX'$ of $N$--truncated simplicial sets consists in a collection of maps
$\phi_n:\sfX_n\rightarrow\sfX'{}_n$ with $0\leq n\leq N$ obeying the simplicial morphism relations
\ceqref{smplsets6/7} when defined. 
\end{defi}

\noindent Again, inspection of defs. \cref{def:smor} and \cref{def:trsmor} reveals that the simplicial morphisms
of $N$--truncated simplicial sets are the truncated analog of the simplicial morphisms of simplicial sets.
The operations of Cartesian product and disjoint union extend also to $N$--truncated simplicial sets.
$N$--truncated simplicial sets form so a bimonoidal category $\ul{\rm sSet}_N$ analogous to the
bimonoidal category $\ul{\rm sSet}$ of simplicial sets. The homology of $N$--truncated simplicial sets is defined
in the usual manner though it is limited to degree $n\leq N-1$.

There exists an obvious truncation functor $\tr_N:\ul{\rm sSet}\rightarrow\ul{\rm sSet}_N$
which discards all the simplices of degree $n>N$ of the simplicial sets on which it acts.  It can be shown
that $\tr_N$ admits a left adjoint functor $\lk_N:\ul{\rm sSet}_N\rightarrow\ul{\rm sSet}$, its so--called 
left Kan extension \ccite{Kan:1958afc}. The resulting composite functor
$\sk_N=\lk_N\circ\tr_N:\ul{\rm sSet}\rightarrow\ul{\rm sSet}$ goes under the name of $N$-skeleton functor 
and can be characterized as follows. 
If $\sfX\in\ul{\rm sSet}$ is a simpli\-cial set, then $\sk_N\sfX$ is the smallest simplicial subset of $\sfX$
such that $\sk_N\sfX_n=\sfX_n$ for $n\leq N$ and $\sk_N\sfX_n\subseteq{}^s\sfX_n$ for $n>N$,
where ${}^s\sfX_n\subseteq\sfX_n$ is the subset of degenerate $n$--simplices (cf. subsect.
\cref{subsec:smplsets}, def. \cref{defi:degsiplx}).
In this way, $\sk_N\sfX$ reproduces $\sfX$ in degree $n\leq N$ whilst in degree $n>N$ it keeps only certain degenerate
simplices of $\sfX$.


\begin{defi}
A simplicial set $\sfX$ is called $N$--skeletal if $\sfX=\sk_N\sfX^*$
for some simplicial set $\sfX^*$.   
\end{defi}

Thee following property of simplicial homology is now fairly
evident. 

\begin{prop}
Let $\msG$ be an Abelian group. Then, it holds that in degree $n\leq N-1$
$\rmH_n(\sfX,\msG)\simeq\rmH_n(\tr_N\sfX,\msG)\simeq\rmH_n(\sk_N\sfX,\msG)$. 
\end{prop}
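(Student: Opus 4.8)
The plan is to exploit the fact that the degree-$n$ homology of a chain complex is a \emph{local} invariant in the degree: since $\rmH_n(\,\cdot\,,\msG)=\ker\partial_n/\ran\partial_{n+1}$, it is determined entirely by the three chain groups in degrees $n-1,n,n+1$ together with the two boundary maps $\partial_{n+1}$ and $\partial_n$ between them. Consequently, if two simplicial (or truncated simplicial) sets give rise to chain complexes that coincide---chain groups and boundary maps alike---in degrees $n-1$, $n$, and $n+1$, then their $n$-th homology groups are equal. Everything reduces to checking this degreewise agreement in the stated range.

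First I would treat the truncation. By def. \cref{def:trsset}, $\tr_N\sfX$ retains the simplex sets $\sfX_m$ and all face and degeneracy maps for $0\le m\le N$, so the associated chain complex $\sfC_\bullet(\tr_N\sfX,\msG)$ agrees with $\sfC_\bullet(\sfX,\msG)$ verbatim in every degree $m\le N$, boundary maps included. For $n\le N-1$ one has $n+1\le N$, whence the triple of groups $\sfC_{n-1},\sfC_n,\sfC_{n+1}$ and the maps $\partial_n,\partial_{n+1}$ are identical for the two complexes. The locality observation then yields $\rmH_n(\sfX,\msG)\simeq\rmH_n(\tr_N\sfX,\msG)$ in this range.

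The skeleton is handled in the same spirit. Using the characterization of $\sk_N\sfX$ recorded just above the statement---it is the simplicial subset of $\sfX$ with $\sk_N\sfX_m=\sfX_m$ for $m\le N$ and $\sk_N\sfX_m\subseteq{}^s\sfX_m$ for $m>N$---the inclusion $\sk_N\sfX\hookrightarrow\sfX$ is an \emph{equality} of simplex sets in degrees $m\le N$. Because $\sk_N\sfX$ is a genuine simplicial subset, its face maps are the restrictions of those of $\sfX$ and therefore coincide with them wherever the simplex sets coincide; hence $\sfC_\bullet(\sk_N\sfX,\msG)$ and $\sfC_\bullet(\sfX,\msG)$ again agree in all degrees $m\le N$. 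The extra degenerate simplices that $\sk_N\sfX$ carries in degrees $m>N$ affect only $\rmH_m$ for $m\ge N$ and are thus invisible in our range. Applying the locality observation once more, for $n\le N-1$ we obtain $\rmH_n(\sk_N\sfX,\msG)\simeq\rmH_n(\sfX,\msG)$, and chaining the two isomorphisms completes the argument.

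There is essentially no hard step here; the only points demanding care are the bookkeeping that forces the bound $n\le N-1$ (one needs $\partial_{n+1}$, hence degree-$(n+1)\le N$ data), and the remark that for the skeleton the coincidence is literal---the simplex sets are equal, not merely isomorphic---so the boundary operators match on the nose rather than up to a chain isomorphism. If a more structural formulation is preferred, I would instead observe that both the truncation of the complex and the inclusion of $\sk_N\sfX$ induce chain maps that are isomorphisms in all degrees $m\le N$, and invoke the standard fact that such a chain map is an isomorphism on $\rmH_n$ for $n\le N-1$ (and merely an epimorphism in degree $N$, which is exactly why the truncated homology is only meaningful up to degree $N-1$).
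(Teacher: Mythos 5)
Your argument is correct, and it is precisely the "fairly evident" justification the paper has in mind but does not write out: the paper offers no proof of this proposition beyond asserting it. The locality of $\rmH_n=\ker\partial_n/\ran\partial_{n+1}$ in degrees $n-1,n,n+1$, combined with the literal agreement of the chain complexes of $\sfX$, $\tr_N\sfX$ and $\sk_N\sfX$ in all degrees $\leq N$, is exactly the intended reasoning, and your bookkeeping of why the bound $n\leq N-1$ is forced is accurate.
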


\noindent Therefore, working with the $N$--truncation or $N$--skeletonization of a simplicial set $\sfX$
rather than with $\sfX$ itself still allows one to recover the homology of $\sfX$ up to degree $N-1$
inclusive.

The above discussion can be extended in an evident fashion to simplicial objects, in particular to
simplicial Hilbert spaces, which are our main focus. 

An $N$--truncated simplicial Hilbert space $\scH$ is a collection of Hilbert spaces $\scH_n$, $0\leq n\leq N$,
and operators $D_{ni}:\scH_n\rightarrow \scH_{n-1}$, $1\leq n\leq N$, $i=1,\dots,n$, and 
$S_{ni}:\scH_n\rightarrow \scH_{n+1}$, $0\leq n\leq N-1$, $i=1,\dots,n$, which constitute an $N$--truncated
simplicial set when these are regarded as sets and maps of sets.

A morphism $\varPhi:\scH\rightarrow\scH'$ of $N$--truncated simplicial Hilbert spaces consists in a collection of
linear operators $\varPhi_n:\scH_n\rightarrow\scH'{}_n$ with $0\leq n\leq N$, which constitute a map of
$N$--truncated simplicial sets when they are regarded as maps of sets.
 
The $N$--truncation and $N$--skeletonization functors can be built also for simplicial objects and in
particular for simplicial Hilbert spaces. Hence, with any simplicial Hilbert space $\scH$ we can associate
its truncation $\tr_N\scH$ and $N$--skeleton $\sk_N\scH$, which have the property that
$\tr_N\scH_n=\sk_N\scH_n=\scH_n$ for $n\leq N$ and $\sk_N\scH_n\subseteq {}^s\!\scH_n$
for $n>N$, ${}^s\!\scH_n$ being the degenerate $n$-simplex subspace (cf. def. \cref{def:degsub}).
A simplicial Hilbert space is $N$--skeletal when $\scH=\sk_N\scH^*$ for some
 simplicial Hilbert space $\scH^*$. 

In computational topology, setting a cut-off $N$ on the simplicial degree of the relevant parafinite
simplicial set $\sfX$ is tantamount to replacing $\sfX$ by its $N$--truncation $\tr_N\sfX$. $\tr_N\sfX$,
however, belongs to the category of $N$--truncated simplicial sets, which is related to but distinct
from the category of simplicial sets. To remain within this latter while essentially
keeping the essence of the truncation operation, one needs to consider, instead that $\tr_N\sfX$,
the $N$--skeleton $\sk\sfX_N$ of $\sfX$. Both the truncation $\tr_N\sfX$ and the skeleton $\sk\sfX_N$
may be viewed as an approximation of $\sfX$ in the appropriate sense. 
In practice, one works with $\tr_N\sfX$. In more formal considerations, dealing with $\sk\sfX_N$ may be
more natural, since it allows to use the analysis carried out in sect. \cref{sec:setup} simply by restricting
to $N$--skeletal simplicial sets. 

In the quantum simplicial set framework of subsects. \cref{subsec:qusmplx}, \cref{subsec:hilbfunct},
to each parafinite simplicial set $\sfX$ there corresponds a simplicial Hilbert space $\scH$.
The simplicial Hilbert encoding map, which defines the simplex basis,
is a simplicial set morphism $\varkappa:\sfX\rightarrow\scH$. The
$N$--truncation functor so yields a map
$\tr_N\varkappa:\tr_N\sfX\rightarrow\tr_N\scH$ of $N$--truncated simplicial sets 
with components $\tr_N\varkappa_n=\varkappa_n$ for $0\leq n\leq N$. 
Similarly, the $N$--skeletonization functor yields a map
$\sk_N\varkappa:\sk_N\sfX\rightarrow\sk_N\scH$ of $N$--skeletal simplicial sets 
with components $\sk_N\varkappa_n=\varkappa_n$ for $0\leq n\leq N$. Therefore, the
operations of $N$--truncation and $N$--skeletonization of parafinite simplicial sets
turn under simplicial Hilbert encoding
into the corresponding operations of the associated simplicial Hilbert spaces.



\subsection{\textcolor{blue}{\sffamily Simplicial digital encoding and quantum simplicial set framework
}}\label{subsect:encsimpl}

The digital encoding of the simplices of a given parafinite simplicial set is a precondition for
the implementation of simplicial set based algorithms of computational topology in a quantum computer.
This matter is analyzed in detail in the present subsection. 

Consider a parafinite simplicial set $\sfX$ and a truncation $\tr_N\sfX$ of it.
The full simplex set of the truncation is 
\begin{equation}
\sfX^{(N)}=\mycom{{}_\squuu}{{}_{0\leq n\leq N}}\sfX_n.
\label{encsimpl0}
\end{equation}
To encode the simplices of $\sfX^{(N)}$, one needs a $k$--bit register with $k\geq \kappa_{\sfX N}$, where
\begin{equation}
\kappa_{\sfX N}=\min\big\{l\hfpt\big|\hfpt l\in\bbN, |\sfX^{(N)}|\leq 2^l\big\}.
\label{encsimpl1}
\end{equation}
We shall show next how the encoding creates a digitized image of the whole simplicial structure
of $\tr_N\sfX$ in the register.  


Let $B_2=\{0,1\}$ be the digital Boolean domain. 

\begin{defi}\label{defi:digienc}
A digital encoding of $\tr_N\sfX$ in a length $k$ register is a bijective map
$\chi:\sfX^{(N)}\rightarrow \sfX_\chi{}^{(N)}$, where $\sfX_\chi{}^{(N)}\subseteq B_2{}^k$ is a $k$--bit string set
with $|\sfX_\chi{}^{(N)}|=|\sfX^{(N)}|$. 
\end{defi}
\noindent
The images via $\chi$ of the simplex sets $\sfX_n$ are the subsets
$\sfX_{\chi n}:=\chi(\sfX_n)\subseteq \sfX_\chi{}^{(N)}$.
They constitute a partition of $\sfX_\chi{}^{(N)}$, so that  
\begin{equation}
\sfX_\chi{}^{(N)}=\mycom{{}_\squuu}{{}_{0\leq n\leq N}}\sfX_{\chi n}.
\label{encsimpl9}
\end{equation}
From here, it is promptly verified that  the register contains a full digital image of $\tr_N\sfX$. 
The restrictions $\chi\big|_{\sfX_n}$ of $\chi$ to the $\sfX_n$ induce bijective maps
$\chi_n:\sfX_n\rightarrow \sfX_{\chi n}$, through which 
further maps $d_{\chi ni}:\sfX_{\chi n}\rightarrow \sfX_{\chi n-1}$, $1\leq n\leq N$, $i=1,\dots,n$, and 
$s_{\chi ni}:\sfX_{\chi n}\rightarrow \sfX_{\chi n+1}$, $0\leq n\leq N-1$, $i=1,\dots,n$, can be defined by
\begin{align}
d_{\chi ni}=\chi_{n-1}d_{ni}\chi_n{}^{-1},
\label{encsimpl10}
\\
s_{\chi ni}=\chi_{n+1}s_{ni}\chi_n{}^{-1}.
\label{encsimpl11}
\end{align}
These $d_{\chi ni}$, $s_{\chi ni}$ obey the simplicial relations \ceqref{smplsets1/5}
as a consequence of $d_{ni}$, $s_{ni}$ doing so.
The sets $\sfX_{\chi n}$ and the maps $d_{ni}$, $s_{ni}$ are so
the simplex sets and the face and degeneracy maps of an $N$--truncated 
simplicial set $\sfX_\chi$. 
Moreover, the maps $\chi_n$ are the components of an $N$--truncated simplicial set isomorphism
$\chi:\tr_N\sfX\rightarrow\sfX_\chi$.

Let $X_0\subseteq B_2{}^k$ be a $k$--bit string set
such that $|X_0|=|\sfX^{(N)}|$. If $\chi_0$ is a reference encoding of $\tr_N\sfX$ with $\sfX_{\chi_0}{}^{(N)}=X_0$
and $\pi$ is any permutation of $X_0$, then $\chi=\pi\chi_0$ \linebreak 
is an encoding of $\tr_N\sfX$ with $\sfX_\chi{}^{(N)}=X_0$ too. Furthermore, each encoding $\chi$  with $\sfX_\chi{}^{(N)}=X_0$
is of this form for precisely one permutation $\pi$,
viz $\pi=\chi\chi_0{}^{-1}$. Therefore, there are altogether $|\sfX^{(N)}|!$ encodings with a given range $X_0$. 

We now shall examine in detail how a digital encoding $\chi$ of the $N$--truncation of
a parafinite simplicial set $\sfX$ in a $k$--bit register is implemented in the quantum simplicial set 
framework of subsects. \cref{subsec:qusmplx}, \cref{subsec:hilbfunct}.
In subsect. \cref{subsect:trunc}, we saw that when $\sfX$ is replaced by
its $N$--truncation $\tr_N\sfX$, the simplicial Hilbert space of $\scH$ of $\sfX$ gets 
replaced by its $N$--truncation $\tr_N\scH$. The simplicial quantum register of $\sfX^{(N)}$ is
\begin{equation}
\scH^{(N)}=\mycom{{}_\ddd}{{}_{0\leq n\leq N}}\scH_n.
\label{encsimpl12}
\end{equation}
By \ceqref{encsimpl0}, $\dim\scH^{(N)}=|\sfX^{(N)}|$, because $\dim\scH_n=|\sfX_n|$. 
In the Hilbert set--up, the $k$--bit register turns into an $k$--qubit register with quantum 
Hilbert space $\scQ^{\hfpt\otimes\hfpt k}$, where $\scQ=\bbC^2$. 
The encoding $\chi$ yields a linear operator $U_\chi:\scH^{(N)}\rightarrow \scH_\chi{}^{(N)}$, where 
$\scH_\chi{}^{(N)}\subseteq\scQ^{\hfpt\otimes\hfpt k}$ is a $k$--bit string space of dimension $\dim\scH_\chi{}^{(N)}=|\sfX^{(N)}|$.
Explicitly
\begin{equation}
U_\chi=\mycom{{}_\sss}{{}_{0\leq n\leq N}}\mycom{{}_\sss}{{}_{\sigma_n\in\sfX_n}}\ket{\chi{\sigma_n}}_k\bra{\sigma_n},
\label{}
\end{equation}
where the kets $\ket{\xi}_k$, $\xi\in B_2{}^k$, are those of the computational basis of $\scQ^{\hfpt\otimes\hfpt k}$. 
$U_\chi$ is evidently unitary. $U_\chi$ generates an image of the direct sum structure of $\scH^{(N)}$
in $\scH_\chi{}^{(N)}$: setting $\scH_{\chi n}=U_\chi\scH_n$, one has indeed 
\begin{equation}
\scH_\chi{}^{(N)}=\mycom{{}_\ddd}{{}_{0\leq n\leq N}}\scH_{\chi n}.
\label{}
\end{equation}
The restrictions $U_\chi\big|_{\scH_n}$ of $U_\chi$ to the $\scH_n$ induce unitary operators 
$U_{\chi n}:\scH_n\rightarrow \scH_{\chi n}$ through which 
further operators $D_{\chi ni}:\scH_{\chi n}\rightarrow \scH_{\chi n-1}$, $1\leq n\leq N$, $i=1,\dots,n$, and 
$S_{\chi ni}:\scH_{\chi n}\rightarrow \scH_{\chi n+1}$, $0\leq n\leq N-1$, $i=1,\dots,n$, can be constructed by
\begin{align}
D_{\chi ni}=U_{\chi n-1}D_{ni}U_{\chi n}{}^{-1},
\label{}
\\
S_{\chi ni}=U_{\chi n+1}S_{ni}U_{\chi n}{}^{-1}.
\label{}
\end{align}
$D_{\chi ni}$, $S_{\chi ni}$ obey the simplicial identities \ceqref{qusmplx7/11}
as a consequence of $D_{ni}$, $S_{ni}$ doing so. In fact, in the computational basis of $\scQ^{\hfpt\otimes\hfpt k}$ 
these operators are given by expressions analogous to \ceqref{qusmplx1}, \ceqref{qusmplx2}, viz
\begin{align}
&D_{\chi ni}=\mycom{{}_\sss}{{}_{\xi_n\in \sfX_{\chi n}}}\ket{d_{\chi ni}\xi_n}_{k\hfpt k}\bra{\xi_n},
\label{}
\\
&S_{\chi ni}=\mycom{{}_\sss}{{}_{\xi_n\in \sfX_{\chi n}}}\ket{s_{\chi ni}\xi_n}_{k\hfpt k}\bra{\xi_n}.
\label{}
\end{align}
The Hilbert spaces $\scH_{\chi n}$ and the operators $D_{\chi ni}$, $S_{\chi ni}$ are thus 
the simplex spaces and the face and degeneracy operators of an $N$--truncated 
simplicial Hilbert space $\scH_\chi$. 
Moreover, the maps $U_{\chi n}$ are the components of an $N$--truncated simplicial Hilbert space
unitary operator $U_\chi:\tr_N\scH\rightarrow\scH_\chi$.  

The above analysis aims only to showing the possibility of creating trough a digital encoding $\chi$
a digitized image of the truncation $\tr_N\sfX$ of a parafinite simplicial set $\sfX$ and providing
a precise formal characterization of such image and similarly for the corresponding truncation $\tr_N\scH$
of the associated simplicial Hilbert space $\scH$. Depending on the specific features of $\sfX$, there may be
special instances of the encoding $\chi$ with high efficiency and distinctive formal properties.
In particular, $\chi$ should be selected
judiciously in such a way to render the face and degeneracy maps $d_{\chi ni}$, $s_{\chi ni}$ as simple as
possible. There is no general prescription for doing that and $\chi$ must be chosen on a case
by case basis. 
By contrast, in the simplicial complex framework of ref. \ccite{Lloyd:2014lgz}, there is a
canonical encoding of the simplices of the relevant simplicial complex in terms of which
the boundary maps have a simple form.


\subsection{\textcolor{blue}{\sffamily Counting and parametrizing simplices
}}\label{subsect:cnssimpset}

Subsect. \cref{subsect:encsimpl} provides a theoretical analysis of the  
digital encoding of a truncation of a simplicial set.
Explicit construction of an encoding requires however further scrutiny
of this matter, which we do in this subsection.



Let $\sfX$ be a parafinite simplicial set. 
In subsect. \cref{subsec:smplsets}, for each $n$ we considered the subset
${}^s\sfX_n\subseteq\sfX_n$ of degenerate simplices of $\sfX_n$.
${}^c\sfX_n=\sfX_n\setminus{}^s\sfX_n\subseteq\sfX_n$ is hence the subset of
non degenerate simplices of $\sfX_n$.

The following theorem is an important structural property of simplicial sets. 

\begin{theor} \label{prop:ezlemma} (Eilenberg--Zilber lemma \ccite{Eilenberg:sst1950})
For every $n$, each simplex $\sigma_n\in\sfX_n$ has a unique representation
$\sigma_n=s_{n-1j_{n-m-1}}\cdots s_{mj_0}\tau_m$, where $m\leq n$,
$\tau_m\in{}^c\sfX_n$ and $0\leq j_0<\ldots<j_{n-m-1}\leq n-1$. 
\end{theor}

\noindent When $\sigma_n$ is non degenerate the degeneracy map string $s_{n-1j_{n-m-1}}\cdots s_{mj_0}$ is empty.

By the Eilenberg--Zilber lemma, we have 
\begin{equation}
\sfX_n\simeq\mycom{{}_\uuu}{{}_{0\leq m\leq n}}J^n{}_m \times{}^c\sfX_m,
\label{cnssimphilb1}
\end{equation}
where for $m\leq n$ $J^n{}_m=\{(j_{n-m-1},\ldots,j_0)\hfpt|\hfpt 0\leq j_0<\ldots<j_{n-m-1}\leq n-1\}$
is the set of index strings of height $n-1$ and length $n-m$. 
Note that $J^n{}_n=\{\emptyset\}$, where $\emptyset$ denotes the empty index string.
It is a simple combinatorial exercise to show that $|J^n{}_m|=\binom{n}{m}$. By \ceqref{cnssimphilb1}, so, 
the number $|\sfX_n|$ of $n$--simplices can be expressed in terms of the 
numbers $|{}^c\sfX_m|$ of non degenerate $m$--simplices with $m\leq n$ as 
\begin{equation}
|\sfX_n|=\mycom{{}_\sss}{{}_{0\leq m\leq n}}\text{\small $\binom{n}{m}$}|{}^c\sfX_m|. 
\label{cnssimphilb2}
\end{equation}
The total to non degenerate $n$--simplex ratio
\begin{equation}
\varrho_{\sfX n}=|\sfX_n|/|{}^c\sfX_n|
\label{cnssimphilb3}
\end{equation}
is an important indicator of the incidence of degenerate $n$--simplices. While $\varrho_{\sfX 0}=1$,
$\varrho_{\sfX n}$ as a rule grows very rapidly as $n$ gets large.
The total number of simplices of an $N$--truncation of $\sfX$ reads from \ceqref{cnssimphilb2} as
\begin{equation}
|\sfX^{(N)}|=\mycom{{}_\sss}{{}_{0\leq n\leq N}}|\sfX_n|=\mycom{{}_\sss}{{}_{0\leq m\leq N}}
\text{\small $\binom{N+1}{m+1}$}|{}^c\sfX_m|
\label{cnssimphilb-1}
\end{equation}
(cf. eq. \ceqref{encsimpl0}).
The content of the non degenerate $n$--simplex sets ${}^c\sfX_m$ depends on the underlying simplicial
set $\sfX$. The above numerical measures of the simplex distribution and the size of a truncation of a
simplicial set can consequently be evaluated only on a case by case basis.

The following simple examples serve as an illustration of the general techniques we described above.

\begin{exa} \label{ex:deloopcount} The nerve of the delooping of a finite group. 

\noindent
{\rm Recall that a group $\msG$ can be viewed as a one--object groupoid $\rmB\msG$, the delooping of $\msG$. 

Consider the nerve $\sfN\rmB\msG$ of the delooping $\rmB\msG$
of a finite group $\msG$ (cf. ex \cref{ex:nerve}). Then, $\sfN_n\rmB\msG=\msG^n$.
The non degenerate $n$--simplices of $\sfN\rmB\msG$ are precisely
the $n$--tuples of $\msG^n$ which do not contain the identity of $\msG$. Therefore,
\begin{equation}
|{}^c\sfN_n\rmB\msG|=(|\msG|-1)^n.
\label{cnssimphilb-3}
\end{equation}
The number of non degenerate $n$--simplices such that $n\leq N$ is consequently
\begin{equation}
\mycom{{}_\sss}{{}_{0\leq n\leq N}}|{}^c\sfN_n\rmB\msG|=\frac{(|\msG|-1)^{N+1}-1}{|\msG|-2}
\label{}
\end{equation}
(when $|\msG|=2$, this takes the value $N+1$). 
Inserting \ceqref{cnssimphilb-3} in the general formula \ceqref{cnssimphilb2}, we recover
the known number of $n$--simplices of $\sfN\rmB\msG$ 
\begin{equation}
|\sfN_n\rmB\msG|=\mycom{{}_\sss}{{}_{0\leq m\leq n}}\text{\small $\binom{n}{m}$}|{}^c\sfN_m\rmB\msG|=|\msG|^n.
\label{}
\end{equation}
The total to non degenerate $n$--simplex ratio of $\sfN\rmB\msG$ is thus 
\begin{equation}
\varrho_{\sfN\rmB\msG n}=|\sfN_n\rmB\msG|/|{}^c\sfN_n\rmB\msG|=\bigg(\frac{|\msG|}{|\msG|-1}\bigg)^n.
\label{}
\end{equation}
This grows exponentially with $n$, but the larger $|\msG|$ is the slower this growth is.

The total number of simplices of the $N$--truncation $\tr_N\sfN\rmB\msG$ of $\sfN\rmB\msG$ reads as 
\begin{equation}
|\sfN^{(N)}\rmB\msG|=\frac{|\msG|^{N+1}-1}{|\msG|-1}
\label{}
\end{equation}
(when $|\msG|=1$, this takes the value $N+1$).
The encoding of 
$\tr_N\sfN\rmB\msG$ requires therefore a $k$--bit register with $k\geq \varkappa_{\sfN\rmB\msG N}$,
where $\varkappa_{\sfN\rmB\msG N}=\log_2|\sfN^{(N)}\rmB\msG|$. We note that
$\varkappa_{\sfN\rmB\msG N}=N\log_2|\msG|+O(1/|\msG|)$ when $|\msG|$ is large.
}
\end{exa}

\begin{exa} \label{ex:complxcount} The simplicial set of a finite ordered discrete simplicial complex.

\noindent
{\rm
Let $V=\{v_0,\ldots,v_d\}$ be a finite non empty set.
Let $\clP_V$ be the discrete simplicial complex of $V$, that is the simplicial complex
whose vertex set $\Vrtc_{\clP_V}$  is $V$ and whose simplex set $\Simp_{\clP_V}$ is the power set of $V$.
We assume that $V$ is endowed with a total ordering so that $v_a<v_b$ for $a<b$. 
$\clP_V$ is so an ordered simplicial complex.

Consider the simplicial set $\sfK\clP_V$ associated with the 
complex $\clP_V$ (cf. ex. \cref{ex:simp}). 
The $n$--simplices of $\clP_V$ constitute precisely the non degenerate $n$--simplices
of $\sfK\clP_V$. The number of $n$--simplices of $\clP_V$ is $\binom{d+1}{n+1}$ for $n\leq d$.
This indicates us also the number of the non degenerate $n$--simplices of $\sfK\clP_V$
\begin{align}
&|{}^c\sfK_n\clP_V|=\text{\small $\binom{d+1}{n+1}$} \quad\text{for $n\leq d$},
\label{cnssimphilb-2}
\\
&\hphantom{|{}^c\sfK_n\clP_V|\,=\,}0 \quad \text{for $n> d$}.
\nonumber
\end{align}
The number of non degenerate $n$--simplices such that $n\leq N$ with $N\leq d$ is found from here
to be given by the expression
\begin{equation}
\mycom{{}_\sss}{{}_{0\leq n\leq N}}|{}^c\sfK_n\clP_V|
 =2^{d+1}-1-\text{\small $\binom{d+1}{N+2}$}{}_2\mhfpt F{}_1(1,-d+N+1;N+3;-1).
\label{}
\end{equation}
The total number of non degenerate simplices is so $2^{d+1}-1$. 
Inserting \ceqref{cnssimphilb-2} in the general formula \ceqref{cnssimphilb2}, we obtain the number of $n$--simplices 
of $\sfK\clP_V$ for $n\leq d$
\begin{equation}
|\sfK_n\clP_V|=\mycom{{}_\sss}{{}_{0\leq m\leq n}}\text{\small $\binom{n}{m}$}|{}^c\sfK_m\clP_V|
=\text{\small $\binom{d+n+1}{n+1}$}.
\label{}
\end{equation}
The total to non degenerate $n$--simplex ratio of $\sfK\clP_V$
for $n\leq d$ is thus 
\begin{equation}
\varrho_{\sfK\clP_V n}=|\sfK_n\clP_V|/|{}^c\sfK_n\clP_V|=\text{\small $\binom{d+n+1}{n+1}\Big/\binom{d+1}{n+1}$}.
\label{}
\end{equation}
$\varrho_{\sfK\clP_V n}$ has the following expansions:
\begin{align}
&\varrho_{\sfK\clP_V n}=1+O(n^2/d) \quad\text{for $1\ll n\ll d^{1/2}$},
\label{}
\\
&\hphantom{\varrho_{\sfK\clP_V n}\,=\,}
\frac{2^{2d+1}}{(\pi d)^{1/2}}
[1+O(d^{-1},(n-d)\log_2d)]\quad \text{for $1\ll n\to d$}.
\nonumber
\end{align}
So, while for $n\ll d^{1/2}$ the numbers of degenerate and non degenerate $n$--simplices are
comparable, for $n\sim d$ the number of the degenerate simplices is exponentially greater that that of the
non degenerate ones.

The total number of simplices of the $N$--truncation of $\tr_N\sfK\clP_V$ of $\sfK\clP_V$ is 
\begin{equation}
|\sfK^{(N)}\clP_V|=\text{\small $\binom{d+N+2}{d+1}$}-1
\label{}
\end{equation}
provided $N\leq d$.
The encoding of 
$\tr_N\sfK\clP_V$ requires therefore a $k$--bit register with $k\geq \varkappa_{\sfK\clP_V N}$,
where $\varkappa_{\sfK\clP_V N}=\log_2|\sfK^{(N)}\clP_V|$. $\varkappa_{\sfK\clP_V N}$ has the expansion
\begin{align}
\varkappa_{\sfK\clP_V N}=
&\log_2\bigg[\Big(\frac{\rme d}{N}\Big)^N \frac{d}{(2\pi)^{1/2}N^{3/2}}\bigg]
+O(1/N,N^2/d) \quad\text{for $1\ll N\ll d^{1/2}$},
\label{}
\\
&\,\,2d+2-\tfrac{1}{2}\log_2(\pi d)+O(d^{-1}, 
(N-d)\log_2d) \quad\text{for $1\ll N\to d$}.
\nonumber
\end{align}
So, one needs a $2(d+1)$--bit register to encode all the simplex data in degree
less than $d$ of the simplicial set $\sfK\clP_V$. This is to be compared with the
$d+1$--bit register required to encode the simplex data
of the underlying simplicial complex $\clP_V$ \ccite{Lloyd:2014lgz}. 

If $\clS$ is a finite simplicial complex, then $\clS$ is a subcomplex of the discrete
simplicial complex $\clP_V$ with $V=\Vrtx_\clS$. The values of such quantities as
$|{}^c\sfK_n\clP_V|$, $|\sfK_n\clP_V|$, \dots\,computed above for $\sfK\clP_V$ are then upper bounds for the
values of the corresponding quantities of $|{}^c\sfK_n\clS|$, $|\sfK_n\clS|$, \dots\, of $\sfK\clS$.
}
\end{exa}

Once the number of simplices of the chosen truncation $\tr_N\sfX$ of s simplicial set $\sfX$ is ascertained,
it becomes possible to construct a digital encoding of it (cf. def. \cref{defi:digienc})
by allocating a suitably sized register. The choice of the encoding is not unique.
The following examples of encoding are presented here as an illustration of the theory.
No claim is made that they constitute the optimal choice with regard to efficiency. 

\begin{exa} The nerve of the delooping of a finite group. 

\noindent
{\rm
Consider again the nerve $\sfN\rmB\msG$ of the delooping $\rmB\msG$
of a finite group $\msG$ (cf. ex \cref{ex:deloopcount}).       
The simplices of a truncation $\tr_N\sfN\rmB\msG$ of $\sfN\rmB\msG$ can be digitally
encoded in a $Nq+r$--bit register with $q,r$ integers such that $q\geq\log_2|\msG|$ and $r\geq\log_2(N+1)$ 
as follows. We represent a $Nq+r$--bit string as $(x_1,\ldots,x_N;y)$, where the $x_a$ are $q$--bit strings
and $y$ is an $r$--bit string. For convenience, we use the enumerative indexation of such bit strings,
in terms of which the $x_a$ are represented by integers in the range $0$ to $2^q-1$ and $y$ as an integer
in the range $0$ to $2^r-1$.  Further, we select a digital encoding of $\msG$, i.e. a bijective map
$\varphi:\msG\rightarrow P_\varphi$ with $P_\varphi\subseteq B_2{}^q$, which we normalize by requiring that
$\varphi(e)=0$, where $e$ denotes the neutral element of $\msG$.
Using these elements, we can construct a digital encoding $\chi$ of $\tr_N\sfN\rmB\msG$.
This is the bijective map $\chi:\sfN^{(N)}\rmB\msG\rightarrow\sfN_\chi{}^{(N)}\rmB\msG$ defined as follows.
The encoding range is
\begin{equation}
\sfN_\chi{}^{(N)}\rmB\msG
=\mycom{{}_\squuu}{{}_{0\leq n\leq N}}P_\varphi{}^n\times\{0\}^{N-n}\times\{n\}\subseteq B_2{}^{Nq+r}.
\label{}
\end{equation}
Further, for $\sigma_n=(g_1,\ldots,g_n)\in\sfN_n\rmB\msG$ with $n\leq N$, 
\begin{align}
&\chi(\sigma_n)=(0,\ldots,0;0) \quad \text{for $n=0$},
\label{}
\\
&\chi(\sigma_n)=(\varphi(g_1),\ldots,\varphi(g_n),0,\ldots,0;n)\quad \text{for $0<n\leq N$}.
\end{align}
The encoding's face and degeneracy maps $d_{\chi ni}$, $s_{\chi ni}$
(cf. eqs. \ceqref{encsimpl10}, \ceqref{encsimpl11}) take the following form.
We notice preliminarily  that $\sfN_{\chi n}\rmB\msG=P_\varphi{}^n\times\{0\}^{N-n}\times\{n\}$. 
Let $(x_1,\ldots,x_n,0,\ldots,0;n)\in\sfN_{\chi n}\rmB\msG$ with $n\leq N$. Then, if $1\leq n$
\begin{subequations}
\begin{align}
&d_{\chi n0}(x_1,\ldots,x_n,0,\ldots,0;n)=(x_2,\ldots,x_n,0,\ldots,0;n-1),
\label{}
\\
&d_{\chi ni}(x_1,\ldots,x_n,0,\ldots,0;n)
\label{}
\\
&\hspace{2mm}
=(x_1,\ldots,x_{i-1},\varphi(\varphi^{-1}(x_i)\varphi^{-1}(x_{i+1})),x_{i+2},\ldots,x_n,0,\ldots,0;n-1)
~~~~\text{if $0<i<n$},
\nonumber
\\
&d_{\chi nn}(x_1,\ldots,x_n,0,\ldots,0;n)=(x_1,\ldots,x_{n-1},0,\ldots,0;n-1).
\label{}
\end{align}
\end{subequations}
Similarly, if $n\leq N-1$
\begin{align}
&s_{\chi ni}(x_1,\ldots,x_n,0,\ldots,0;n)=(x_1,\ldots,x_i,0,x_{i+1},\ldots,x_n,0,\ldots,0;n+1).
\label{}
\end{align}
}
\end{exa}

\begin{exa} The simplicial set of a finite ordered discrete simplicial complex.

\noindent
{\rm
Consider the simplicial set $\sfK\clP_V$ associated with the ordered simplicial complex $\clP_V$
studied earlier (cf. ex. \cref{ex:complxcount}).
The simplices of a truncation $\tr_N\sfK\clP_V$ of $\sfK\clP_V$ can be digitally
encoded in a $(d+1)r$--bit register with $r$ is an integer such that $r\geq\log_2(N+2)$ 
as follows. We represent a $(d+1)r$--bit string as $(x_0,\ldots,x_d)$, where the $x_a$ are $r$--bit strings,
and employ again the enumerative indexation of the $r$--bit strings representing them as
integers in the range $0$ to $2^r-1$. For each index $0\leq a\leq d$, let $\varphi_a:\sfK^{(\infty)}\clP_V\rightarrow\bbN$,
where $\sfK^{(\infty)}\clP_V=\bigsqcup_{0\leq n}\sfK_n\clP_V$, be the $a$--th vertex
counting map: if $\sigma_n\in\sfK_n\clP_V$,
then $\varphi_a(\sigma_n)$ is the number of occurrences of the vertex $v_a$ in $\sigma_n$. We note that 
for $\sigma_n\in\sfK_n\clP_V$ the integers $\varphi_a(\sigma_n)$ must obey the sum rule 
$\sum_{0\leq a\leq d}\varphi_a(\sigma_n)=n+1$.
Employing these elements, we can construct a digital encoding $\chi$ of $\tr_N\sfK\clP_V$.
This is the bijective map $\chi:\sfK^{(N)}\clP_V\rightarrow\sfK_\chi{}^{(N)}\clP_V$ defined as follows.
The range 
of the encoding is 
\begin{equation}
\sfK_\chi{}^{(N)}\clP_V=
\Big\{(x_0,\ldots,x_d)\Big|0\leq x_a\leq N+1,0<\mycom{{}_\sss}{{}_{0\leq a\leq d}}x_a\leq N+1\Big\}.
\label{}
\end{equation}
Further, for $\sigma_n\in\sfK_n\clP_V$ with $n\leq N$, we have 
\begin{equation}
\chi(\sigma_n)=(\varphi_0(\sigma_n),\ldots,\varphi_d(\sigma_n)).
\label{}
\end{equation}  
The face and degeneracy maps $d_{\chi ni}$, $s_{\chi ni}$ of the encoding 
(cf. eqs. \ceqref{encsimpl10}, \ceqref{encsimpl11}) read as follows.
We notice that $\sfK_{\chi n}\clP_V=\{(x_0,\ldots,x_d)|\sum_{0\leq a\leq d}x_a=n+1\}$. 
For $(x_0,\ldots,x_d)\in\bbN^{d+1}$, set 
\begin{align}
\vartheta_{ai}(x_0,\ldots,x_d)=\,&1\quad\text{if~~~}
\mycom{{}_\sss}{{}_{0\leq b<a}}x_b\leq i<\mycom{{}_\sss}{{}_{0\leq b\leq a}}x_b,
\label{}
\\  
&0\quad\text{else}.
\nonumber
\end{align}
Let $(x_0,\ldots,x_d)\in\sfK_{\chi n}\scP_V$ with $n\leq N$. Then, for $1\leq n$ 
\begin{equation}
d_{\chi ni}(x_0,\ldots,x_d)=(x_0-\vartheta_{0i}(x_0,\ldots,x_d),\ldots,x_d-\vartheta_{di}(x_0,\ldots,x_d)),
\label{}
\end{equation}
while for $n\leq N-1$
\begin{equation}
s_{\chi ni}(x_0,\ldots,x_d)=(x_0+\vartheta_{0i}(x_0,\ldots,x_d),\ldots,x_d+\vartheta_{di}(x_0,\ldots,x_d)).
\label{}
\end{equation}
}
\end{exa}


\subsection{\textcolor{blue}{\sffamily Homology computation via quantum algorithms
}}\label{subsec:qpe}

In subsect. \cref{subsec:normal}, we showed that the complex coefficient homology spaces of a parafinite
simplicial set $\sfX$, $\rmH_n(\sfX,\bbC)$, can be computed based on the isomorphism of these latter
and the normalized simplicial Hilbert homology spaces of the associated simplicial
Hilbert space $\scH$, $\rmH_{Dn}({}^c\!\scH)$ (cf. theor. \cref{prop:hilbnorm}).
In turn, such spaces are given by $\rmH_{Dn}({}^c\!\scH)\simeq\ker{}^cH_{DDn}$, where
${}^cH_{DDn}$ is the normalized simplicial Hilbert Laplacian
expressible through the simplicial Hilbert boundary operators ${}^cQ_{Dn}$, ${}^cQ_{Dn+1}$
and their adjoints  (cf. def. \cref{def:normsmplxlapl} and theor. \cref{prop:normhodge}).
The problem we have to tackle next is the determination
of $\ker{}^cH_{DDn}$ in an $N$--truncated setting as required by the implementation of simplicial set based 
quantum algorithms. This will be done along the lines of ref. \ccite{Lloyd:2014lgz}
by exploiting at two distinct stages two well--known quantum algorithms: 

\begin{enumerate} %

\vspace{-1mm}\item Grover multisolution quantum search algorithm \ccite{Grover:1996fqs};

\vspace{-2mm}\item Abrams' and Lloyd's quantum algorithm for the solution of the general
eigenvalue problem \ccite{Abrams:1998qee}, a refinement of Kitaev's basic quantum phase estimation algorithm
\ccite{Kitaev:1995qpe}. 

\vspace{-1mm}

\end{enumerate}

\noindent The implementation of such algorithms relies on a number of other algorithms, as will be explained
in due course. The total complexity of the whole algorithm is determined by the combined complexity
of all the algorithms which compose it. 

To study normalized Hilbert homology in an $N$--truncated setting, we have to consider the $N+1$--tuple of equations
${}^cH_{DDNn}\ket{\psi_n}=0$ with $\ket{\psi_n}\in{}^c\!\scH_n$, where 
\begin{equation}
{}^cH_{DDNn}={}^cH_{DDn}-\delta_{Nn}{}^cQ_{DN+1}{}^cQ_{DN+1}{}^+.
\label{qpe1}
\end{equation}
Note that ${}^cH_{DDNn}={}^cH_{DDn}$ for $0\leq n<N$ only while ${}^cH_{DDNN}\neq{}^cH_{DDN}$.
The operator ${}^cH_{DDNN}$ rather than ${}^cH_{DDN}$ appears here, as the operators
${}^cQ_{DN+1}$, ${}^cQ_{DN+1}{}^+$ are excluded by the truncation. Such operators enter only
the expression of ${}^cH_{DDN}$ (cf. eq. \ceqref{normal14} and their contribution is duly subtracted
out. 

We saw in subsect. \cref{subsect:encsimpl} that the $N$--truncation of the simplicial Hilbert space $\scH$
gives rise to the simplicial quantum register $\scH^{(N)}$ shown in eq. \ceqref{encsimpl12}.
The normalized simplicial Hilbert homology rests on the subspace ${}^c\!\scH^{(N)}$ of $\scH^{(N)}$
spanned by the non degenerate $n$--simplex subspaces ${}^c\!\scH_n$ of the $\scH_n$ with $0\leq n\leq N$, 
\begin{equation}
{}^c\!\scH^{(N)}=\mycom{{}_\ddd}{{}_{0\leq n\leq N}}{}^c\!\scH_n.
\label{qpe2}
\end{equation}
By systematically exploiting the canonical projections $R_n:{}^c\!\scH^{(N)}\rightarrow{}^c\!\scH_n$
and injections $I_n:{}^c\!\scH_n\rightarrow{}^c\!\scH^{(N)}$, 
one can view the vector spaces ${}^c\!\scH_n$ as subspaces of the space ${}^c\!\scH^{(N)}$
and similarly the operator spaces $\Hom({}^c\!\scH_n,{}^c\!\scH_m)$ as subspaces
of the space $\End({}^c\!\scH^{(N)})$. 
In what follows we shall thoroughly adhere to this perspective. Accordingly, we shall use the same notation
to denote vectors of ${}^c\!\scH_n$ and operators of $\Hom({}^c\!\scH_n,{}^c\!\scH_m)$,
e.g. $\ket{{}^c\psi_n}$, $\ket{{}^c\phi_n}$, \ldots and ${}^cA_{n,m}$, ${}^cB_{n,m}$, \ldots,  
irrespective of whether ${}^c\!\scH_n$ and $\Hom({}^c\!\scH_n,{}^c\!\scH_m)$ are considered as vector
spaces in their own or as subspaces of ${}^c\!\scH^{(N)}$ and $\End({}^c\!\scH^{(N)})$, respectively
\footnote{$\vphantom{\dot{dot{\dot{f}}}}$The notation
$\ket{n}\otimes\ket{{}^c\psi_n}$, $\ket{n}\otimes\ket{{}^c\phi_n}$,\ldots and
$\ket{n}\bra{m}\otimes{}^cA_{n,m}$, $\ket{n}\bra{m}\otimes{}^cB_{n,m}$, \ldots is instead more commonly used
for vectors of ${}^c\!\scH_n$ and operators of $\Hom({}^c\!\scH_n,{}^c\!\scH_m)$, when these latter are
regarded as subspaces of ${}^c\!\scH^{(N)}$ and $\End({}^c\!\scH^{(N)})$ to keep track of simplicial degree,
where $\ket{n}$ is the canonical basis of an auxiliary Hilbert space $\scA^{(N)}\simeq\bbC^{N+1}$.    
}.





The quantum algorithm computing the normalized Hilbert homologies $\rmH_{Dn}({}^c\!\scH)$
requires as a preliminary step the implementation of the projection of $\scH^{(N)}$ onto ${}^c\!\scH^{(N)}$.
This operation is done parallelly within each subspace $\scH_n$ with $0\leq n\leq N$
by the orthogonal projection operator $1_n-\varPi_n$, where $\varPi_n$ is the orthogonal projector
of $\scH_n$ onto ${}^s\!\scH_n={}^c\!\scH_n{}^\perp$ given in \ceqref{normal2}. That operator however
cannot be a component of any quantum circuit as such, as it is not unitary. Luckily, the projection can be achieved 
in the appropriate form compatible with unitarity using Grover's quantum search algorithm \ccite{Grover:1996fqs}
in the variant based on amplitude amplification \ccite{Brassard:1997pts}.
We illustrate briefly how the algorithm works within $\scH_n$.
The quantum computer is initialized in a state $\ket{\xi_{0n}}\in\scH_n$ 
\begin{equation}
\ket{\xi_{0n}}=\mycom{{}_\sss}{{}_{\sigma_n\in\sfX_n}}\ket{\sigma_n}|\sfX_n|^{-1/2},
\label{qpe3}
\end{equation}
that is a uniform superposition of all $n$--simplex states $\ket{\sigma_n}$.
Through the algorithm, the state $\ket{\xi_{0n}}$ evolves unitarily toward the final state
\begin{equation}
\ket{{}^c\xi_{0n}}=(1_n-\varPi_n)\ket{\xi_{0n}}(|\sfX_n|/|{}^c\sfX_n|)^{1/2} 
=\mycom{{}_\sss}{{}_{\sigma_n\in{}^c\sfX_n}}\ket{\sigma_n}|{}^c\sfX_n|^{-1/2},
\label{qpe4}
\end{equation}
which constitutes a uniform superposition of all non degenerate $n$--simplex states $\ket{\sigma_n}$.

The algorithm comprises two stages: $i)$ the preparation of the state $\ket{\xi_{0n}}$ and $ii)$
the production of the state $\ket{{}^c\xi_{0n}}$ from $\ket{\xi_{0n}}$. These stages contribute additively 
to the algorithm's complexity. In stage $i$, the state $\ket{\xi_{0n}}$ is yielded by the action of
some unitary operator $W_n$ on a fiducial reference state $\ket{o_n}$, so that $\ket{\xi_{0n}}=W_n\ket{o_n}$.
In stage $ii$, the algorithm is implemented through the action on the state
$\ket{\xi_{0n}}$ of a unitary operator $G_n{}^{p_n}$ that is the $p_n$--th power
of an elementary unitary operator $G_n$, showing the algorithm's iterative nature. 
The Grover operator $G_n$ is of the form $G_n=-W_nD_{0n}W_n{}^+D_n$, where $D_{0n}$, $D_n$
are unitary operators. $D_{0n}=1_n-2\ket{o_n}\bra{o_n}$ is the conditional sign flip operator of the
reference state $\ket{o_n}$. $D_n$ is the conditional sign flip operator of the
non degenerate simplex states $\ket{\sigma_n}$ and is oracular in nature.
The Grover iteration number $p_n$ reads as $p_n=\left[\frac{\pi}{4}(|\sfx_n|/|{}^c\sfx_n|)^{1/2}\right]$.

The overall complexity $C_n$ of the above Grover type state preparation
algorithm is given by $C_n=C(W_n)+p_n(2C(W_n)+C(D_{0n})+C(D_n))$, where
$C(W_n)$, $C(D_{0n})$, $C(D_n)$ are the complexities of the unitaries $W_n$, $D_{0n}$, $D_n$.
The values of $C(W_n)$, $C(D_n)$ depend on the underlying simplicial set $\sfX$.
Further, the values of $C(W_n)$, $C(D_{0n})$ depend on the reference state $\ket{o_n}$ chosen.
$C(W_n)$, $C(D_{0n})$, $C(D_n)$ may depend also on the digital encoding of the truncation $\tr_N\sfX$ used.
The performance of the algorithm hinges in particular on that of the oracle $D_n$.
Note that the iteration number $p_n$ depends on the total to non degenerate $n$--simplex ratio
$\varrho_{\sfX n}=|\sfX_n|/|{}^c\sfX_n|$ introduced in subsect. \cref{subsect:cnssimpset}
(cf. eq. \ceqref{cnssimphilb3}). As $\varrho_{\sfX n}$ typically grows very rapidly with $n$, the algorithm fails
when $n$ is large enough, if a maximal number of iterations is allowed. 
$\varrho_{\sfX n}$ is unknown in general. In such a case, it must be determined previously 
using a quantum counting algorithm \ccite{Brassard:1998qca}, which is an instance of phase estimation algorithm
\ccite{Kitaev:1995qpe} using the Grover operator $G_n$ as the underlying unitary operator, since
the eigenvalues $\rme^{\pm i\theta_n}$ of $G_n$ are given by $\sin(\theta_n/2)=\varrho_{\sfX n}{}^{-1/2}$.



The effective realization of the projection of $\scH^{(N)}$ onto ${}^c\!\scH^{(N)}$ renders possible the
implementation of operators on ${}^c\!\scH^{(N)}$ such as ${}^cQ_{DN+1}$, ${}^cQ_{DN+1}{}^+$ 
and ${}^cH_{DDNN}$ (cf. eqs. \ceqref{normal4} and \ceqref{normal14}) and many more as constitutive
elements of a quantum homological algorithms. Henceforth, so, we work ${}^c\!\scH^{(N)}$. 


By the isomorphism \ceqref{normal15} and \ceqref{qpe1}, $\rmH_{Dn}({}^c\!\scH)=\ker{}^cH_{DDNn}$ for $0\leq n<N$.
To compute the normalized simplicial Hilbert homology spaces $\rmH_{Dn}({}^c\!\scH)$,
we thus have to solve the $N$ equations ${}^cH_{DDNn}\ket{{}^c\psi_n}=0$ with $\ket{{}^c\psi_n}\in{}^c\!\scH_n$. 
To that end, let us introduce the following linear operator ${}^cH_{DD}{}^{(N)}$ of ${}^c\!\scH^{(N)}$: 
\begin{equation}
{}^cH_{DD}{}^{(N)}=\mycom{{}_\sss}{{}_{0\leq n\leq N}}{}^cH_{DDNn}. 
\label{qpe7}
\end{equation}
Then, the equations mentioned earlier are equivalent to a single equation, namely 
${}^cH_{DD}{}^{(N)}\ket{\psi^{(N)}}=0$, since
${}^cH_{DD}{}^{(N)}\ket{\psi^{(N)}}=\sum_{0\leq n\leq N}{}^cH_{DDNn}\ket{\psi_n}$. 
In this way, the calculation of the homology spaces $\rmH_{Dn}({}^c\!\scH)$ in degree $n<N$ \pagebreak 
is reduced  to that of the kernel of $\ker{}^cH_{DD}{}^{(N)}$.

The determination of $\ker{}^cH_{DD}{}^{(N)}$ by the quantum phase estimation methods of 
ref. \ccite{Abrams:1998qee} involves the unitary operators $\exp(i\tau{}^cH_{DD}{}^{(N)})$ for varying
$\tau$. The construction of this requires the use of a suitable Hamiltonian
simulation algorithm \ccite{Feynman:1982spc}. For the sake of computational efficiency,
it is convenient when possible to replace ${}^cH_{DD}{}^{(N)}$ with a 
Hermitian operator ${}^cB^{(N)}$ sparser than ${}^cH_{DD}{}^{(N)}$
such that $\ker{}^cB^{(N)}=\ker{}^cH_{DD}{}^{(N)}$
and construct $\exp(i\tau{}^cB^{(N)})$ instead, since the complexity of the algorithm depends
inversely on the sparsity of the exponentiated operator
\ccite{Llooyd:1996uqs,Aharonov:2003qsg}. 
A standard choice of ${}^cB^{(N)}$ is provided by the Dirac operator ${}^cB_D{}^{(N)}$
of ${}^cH_{DD}{}^{(N)}$, which is the operator of ${}^c\!\scH^{(N)}$ given by 
\begin{equation}
{}^cB_D{}^{(N)}=\mycom{{}_\sss}{{}_{0\leq n\leq N-1}}\big({}^cQ_{Dn+1}+{}^cQ_{Dn+1}{}^+\big).
\label{qpe8}
\end{equation}
${}^cB_D{}^{(N)}$ is evidently Hermitian and satisfies 
\begin{equation}
{}^cB_D{}^{(N)}{}^2={}^cH_{DD}{}^{(N)}.
\label{qpe9}
\end{equation}
Consequently, $\ker{}^cH_{DD}{}^{(N)}=\ker{}^cB_D{}^{(N)}$ as required.

The Hamiltonian simulation algorithm constructing 
$\exp(i\tau{}^cB_D{}^{(N)})$ involves an oracle unitary operator $O_{DN}$ providing
the non zero matrix elements of ${}^cB_D{}^{(N)}$. The algorithm's total complexity is 
$C_D{}^{(N)}(\tau)=G(\tau)+Y_D{}^{(N)}(\tau)C(O_{DN})$,
where $G(\tau)$, $Y_D{}^{(N)}(\tau)$ and $C(O_{DN})$ are the
algorithm's gate and query complexity and the complexity of $O_{DN}$, respectively.
$G(\tau)$, $Y_D{}^{(N)}(\tau)$ depend, besides the value of the parameter $\tau$, 
on the precision desired and the specific algorithm employed
\ccite{Berry:2014esh,Berry:2014hts,Berry:2015hod,Low:2017hsp,Childs:2017qss,Chen:2021qhs}.
$Y_D{}^{(N)}(\tau)$ depends further on the sparseness index of ${}^cB_D{}^{(N)}$,
measured as the number of non zero matrix elements of ${}^cB_D{}^{(N)}$ per row or column,
and the maximum magnitude of the matrix elements. Finally, $C(O_{DN})$ may
depend through $O_{DN}$ on the underlying simplicial set $\sfX$ and the
digital encoding of the truncation $\tr_N\sfX$ used. 


We review briefly how the algorithm of ref. \ccite{Abrams:1998qee} is implemented to find 
homology spaces $\rmH_{Dn}({}^c\!\scH)$ in our framework. The algorithm proceeds by determining the eigenvalues
and eigenvectors of the unitary operators $\exp(i\tau{}^cB_D{}^{(N)})$ introduced above 
using quantum phase estimation. It operates specifically with the density operators. For each $n$ with
$0\leq n\leq N$, let ${}^c\rho_{0n}$ be the uniform mixture of all non degenerate $n$--simplex states
$\ket{\sigma_n}\bra{\sigma_n}$. Explicitly, ${}^c\rho_{0n}$ reads as  
\begin{equation}
{}^c\rho_{0n}=|{}^c\sfX_n|^{-1}{}^c1_n
=\mycom{{}_\sss}{{}_{\sigma_n\in{}^c\sfX_n}}\ket{\sigma_n}|{}^c\sfX_n|^{-1}\bra{\sigma_n}
\label{qpe5}
\end{equation}
${}^c\rho_{0n}$ can be straightforwardly obtained from the state $\ket{{}^c\xi_{0n}}$
of eq. \ceqref{qpe4} construct\-ed earlier by adding an ancilla,
copying the simplex data into the ancilla to construct the state
$\sum_{\sigma_n\in{}^c\sfX_n}\ket{\sigma_n}\otimes\ket{\sigma_n}|{}^c\sfX_n|^{-1/2}$
and then tracing the ancilla out \ccite{Lloyd:2014lgz}. 
One adjoins next to the 'vector' register
${}^c\!\scH^{(N)}$ a $b_t$--bit 'clock' register $\scQ^{\hfpt\otimes\hfpt b_t}$ with 
$b_t$ suitably large, so that the total Hilbert space is $\scQ^{\hfpt\otimes\hfpt b_t}\otimes{}^c\!\scH^{(N)}$.
The quantum computer is initialized in the mixed state
\begin{equation}
{}^{tc}\rho_{0n}=\ket{0}_{t\,t}\bra{0}\otimes {}^c\rho_{0n},
\label{qpe10}
\end{equation}
where $\ket{\lambda}_t$, $0\leq \lambda\leq 2^{b_t}-1$, is the computational basis of $\scQ^{\hfpt\otimes\hfpt b_t}$
(in the enumerative parametrization). The algorithm evolves unitarily the state ${}^{tc}\rho_{0n}$ and
ends with a measurement of the clock register. 
Upon reiteration of the algorithm, the relevant clock value $\lambda=0$
is found with probability $\dim\ker{}^cH_{DDNn}/|{}^c\sfX_n|$. After $\lambda=0$ is
obtained, the computer is in the mixed state
\begin{equation}
{}^{tc}\rho_{DDn}=\ket{0}_{t\,t}\bra{0}\otimes{}^c\rho_{DDn},
\label{qpe11}
\end{equation}
where ${}^c\rho_{DDn}$ is the density operator of ${}^c\scH^{(N)}$ reading as 
\begin{equation}
{}^c\rho_{DDn}=\dim\ker{}^cH_{DDNn}{}^{-1}{}^cP_{DDNn},
\label{qpe12}
\end{equation}
${}^cP_{DDNn}$ denoting the orthogonal projection operator of ${}^c\!\scH_n$ onto $\ker{}^cH_{DDNn}$. 
By the isomorphism $\ker{}^cH_{DDn}\simeq\rmH_{Dn}({}^c\!\scH)$, the final state of the quantum 
computer for $\lambda=0$ encodes the homology space $\rmH_{Dn}({}^c\!\scH)$.
Further, the frequency with which $\lambda=0$ occurs furnishes 
the Betti numbers $\beta_n(\sfX,\bbC)=\dim\ker{}^cH_{DDNn}$ (cf. eq. \ceqref{betti}). 

In the algorithm described in the previous paragraph,
the value of $b_t$ depends on the number of bits and the precision desired for the estimation of the
eigenvalues of ${}^cB_D{}^{(N)}$. The algorithm involves the use of $b_t$--bit
Welsh--Hadamard and quantum Fourier transforms with combined complexity
$C_{WHT}(b_t)+C_{QFT}(b_t)=O(b_t{}^2)$ and one call of an oracle unitary operator 
$U_{DNj}$ computing $\exp(i2^j{}^cB_D{}^{(N)})$ for each $j$ with $0\leq j\leq b_t-1$. If
$C(U_{DNj})$ is the complexity of $U_{DNj}$, the total complexity of the algorithm is
$C_{QPE}{}^{(N)}=C_{WHT}(b_t)+C_{QFT}(b_t)+\sum_{0\leq j\leq b_t-1}C(U_{DNj})$.
The values of the $C(U_{DNj})$ depend on the Hamiltonian simulation algorithm employed.

We conclude this subsection with the following remark. The complexity analysis of the quantum 
algorithmic scheme for the computation of the homology of a parafinite simplicial set that
we have presented above is only preliminary. Many of the contribution to scheme's complexity
depend on the simplicial set considered and on the digital encoding of its truncation used
in a way whose precise theoretical understanding and quantitative estimation definitely
requires further investigation.

\vfill\eject

\renewcommand{\sectionmark}[1]{\markright{\thesection\ ~~#1}}

\section{\textcolor{blue}{\sffamily Open problems and outlook
}}\label{sec:outlook}

In this paper, we have attempted to construct a theoretical model of a simplicial quantum 
computer (SQC). It is an open question whether a computer of this kind can be eventually built and employed
in practice in computational topology. In this final section, we briefly review a number of issues which may arise
in this respect \footnote{$\vphantom{dot{\dot{f}}}$
The author thanks one of the referees of the paper for raising these points},
without advancing any claim for a complete solutions.


\subsection{\textcolor{blue}{\sffamily Structural issues of the SQC model
}}\label{subsec:sqcimplement}

The following objections concerning the overall structure of the SQC model can be raised. 
  
\begin{enumerate}[label=\roman*),font=\itshape] 

\vspace{-1mm}\item The applicability of the model is limited to parafinite
simplicial sets, restricting its potential usefulness for a wider range of
topological problems.

\vspace{-2mm}\item Associating a finite dimensional simplicial Hilbert space with a simplicial
set and implementing the simplicial operator set--up is arduous in a quantum computational setting.
 
\vspace{-1mm}
  
\end{enumerate}

\noindent
The reply to the first objection is provided already in the inception of sect. \cref{sec:setup}, which we quote here.
Extending the quantum simplicial framework to non parafinite simplicial sets is not feasible
because the intrinsic limitations of implementable computation:  by the way they are defined
(cf. def. \ref{def:parafinite}), parafinite simplicial sets
are the most general kind of simplicial sets that can be handled by a SQC  because 
any such device, regardless the way it is conceived and built,
necessarily can operate only on a finite number of simplicial data of each given degree. Indeed, the simplicial
complexes routinely used in topological data analysis are all instances of parafinite simplicial sets. 

The reply to the second objection is that configuring the simplicial Hilbert space and arranging its 
operator scheme in a quantum computational setting is evidently impossible because of the overall
infinite dimensional nature of these entities! 
The simplicial framework is in fact only a conceptual construct devised to simplify and useful for the analysis.
It is its truncated form, introduced in subsect. \cref{subsect:trunc}, that by virtue
of its finite dimensionality can be assembled at least in principle on a SQC,
with an effort comparable to that required by other quantum
computational set--ups. The crucial problem to be addressed here, in our judgement,
is rather the implementability of simplicial quantum circuits, perhaps in a form less idealized than that
presented in subsect. \cref{subsec:simpcirc}.


\subsection{\textcolor{blue}{\sffamily Practical realization of the SQC model
}}\label{subsec:practical}

The SQC model is admittedly abstract: its viability depends to a significant extent on theoretical assumptions
which may not hold in practice. This raises a number of issues concerning generically the implementation and
the practical realization of the SQC model, as we list next.

\begin{enumerate}[label=\roman*),font=\itshape] 

\vspace{-1mm}\item An investigation of noise and decoherence in a SQC and their the effect on the reliability of the
computations performed by it should be carried out. 

\vspace{-2mm}\item An analysis of error rates occurring in a SQC is also required; methods of error corrections
specific for a SQC should be developed. 

\vspace{-2mm}\item A detailed study of the limitations and feasibility of the quantum hardware required by
the effective functioning of a SQC should be undertaken.

\vspace{-2mm}\item Resource requirements for running algorithms on a SQC have also to be taken in due consideration. 

\vspace{-1mm}
\end{enumerate}

\noindent
All these questions are undoubtedly important, but their resolution lies beyond the reach of the analysis
carried out in the present work, which is theoretical in nature.
Anyhow, the eventual validation of the effective functioning of a SQC can ultimately be
only empirical and experimental.


\subsection{\textcolor{blue}{\sffamily Relevance of the SQC model for computational topology
}}\label{subsec:relcomptop}

The eventual relevance of the SQC model for computational topology 
should also be appraised. In particular, it is important to gauge the potential impact of the SQC model 
on the development \pagebreak of dedicated quantum algorithms for computational topology and more broadly 
its implications for the future development and advancement of algebraic topology. 
In this respect, the following issues are especially relevant:

\begin{enumerate}[label=\roman*),font=\itshape] 

\vspace{-1mm} \item the potential advantages and disadvantages of using algorithms designed for implementation on a SQC 
should be appraised in depth.

\vspace{-2mm}\item the robustness of algorithms devised for a SQC requires verification.

\vspace{-2mm}\item a comparative analysis of the effectiveness of such algorithms and their classical counterparts is required 
to evaluate the merits of the SQC model;

\vspace{-2mm}\item a comparison of the effectiveness of new SQC algorithms
and existing ones of quantum topological data analysis is also due. 

\vspace{-1mm}
\end{enumerate}

\noindent
As we have explained in sect. \cref{sec:intro}, the greater flexibility of simplicial set theory
affords in general several simplicial models of a given topological space. Each of these, in turn,
may have a variety of digital encodings. All these possibilities affect in different ways the size of the
simplex sets at hand, impinging on the complexity of quantum search algorithms, and on the sparsity properties and
matrix element size of the relevant Hamiltonians, having so a bearing on the complexity of 
quantum Hamiltonian simulation algorithms, just to mention a few of the
many issues involved. For these reasons, it is difficult to make definite general statements
on these matters. Such statements are possible only on a case by case basis and as a result of
extensive research work. 


Further points concern specifically the algorithm for the computation of simplicial
homology we have outlined in subsect. \cref{subsec:qpe}.

\begin{enumerate}[label=\roman*),font=\itshape] 

\vspace{-1mm}\item The scalability of SQC quantum algorithmic schemes should be studied to assess
its efficiency for large simplicial data sets.

\vspace{-2mm}\item The computational complexity of SQC quantum algorithmic schemes is to be analyzed 

\vspace{-1mm}
\end{enumerate}

\noindent
We have provided only an elementary treatment of these topics. A more refined analysis
is definitely required. For reasons explained in the previous paragraph, this can only be left to
future work.

\subsection{\textcolor{blue}{\sffamily Extensions and generalizations of the SQC model
}}\label{subsec:sqcfuture}

We conclude this section by assessing prospective enhancements of the SQC model.
It would be useful:

\begin{enumerate}[label=\roman*),font=\itshape] 

\vspace{-1mm}\item to study whether the SQC framework is generalizable to understand its broader ramifications;

\vspace{-2mm}\item to explore potential applications of the SQC model beyond
simplicial sets. 

\vspace{-1mm}
\end{enumerate}

We have no answer to these questions presently. We only observe here that simplicial sets as well as  
cubical sets are instances of a more general type of combinatorial structure: the so called simploidal sets
\ccite{Peltier:2009doc,Peltier:2018sbs}. Simploidal sets have so far found useful applications mostly in
computational geometry, to the best of our knowledge. In principle, the SQC model might be generalized
to simploidal sets. In practice, this can be confirmed only by a detailed analysis.

\vfil\eject

\begin{appendix} 

\section{\textcolor{blue}{\sffamily Appendixes}}\label{app:app}

The following appendixes provide the more technical details of the proofs
of several propositions and theorems
stated in the main text of the paper. They are required for the completeness of the
analysis presented and are illustrative of the kind of techniques used.


\subsection{\textcolor{blue}{\sffamily The no degeneracy defect theorem 
}}\label{app:vanidect1}

In this appendix, we provide a proof of the no degeneracy defect theorem \cref{prop:vanidect1}.


On account of \ceqref{qusmplx19}, to demonstrate \ceqref{qusmplx20}, one has just to show that 
for $0\leq i,j\leq n$, $i<j$, if $\sigma_n,\omega_n\in\sfX_n$ and $s_{ni}\omega_n=s_{nj}\sigma_n$,
then there exists a unique $\zeta_{n-1}\in\sfX_{n-1}$ with $s_{n-1i}\zeta_{n-1}=\sigma_n$,
$s_{n-1j-1}\zeta_{n-1}=\omega_n$, We begin with recalling that for any $\rho_n\in\sfX_n$ the set
$\sfS_{n-1i}(\rho_n)$ is either empty or contains precisely one element, which by \ceqref{smplsets3}
is then $\eta_{n-1}=d_{ni}\rho_n=d_{ni+1}\rho_n$. By the simplicial identities \ceqref{smplsets2}--\ceqref{smplsets4},
we have  $s_{ni}\omega_n=s_{nj}\sigma_n\Rightarrow d_{ni}\sigma_n\in\sfS_{n-1j-1}(\omega_n)=\{d_{nj}\omega_n\}$
and $s_{ni}\omega_n=s_{nj}\sigma_n\Rightarrow d_{nj}\omega_n\in\sfS_{n-1i}(\sigma_n)=\{d_{ni}\sigma_n\}$.
Then, $\zeta_{n-1}=d_{nj}\omega_n=d_{ni}\sigma_n\in\sfX_n$ is the unique element such that $s_{n-1i}\zeta_{n-1}=\sigma_n$,
$s_{n-1j-1}\zeta_{n-1}=\omega_n$. 


\subsection{\textcolor{blue}{\sffamily The perfectness propositions 
}}\label{app:vanidect2}

We provide below the proofs of the perfectness propositions  
\cref{prop:vanidect2a}, \cref{prop:vanidect2b}.

We suppose first that the simplicial set $\sfX$ under consideration
is the nerve of a finite category $\clC$, $\sfX=\sfN\clC$
(cf. subsect. \cref{subsec:smplsets}). We have to verify whether
conditions \ceqref{qusmplx22}--\ceqref{qusmplx24} are fulfilled.

Below, we shall adopt the following convention:
any sequence $f_r,f_{r+1}\ldots,f_{s-1},f_s$ of morphisms
or identities between morphisms of $\clC$  displayed in the course of the analysis is tacitly
assumed to be absent whenever $r>s$. 

Because of \ceqref{qusmplx17}, to verify \ceqref{qusmplx22} it is enough to prove that
for $0\leq i,j\leq n$, $i\leq j$, if $\sigma_n\in\sfN_n\clC$, $\omega_{n+2}\in\sfN_{n+2}\clC$
and $d_{n+2i}\omega_{n+2}=s_{nj}\sigma_n$, then there is a unique $\zeta_{n+1}\in\sfN_{n+1}\clC$
such that $d_{n+1i}\zeta_{n+1}=\sigma_n$, $s_{n+1j+1}\zeta_{n+1}=\omega_{n+2}$. Write
\begin{align}
\sigma_n=(g_1,\ldots,g_n), \qquad \omega_{n+2}=(h_1,\ldots,h_{n+2})
\label{prop1/6}
\end{align}
in terms of morphisms $g_k$, $h_l$ of $\clC$. The condition $d_{n+2i}\omega_{n+2}=s_{nj}\sigma_n$
implies that
\begin{multline}
h_1=g_1,\ldots,h_{i-1}=g_{i-1}, h_{i+1}\circ h_i=g_i, h_{i+2}=g_{i+1},
\label{prop1/7}
\\
\ldots,h_{j+1}=g_j,h_{j+2}=\id,h_{j+3}=g_{j+1},\ldots,h_{n+2}=g_n.
\end{multline}
Using relations \ceqref{prop1/7}, it is straightforward to check that the morphism string
\begin{align}
\zeta_{n+1}=(h_1,\ldots,h_{i+1},g_{i+1},\ldots,g_n)=(h_1,\ldots,h_{j+1},g_{j+1},\ldots,g_n)
\label{prop1/8}
\end{align}
is composable and so that $\in\sfN_{n+1}\clC$ and that $d_{n+1i}\zeta_{n+1}=\sigma_n$, $s_{n+1j+1}\zeta_{n+1}=\omega_{n+2}$.
The uniqueness of $\zeta_{n+1}$ is guaranteed by the injectivity of $s_{n+1j+1}$. 

From \ceqref{qusmplx18}, to demonstrate \ceqref{qusmplx23} it suffices to prove that
for $0\leq i,j\leq n$, $i+1< j$, if $\sigma_n\in\sfN_n\clC$, $\omega_{n-2}\in\sfN_{n-2}\clC$
and $s_{n-2i}\omega_{n-2}=d_{nj}\sigma_n$, then there is a unique $\zeta_{n-1}\in\sfN_{n-1}\clC$
such that $s_{n-1i}\zeta_{n-1}=\sigma_n$, $d_{n-1j-1}\zeta_{n-1}=\omega_{n-2}$. Write
\begin{align}
\sigma_n=(g_1,\ldots,g_n), \qquad \omega_{n-2}=(h_1,\ldots,h_{n-2})
\label{prop1/9}
\end{align}
in terms of morphisms $g_k$, $h_l$ of $\clC$. The condition $s_{n-2i}\omega_{n-2}=d_{nj}\sigma_n$ 
implies that
\begin{multline}
h_1=g_1,\ldots,h_i=g_i, g_{i+1}=\id, h_{i+1}=g_{i+2},\ldots,
\label{prop1/10}
\\
h_{j-2}=g_{j-1},h_{j-1}=g_{j+1}\circ g_j, h_j=g_{j+2},\ldots,h_{n-2}=g_n.
\end{multline}
Using relations \ceqref{prop1/10}, one checks readily that the morphism sequence
\begin{align}
\zeta_{n-1}=(h_1,\ldots,h_i,g_{i+2},\ldots,g_n)=(h_1,\ldots,h_{j-2},g_j,\ldots,g_n)
\label{prop1/11}
\end{align}
is composable so that $\in\sfN_{n-1}\clC$ and that $s_{n-1i}\zeta_{n-1}=\sigma_n$, $d_{n-1j-1}\zeta_{n-1}=\omega_{n-2}$. 
The uniqueness of $\zeta_{n-1}$ is guaranteed by the injectivity of $s_{n-1i}$.

By virtue of \ceqref{qusmplx16}, to show \ceqref{qusmplx21}, \ceqref{qusmplx24} it is sufficient to show
that for $0\leq i,j\leq n$, $i\leq j$, if $\sigma_n,\omega_n\in \sfN_n\clC$ and $d_{ni}\omega_n=d_{nj}\sigma_n$,
then there is a unique $\zeta_{n+1}\in\sfN_{n+1}\clC$ such that $d_{n+1i}\zeta_{n+1}=\sigma_n$,
$d_{n+1j+1}\zeta_{n+1}=\omega_n$. Let
\begin{align}
\sigma_n=(g_1,\ldots,g_n), \qquad \omega_n=(h_1,\ldots,h_n)
\label{prop1/1}
\end{align}
in terms of morphisms $g_k$, $h_l$ of $\clC$. The condition $d_{ni}\omega_n=d_{nj}\sigma_n$
implies that
\begin{multline}
h_1=g_1,\ldots,h_{i-1}=g_{i-1}, h_{i+1}\circ h_i=g_i, h_{i+2}=g_{i+1},\ldots, 
\label{prop1/2}
\\
h_j=g_{j-1},h_{j+1}=g_{j+1}\circ g_j,h_{j+2}=g_{j+2},\ldots,h_n=g_n
\end{multline}
for $i<j$ and 
\begin{multline}
h_1=g_1,\ldots,h_{i-1}=g_{i-1}, h_{i+1}\circ h_i=g_{i+1}\circ g_i,h_{i+2}=g_{i+2},\ldots,h_n=g_n
\label{prop1/3}
\end{multline}
for $i=j$. 
When $i<j$, it is simple to check exploiting relations \ceqref{prop1/2} that
\begin{equation}
\zeta_{n+1}=(h_1,\ldots,h_{i+1},g_{i+1},\ldots,g_n)=(h_1,\ldots,h_j,g_j,\ldots,g_n)
\label{prop1/4}
\end{equation}
is composable and so $\zeta_{n+1}\in\sfN_{n+1}\clC$
and that $d_{n+1i}\zeta_{n+1}=\sigma_n$, $d_{n+1j+1}\zeta_{n+1}=\omega_n$ as claimed. Letting
$\zeta_{n+1}=(z_1,\ldots,z_{n+1})$ for generic morphisms $z_k$
and imposing that $d_{n+1i}\zeta_{n+1}=\sigma_n$, $d_{n+1j+1}\zeta_{n+1}=\omega_n$ we find further that
$\zeta_{n+1}$ is precisely of the form \ceqref{prop1/4}, showing the uniqueness of $\zeta_{n+1}$.
When $i=j$, provided $\clC$ is a groupoid, one similarly verifies from \ceqref{prop1/3} that,
setting $f_{i+1}=g_i\circ h_i{}^{-1}=g_{i+1}{}^{-1}\circ h_{i+1}$ and 
\begin{align}
&\zeta_{n+1}=(h_1,\ldots,h_i,f_{i+1},g_{i+1},\ldots,g_n), 
\label{prop1/5}
\end{align}
$\zeta_{n+1}\in\sfN_{n+1}\clC$ 
and that $d_{n+1i}\zeta_{n+1}=\sigma_n$, $d_{n+1j+1}\zeta_{n+1}=\omega_n$ again. 
When $i=0$, only the second expression of $f_{i+1}$ holds and when $i=n$ only the first.
The equality of the two expressions of $f_{i+1}$ for $0<i<n$ is guarantied by the $i$th relation \ceqref{prop1/3}.
Writing
$\zeta_{n+1}=(z_1,\ldots,z_{n+1})$ for generic morphisms $z_k$
and imposing that $d_{n+1i}\zeta_{n+1}=\sigma_n$, $d_{n+1j+1}\zeta_{n+1}=\omega_n$ we find further that
$\zeta_{n+1}$ is precisely of the form \ceqref{prop1/5}, showing once more the uniqueness of $\zeta_{n+1}$.

We suppose next that the relevant simplicial set $\sfX$
considered is the simplicial set of an ordered finite simplicial complex 
$\clS$, $\sfX=\sfK\clS$ (cf. subsect. \cref{subsec:smplsets}). We have to verify whether conditions
\ceqref{qusmplx22}--\ceqref{qusmplx23} are fulfilled.

Below, analogously to what done earlier, we shall adopt the convention that 
any sequence $v_r,v_{r+1}\ldots,v_{s-1},v_s$ of vertices
or identities between vertices of $\clS$ displayed in the analysis is tacitly
assumed to be absent whenever $r>s$.

Because of \ceqref{qusmplx17} again, verifying \ceqref{qusmplx22} reduces just to demonstrating that
for $0\leq i,j\leq n$, $i\leq j$, if $\sigma_n\in\sfK_n\clS$, $\omega_{n+2}\in\sfK_{n+2}\clS$
and $d_{n+2i}\omega_{n+2}=s_{nj}\sigma_n$, then there is a unique $\zeta_{n+1}\in\sfK_{n+1}\clS$
such that $d_{n+1i}\zeta_{n+1}=\sigma_n$, $s_{n+1j+1}\zeta_{n+1}=\omega_{n+2}$.
Write 
\begin{align}
\sigma_n=(a_0,\ldots,a_n), \qquad \omega_{n+2}=(b_0,\ldots,b_{n+2})
\label{prop1/15}
\end{align}
in terms of vertices $a_k$, $b_l$ of $\clS$. The condition $d_{n+2i}\omega_{n+2}=s_{nj}\sigma_n$
implies that
\begin{equation}
b_0=a_0,\ldots,b_{i-1}=a_{i-1}, b_{i+1}=a_i,\ldots, b_{j+1}=a_j,b_{j+2}=a_j,\ldots,b_{n+2}=a_n.
\label{prop1/16}
\end{equation}
Using relations \ceqref{prop1/16}, it is straightforward to check that
\begin{align}
\zeta_{n+1}=(b_0,\ldots,b_i,a_i,\ldots,a_n)=(b_0,\ldots,b_{j+1},a_{j+1},\ldots,a_n) 
\label{prop1/17}
\end{align}
does the job. 
Since $\zeta_{n+1}=(b_0,\ldots,b_{j+1},b_{j+3},\ldots,b_{n+2})$, it holds that $\zeta_{n+1}\in\sfK_{n+1}\clS$.
Further, $d_{n+1i}\zeta_{n+1}=\sigma_n$, $s_{n+1j+1}\zeta_{n+1}=\omega_{n+2}$ as is readily checked.
The uniqueness of $\zeta_{n+1}$ is guaranteed by the injectivity of $s_{n+1j+1}$. 

By \ceqref{qusmplx18}, showing \ceqref{qusmplx23} amounts to proving that
for $0\leq i,j\leq n$, $i+1< j$, if $\sigma_n\in\sfK_n\clS$, $\omega_{n-2}\in\sfK_{n-2}\clS$
and $s_{n-2i}\omega_{n-2}=d_{nj}\sigma_n$, then there exists a unique $\zeta_{n-1}\in\sfK_{n-1}\clS$
such that $s_{n-1i}\zeta_{n-1}=\sigma_n$, $d_{n-1j-1}\zeta_{n-1}=\omega_{n-2}$.
Write 
\begin{align}
\sigma_n=(a_0,\ldots,a_n), \qquad \omega_{n-2}=(b_0,\ldots,b_{n-2})
\label{prop1/12}
\end{align}
in terms of vertices $a_k$, $b_l$ of $\clS$. The condition $s_{n-2i}\omega_{n-2}=d_{nj}\sigma_n$ 
implies that
\begin{equation}
b_0=a_0,\ldots,b_i=a_i, b_i=a_{i+1},\ldots,b_{j-2}=a_{j-1},b_{j-1}=a_{j+1},\ldots,b_{n-2}=a_n.
\label{prop1/13}
\end{equation}
Owing to the relations \ceqref{prop1/13}, it is straightforward to check that
\begin{align}
\zeta_{n-1}=(b_0,\ldots,b_i,a_{i+2},\ldots,a_n)=(b_0,\ldots,b_{j-2},a_j,\ldots,a_n) 
\label{prop1/14}
\end{align}
has the required properties. As $\zeta_{n-1}=(a_0,\ldots,a_i,a_{i+2},\ldots,a_n)$, $\zeta_{n-1}\in\sfK_{n-1}\clS$.
Fur\-ther, the identities $s_{n-1i}\zeta_{n-1}=\sigma_n$, $d_{n-1j-1}\zeta_{n-1}=\omega_{n-2}$ hold. 
The uniqueness of $\zeta_{n-1}$ is guaranteed by the injectivity of $s_{n-1i}$.

Though the above analysis completes the proof of the perfectness proposition for the simplicial
set $\sfK\clS$, it may be interesting to get an intuition of the reason why $\sfK\clS$ fails to be
quasi perfect or perfect. By \ceqref{qusmplx16},
showing \ceqref{qusmplx21}, \ceqref{qusmplx24} would require proving that
for $0\leq i,j\leq n$, $i\leq j$, if $\sigma_n,\omega_n\in \sfK_n\clC$ and $d_{ni}\omega_n=d_{nj}\sigma_n$,
then there is a unique $\zeta_{n+1}\in\sfK_{n+1}\clS$ such that $d_{n+1i}\zeta_{n+1}=\sigma_n$,
$d_{n+1j+1}\zeta_{n+1}=\omega_n$. Let 
\begin{align}
\sigma_n=(a_0,\ldots,a_n), \qquad \omega_n=(b_0,\ldots,b_n).
\label{prop1/18}
\end{align}
Then, considerations analogous to those carried out above would yield for $i<j$
\begin{align}
\zeta_{n+1}=(b_0,\ldots,b_i,a_i,\ldots,a_n)=(b_0,\ldots,b_j,a_j,\ldots,a_n).
\label{prop1/19}
\end{align}
This simplex, however, cannot be written in terms of either the $a_k$ or the $b_k$ only, so there
is no guarantee that $\zeta_{n+1}\in\sfK_{n+1}\clS$ by lack of increasing vertex ordering.


\subsection{\textcolor{blue}{\sffamily The simplicial Hilbert Hodge theorem  
}}\label{app:hodge}

The proof of the simplicial Hilbert Hodge theorem \cref{prop:hodge} follows immediately from
the following proposition.
~
\begin{prop} \label{prop:finitehodge}
(Finite dimensional Hodge theorem) Let $\scK_n$, $n\geq 0$, a sequence of finite dimensional Hilbert spaces.
Let further 
\begin{align}
&\xymatrix@C=2.5pc
{\cdots
&\ar[l]_{Q_2}\scK_2&
\ar[l]_{Q_1}
\scK_1&\ar[l]_{Q_0}\scK_0}
\label{fdht1}
\\
\intertext{be a Hilbert cochain complex and}
&\xymatrix@C=2.5pc
{\cdots
\ar[r]^{Q_2{}^+}
&\scK_2
\ar[r]^{Q_1{}^+}
&\scK_1\ar[r]^{Q_0{}^+}&\scK_0}
\label{fdht2}
\end{align}
be its adjoint chain complex. For $n\geq 0$, denote by $\rmH^n(\scK,Q)=\ker Q_n/\ran Q_{n-1}$ 
and $\rmH_n(\scK,Q{}^+)=\ker Q_{n-1}{}^+/\ran Q_n{}^+$ their (co)homology spaces, where by convention 
$\ran Q_{-1}=0$ and $\ker Q_{-1}{}^+=\scK_0$ as usual. Then, 
\begin{align}
&\rmH^n(\scK,Q)\simeq\ker H_n,  
\label{fdht3}
\\
&\rmH_n(\scK,Q{}^+)\simeq\ker H_n 
\label{fdht4}
\end{align}
for any $n\geq 0$, where $H_n:\scK_n\rightarrow\scK_n$ are the Laplacians 
\begin{align}
H_n=Q_n{}^+Q_n+Q_{n-1}Q_{n-1}{}^+,
\label{fdht5}
\end{align}
the second tern being missing when $n=0$.
\end{prop}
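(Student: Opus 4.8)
The plan is to prove the finite dimensional Hodge theorem, Proposition \ref{prop:finitehodge}, since Theorem \ref{prop:hodge} follows from it by specializing $\scK_n=\scH_n$ and $Q_n=Q_{Dn}$ (with the index shift matching the face complex). The heart of the matter is the orthogonal decomposition of each $\scK_n$ induced by the adjointable maps, so I would first establish that. The key observation is that since the $\scK_n$ are finite dimensional Hilbert spaces, every operator admits an adjoint and one has the standard relations $\scK_n=\ran Q_{n-1}\oplus\ker Q_{n-1}{}^+$ and $\scK_n=\ran Q_n{}^+\oplus\ker Q_n$, these being orthogonal direct sums because $(\ran A)^\perp=\ker A^+$ for any linear map $A$ between finite dimensional Hilbert spaces.

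The central step is to prove the threefold orthogonal decomposition
\begin{equation}
\scK_n=\ran Q_{n-1}\oplus\ker H_n\oplus\ran Q_n{}^+,
\label{hodgedecomp}
\end{equation}
together with the identification $\ker H_n=\ker Q_n\cap\ker Q_{n-1}{}^+$. First I would show $\ker H_n=\ker Q_n\cap\ker Q_{n-1}{}^+$: the inclusion $\supseteq$ is immediate from \ceqref{fdht5}, while for $\subseteq$ one computes, for $\psi\in\ker H_n$, that $0=\langle\psi|H_n\psi\rangle=\|Q_n\psi\|^2+\|Q_{n-1}{}^+\psi\|^2$, forcing both terms to vanish. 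Next, using the homological relation $Q_nQ_{n-1}=0$ (valid for the abstract complex \ceqref{fdht1} and its instance the face complex via \ceqref{smplxhomol3}) and its adjoint, one checks that the three subspaces $\ran Q_{n-1}$, $\ker H_n$, $\ran Q_n{}^+$ are mutually orthogonal. Orthogonality of $\ran Q_{n-1}$ and $\ran Q_n{}^+$ uses $Q_{n-1}{}^+Q_n{}^+=(Q_nQ_{n-1})^+=0$; orthogonality of $\ker H_n$ to each range follows from the characterization just proved. Finally a dimension count, or equivalently the two standard two-term decompositions combined, shows these three subspaces span all of $\scK_n$.

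From \eqref{hodgedecomp} the isomorphisms \ceqref{fdht3} and \ceqref{fdht4} are essentially formal. For \ceqref{fdht3}, observe $\ker Q_n=\ran Q_{n-1}\oplus\ker H_n$ since $\ker Q_n$ is the orthogonal complement of $\ran Q_n{}^+$ in $\scK_n$; quotienting by $\ran Q_{n-1}$ then yields $\rmH^n(\scK,Q)=\ker Q_n/\ran Q_{n-1}\simeq\ker H_n$, the isomorphism being realized concretely by the orthogonal projection of each cohomology class onto its harmonic representative in $\ker H_n$. The argument for \ceqref{fdht4} is the mirror image: $\ker Q_{n-1}{}^+=\ker H_n\oplus\ran Q_n{}^+$, and quotienting by $\ran Q_n{}^+$ gives $\rmH_n(\scK,Q{}^+)\simeq\ker H_n$. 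The boundary conventions $\ran Q_{-1}=0$ and $\ker Q_{-1}{}^+=\scK_0$ make the $n=0$ case consistent, with the second term of $H_0$ absent as stipulated.

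The main obstacle is not conceptual but one of care with indexing and conventions: the abstract statement in Proposition \ref{prop:finitehodge} is written for a cochain complex with coboundary $Q_n:\scK_n\to\scK_{n+1}$ reading right-to-left in \ceqref{fdht1}, whereas the face complex $(\scH,Q_D)$ of Theorem \ref{prop:hodge} has boundary $Q_{Dn}:\scH_n\to\scH_{n-1}$ lowering degree. I would therefore take care to state the decomposition \eqref{hodgedecomp} in the direction-neutral form above and verify explicitly that the substitution aligning $Q_n$ with $Q_{Dn+1}$ reproduces the Laplacian $H_{DDn}$ of \ceqref{smplxlapl1}, so that $\ker H_n$ becomes $\ker H_{DDn}$ and the chain of isomorphisms $\rmH_{Dn}(\scH)\simeq\rmH_D{}^n(\scH^+)\simeq\ker H_{DDn}$ of \ceqref{smplxhomol16} emerges. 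Everything else is linear algebra over a finite dimensional Hilbert space and requires no infinite dimensional analysis.
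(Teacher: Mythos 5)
Your proof is correct, and it reaches the conclusion by a technically different route from the one the paper takes in app. \cref{app:hodge}. You make the threefold orthogonal decomposition $\scK_n=\ran Q_{n-1}\oplus\ker H_n\oplus\ran Q_n{}^+$ the key lemma, obtaining $\ker H_n=\ker Q_n\cap\ker Q_{n-1}{}^+$ from the positivity identity $\exval{\psi}{H_n}{\psi}=\|Q_n\psi\|^2+\|Q_{n-1}{}^+\psi\|^2$ and then reading off $\ker Q_n=\ran Q_{n-1}\oplus\ker H_n$, so that the isomorphism is simply the passage to the harmonic representative. The paper never states that decomposition explicitly: it instead introduces the generalized inverse $H_n{}^{-1}$ together with the projector $P_n$ onto $\ker H_n$, proves $Q_nP_n=P_{n+1}Q_n=0$ by essentially the same positivity argument applied to $P_nH_nP_n$, derives the intertwining relation $Q_nH_n{}^{-1}=H_{n+1}{}^{-1}Q_n$, and assembles these into the resolution of the identity $1_n=P_n+Q_n{}^+H_{n+1}{}^{-1}Q_n+Q_{n-1}H_{n-1}{}^{-1}Q_{n-1}{}^+$, from which injectivity and surjectivity of the explicit map $\mu_n\ket{\psi_n}=\ket{\psi_n}+\ran Q_{n-1}$ are checked directly. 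Your version is the shorter, more geometric textbook argument; the paper's version buys an explicit operator formula expressing the harmonic projector $P_n$ through $Q$, $Q^+$ and the Green's operator, which is the sort of identity that can be reused in the operator--algorithmic parts of the paper. Both arguments rest on the same finite dimensional input, namely that $\ker H_n$ is the simultaneous kernel of $Q_n$ and $Q_{n-1}{}^+$, and your handling of the $n=0$ conventions and of the dictionary between the abstract complex and the face complex $(\scH,Q_D)$ is consistent with the paper's.
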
  

We show the theorem. Let $n$ be fixed. We notice that $H_n$ is a Hermitian operator. So is then
any real valued function of $H_n$. Two operators of this kind will be relevant for the proof.
The first is the orthogonal projector $P_n$ onto the kernel of $H_n$. 
The second is the generalized inverse $H_n{}^{-1}$ of $H_n$. They satisfy the
following basic identities which will be employed repeatedly in the proof:
\begin{align}
&H_n{}^{-1}H_n=1_n-P_n,
\label{fdht6}
\\
&H_nP_n=0, \qquad H_n{}^{-1}P_n=0, 
\label{fdht7}
\end{align}
where $1_n=1_{\scK_n}$. 

The kernel of $H_n$ is \pagebreak contained in that of $Q_n$ and the range of $Q_n$ is
orthogonal to the kernel of $H_{n+1}$, since for $n\geq 0$
\begin{equation}
Q_nP_n=P_{n+1}Q_n=0.
\label{fdht9}
\end{equation}
Indeed, owing to the first relation \ceqref{fdht7}, 
\begin{align}
0&=P_nH_nP_n
\label{fdht10}
\\
&=P_n(Q_n{}^+Q_n+Q_{n-1}Q_{n-1}{}^+)P_n
\noindent
\\
&=(Q_nP_n)^+Q_nP_n+P_nQ_{n-1}(P_nQ_{n-1})^+.
\noindent
\end{align}
As $(Q_nP_n)^+Q_nP_n,\,P_nQ_{n-1}(P_nQ_{n-1})^+\geq 0$ as operators, we find that $Q_nP_n=0$, $P_nQ_{n-1}=0$,
the second relation being missing for $n=0$. 

From relation \ceqref{fdht5}, using that $Q_nQ_{n-1}=0$, it is immediate to verify that 
\begin{equation}
Q_nH_n=H_{n+1}Q_n
\label{fdht8}
\end{equation}  
for $n\geq 0$. From \ceqref{fdht8}, it follow that 
\begin{equation}
Q_nH_n{}^{-1}=H_{n+1}{}^{-1}Q_n.
\label{fdht11}
\end{equation}
Indeed, on account of \ceqref{fdht6}, \ceqref{fdht9}, 
\begin{align}
H_{n+1}{}^{-1}Q_n-Q_nH_n{}^{-1}&=H_{n+1}{}^{-1}Q_n(1_n-P_n)-(1_{n+1}-P_{n+1})Q_nH_n{}^{-1}
\label{fdht12}
\\
&=H_{n+1}{}^{-1}Q_nH_nH_n{}^{-1}-H_{n+1}{}^{-1}H_{n+1}Q_nH_n{}^{-1}
\nonumber
\\
&=H_{n+1}{}^{-1}(Q_nH_n-H_{n+1}Q_n)H_n{}^{-1}=0.
\nonumber
\end{align}

Using \ceqref{fdht11}, we can show that for $n\geq 0$
\begin{equation}
1_n=P_n+Q_n{}^+H_{n+1}{}^{-1}Q_n+Q_{n-1}H_{n-1}{}^{-1}Q_{n-1}{}^+,
\label{fdht13}
\end{equation}
where for $n=0$ the third term is missing. Indeed, 
\begin{align}
Q_n{}^+H_{n+1}{}^{-1}Q_n+Q_{n-1}H_{n-1}{}^{-1}Q_{n-1}{}^+
&=(Q_n{}^+Q_n+Q_{n-1}Q_{n-1}{}^+)H_n{}^{-1}
\label{fdht14}
\\
&=H_nH_n{}^{-1}=1_n-P_n, 
\nonumber
\end{align}
whence \ceqref{fdht13} follows. 

We have already observed that \pagebreak $\ker H_n\subseteq \ker Q_n$, 
since $Q_nP_n=0$ by \ceqref{fdht9}.
As a consequence, a linear map $\mu_n:\ker H_n\rightarrow\rmH^n(\scK,Q)$ is defined given by
\begin{equation}
\mu_n\ket{\psi_n}=\ket{\psi_n}+\ran Q_{n-1},\
\label{fdht15}
\end{equation}
with $\ket{\psi_n}\in\ker H_n$. 
We are going to show that $\mu_n$ is an isomorphism. 
We prove first that $\mu_n$ is injective. 
Suppose that $\ket{\psi_n},\ket{\phi_n}\in\ker H_n$ and $\ket{\psi_n}-\ket{\phi_n}=Q_{n-1}\ket{\chi_{n-1}}$
for some $\ket{\chi_{n-1}}\in\scK_{n-1}$, where $Q_{-1}\ket{\chi_{-1}}=0$ by convention. Then, by the second
relation \ceqref{fdht9},
\begin{equation}
\ket{\psi_n}-\ket{\phi_n}=P_n(\ket{\psi_n}-\ket{\phi_n})=P_nQ_{n-1}\ket{\chi_{n-1}}=0,
\label{fdht16}
\end{equation}
proving the stated injectivity of $\mu_n$. 
We demonstrate next that $\mu_n$ is a surjective map. If $Q_n\ket{\psi_n}=0$, then by \ceqref{fdht13}
\begin{equation}
\ket{\psi_n}=P_n\ket{\psi_n}+Q_{n-1}H_{n-1}{}^{-1}Q_{n-1}{}^+\ket{\psi_n}
\label{fdht17}
\end{equation}
and consequently 
\begin{equation}
\ket{\psi_n}+\ran Q_{n-1}=P_n\ket{\psi_n}+\ran Q_{n-1}=\mu_nP_n\ket{\psi_n},
\label{fdht18}
\end{equation}
showing the stated surjectivity of $\mu_n$. So, $\mu_n$ is an isomorphism as claimed. The isomorphism
\ceqref{fdht3} follows.

The isomorphism \ceqref{fdht4} is shown similarly. We have that 
$\ker H_n\subseteq \ker Q_{n-1}{}^+$, 
by the relation $Q_{n-1}{}^+P_n=0$ from \ceqref{fdht9} if $n>0$ or trivially if $n=0$.
As a consequence, a linear map $\nu_n:\ker H_n\rightarrow\rmH_n(\scK,Q{}^+)$
\begin{equation}
\nu_n\ket{\psi_n}=\ket{\psi_n}+\ran Q_n{}^+
\label{fdht19}
\end{equation}
with $\ket{\psi_n}\in\ker H_n$, is defined. $\nu_n$ can be shown to be an isomorphism
by verifying its injectivity and surjectivity as done earlier for $\mu_n$.


\subsection{\textcolor{blue}{\sffamily Triviality of the kernel of the
    degeneracy simplicial Hodge Laplacian 
}}\label{app:hstriv}

Combining \ceqref{susysmplx12}, \ceqref{qusmplx37} and \ceqref{qusmplx33}
\begin{equation}
H_{SSn}=\mycom{{}_\sss}{{}_{\sigma_n\in \sfX_n}}\ket{\sigma_n}\Big(n+1
-\mycom{{}_\sss}{{}_{0\leq i\leq n-1}}|\sfS_{n-1i}(\sigma_n)|\Big)\bra{\sigma_n}.
\label{}
\end{equation}
Since $|\sfS_{n-1i}(\sigma_n)|\leq 1$, we have $H_{SSn}\geq 1_n$. Consequently,
$\ker H_{SSn}=0$ as required.


\subsection{\textcolor{blue}{\sffamily The normalized Hilbert homology theorem
}}\label{app:deginv}

We begin with showing identity \ceqref{normal2} providing an expression of the orthogonal projector $\varPi_n$ 
on the degenerate $n$-simplex space ${}^s\!\scH_n\subseteq\scH_n$ shown in eq. \ceqref{normal0}.
Recall that the orthogonal projectors on the ranges $\ran S_{n-1i}$ of the degeneracy operators 
$S_{n-1i}$ are the operators $\varPi_{ni}$ given in \ceqref{qusmplx32}. Recall also that the $\varPi_{ni}$
commute pairwise. Now, from \ceqref{normal0}, the space ${}^s\!\scH_n$ can be expressed as 
\begin{equation}
{}^s\!\scH_n=\Big(\nnn_{i=0}^{n-1}\ran S_{n-1i}{}^\perp\Big)^\perp.
\label{normal-1}
\end{equation} 
Now, the projectors onto the subspaces $\ran S_{n-1i}{}^\perp$ are the operators $1_n-\varPi_{ni}$.
By the commutativity of the $\varPi_{ni}$, the projector onto the subspace $\bigcap_{i=0}^{n-1}\ran S_{n-1i}{}^\perp$
is then $\prod_{i=0}^{n-1}(1_n-\varPi_{ni})$. \ceqref{normal-1} then implies that the projector $\varPi_n$ is
given by \ceqref{normal2} as claimed.

Next, we demonstrate the homological relation 
\begin{equation}
{}^cQ_{Dn-1}{}^cQ_{Dn}=0.
\label{normal5}
\end{equation}
To that purpose, the relations 
\begin{equation}
(1_{n-1}-\varPi_{n-1})Q_{Dn}\varPi_n=0
\label{normal3}
\end{equation}
valid for $n\geq 1$ are required. The proof of \ceqref{normal3} runs as follows.
Let $0\leq k\leq n$. From \ceqref{qusmplx32}, using relations \ceqref{qusmplx8}--\ceqref{qusmplx10} yields 
\begin{align}
Q_{Dn}\varPi_{nk}&=\mycom{{}_\sss}{{}_{0\leq i\leq n}}(-1)^iD_{ni}S_{n-1k}S_{n-1k}{}^+
\label{norma14}
\\
&=\mycom{{}_\sss}{{}_{0\leq i<k}}(-1)^iS_{n-2k-1}D_{n-1i}S_{n-1k}{}^+
+\mycom{{}_\sss}{{}_{k+1< i\leq n}}(-1)^iS_{n-2k}D_{n-1i-1}S_{n-1k}{}^+,
\nonumber
\end{align}
where the first and second term are $0$ when respectively $k=0$ and $k=n$.
This expressions shows that that $\ran Q_{Dn}\varPi_{nk}\subseteq\ran S_{n-2k-1}+\ran S_{n-2k}$.
Consequently, 
\begin{equation}
(1_{n-1}-\varPi_{n-1k-1})(1_{n-1}-\varPi_{n-1k})Q_{Dn}\varPi_{nk}=0.
\label{normal16}
\end{equation}
By \ceqref{normal2}, we can replace the first two factors by $1_{n-1}-\varPi_{n-1}$.
By \ceqref{normal2} again, $\varPi_n$ is a polynomial in the $\varPi_{ni}$
with vanishing degree $0$ term. Hence, we can replace the last factor by $\varPi_{n-1}$.
Relation \ceqref{normal3} follows. \pagebreak Using this latter, we can readily prove \ceqref{normal5}:
from \ceqref{normal4}, 
\begin{align}
{}^cQ_{Dn-1}&{}^cQ_{Dn}
=(1_{n-2}-\varPi_{n-2})Q_{Dn-1}(1_{n-1}-\varPi_{n-1})Q_{Dn}\big|_{{}^c\!\scH_n}
\label{}
\\  
&=(1_{n-2}-\varPi_{n-2})Q_{Dn-1}Q_{Dn}\big|_{{}^c\!\scH_n}
-(1_{n-2}-\varPi_{n-2})Q_{Dn-1}\varPi_{n-1}Q_{Dn}\big|_{{}^c\!\scH_n}=0.
\nonumber
\end{align}

Next, we provide the details of the proof of prop. \cref{prop:hilbnorm/1}.
The design of the proof is already outlined in subsect \cref{subsec:normal} 
and rests on the construction of a chain equivalence
of the abstract and concrete Hilbert complexes $(\ol{\scH},\ol{Q}_D)$, $({}^c\!\scH, {}^cQ_D)$
(see subsect. \cref{subsec:normal} for their definition).
The chain equivalence given by a sequence of chain operators
$I_n:\ol{\scH}_n\rightarrow{}^c\!\scH_n$,
$J_n:{}^c\!\scH_n\rightarrow\ol{\scH}_n$, $n\geq 0$, with 
$J_nI_n$, $I_nJ_n$ chain homotopic to $\ol{1}_n$, ${}^c1_n$ respectively (cf. eqs. \ceqref{normal8/9}
and \ceqref{normal10/11}).
$I_n$ is the operator from $\ol{\scH}_n$ to ${}^c\!\scH_n$
induced by the orthogonal projector $1_n-\varPi_n$ by virtue of the fact that  ${}^s\!\scH_n=\ker(1_n-\varPi_n)$. 
$J_n$ is the canonical projection of ${}^c\!\scH_n$ onto $\ol{\scH}_n$.

We show first that the chain operator relations \ceqref{normal8/9} are
fulfilled. By \ceqref{normal4}, for $\ket{\psi_n}\in\scH_n$, we have 
\begin{align}
I_{n-1}\ol{Q}_{Dn}(\ket{\psi_n}+{}^s\!\scH_n)&=I_{n-1}(Q_{Dn}\ket{\psi_n}+{}^s\!\scH_{n-1})
\label{}
\\  
&=(1_{n-1}-\varPi_{n-1})Q_{Dn}\ket{\psi_n}
\nonumber
\\
&=(1_{n-1}-\varPi_{n-1})Q_{Dn}(1_n-\varPi_n)\ket{\psi_n}
\nonumber
\\
&={}^cQ_{Dn}(1_n-\varPi_n)\ket{\psi_n}
\nonumber
\\
&={}^cQ_{Dn}I_n(\ket{\psi_n}+{}^s\!\scH_n),
\nonumber
\end{align}
showing \ceqref{normal8}. Similarly, by \ceqref{normal4} again, for $\ket{\psi_n}\in{}^c\!\scH_n$
\begin{align}
J_{n-1}{}^cQ_{Dn}\ket{\psi_n}&=J_{n-1}(1_{n-1}-\varPi_{n-1})Q_{Dn}\ket{\psi_n}
\label{}
\\  
&=(1_{n-1}-\varPi_{n-1})Q_{Dn}\ket{\psi_n}+{}^s\!\scH_{n-1}
\nonumber
\\
&=Q_{Dn}\ket{\psi_n}+{}^s\!\scH_{n-1}
\nonumber
\\
&=\ol{Q}_{Dn}(\ket{\psi_n}+{}^s\!\scH_n),
\nonumber
\\
&=\ol{Q}_{Dn}J_n\ket{\psi_n},
\nonumber
\end{align}
proving  \ceqref{normal9}.

We show next that the chain operator homotopy relations \ceqref{normal10/11} hold too. The proof is 
simple. The maps $I_n$, $J_n$ turn out to be reciprocally inverse. So, \ceqref{normal10}, \ceqref{normal11}
hold trivially with $\ol{W}_k=0$, ${}^cW_k=0$. The isomorphism \ceqref{normal7} follows.

\end{appendix}

\vfil\eject

\vspace{.5cm}

\noindent
\markright{\textcolor{blue}{\sffamily Acknowledgements}}

\noindent
\textcolor{blue}{\sffamily Acknowledgements.} 
The author acknowledges financial support from INFN Research Agency
under the provisions of the agreement between Alma Mater Studiorum University of Bologna and INFN. 
He is grateful to the INFN Galileo Galilei Institute in Florence, where part of this work was done,
for hospitality and support. He further thanks the Staff of the Center for Quantum and Topological Systems
at New York University Abu Dhabi, and in particular Hisham Sati, for inviting him as speaker of their Quantum Colloquium
to present this work. Finally, he thanks Jim Stasheff and Urs Schreiber for correspondence.

Most of the analytic calculations presented in this paper
have carried out employing the WolframAlpha computational platform.

\vspace{.5cm}

\noindent
\textcolor{blue}{\sffamily Conflict of interest statement.}
The author declares that the results of the current study do not involve any conflict of interest.

\vspace{.5cm}

\noindent
\textcolor{blue}{\sffamily Data availability statement.}
The author declares that no datasets were generated or analysed during the current study.

\vfill\eject

\noindent
\textcolor{blue}{\bf\sffamily References}


\end{document}